\newcommand{\Var}{\text{\rm Var}}
\newcommand{\dist}{\text{\rm dist}}
\newcommand{\cd}{\!\cdot\!}
\newcommand{\HL}[1]{}
\definecolor{Green}{rgb}{0.13, 0.65, 0.3}
\DeclareMathOperator*{\argmin}{argmin} 
\newcommand{\E}{\mathbb{E}}
\newtheorem{theorem}{Theorem}
\newtheorem{assumption}{Assumption}
\newtheorem{lemma}[theorem]{Lemma}
\newtheorem{corollary}[theorem]{Corollary}
\newtheorem{observation}[theorem]{Observation}
\newcommand{\nonl}{\renewcommand{\nl}{\let\nl}}
\newcommand{\savehyperref}[2]{\texorpdfstring{\hyperref[#1]{#2}}{#2}}
\newcommand{\junk}[1]{}
\newcommand{\ngain}{\mu}
\title{Sample Complexity of Linear Regression Models for Opinion Formation in Networks}
\author {
    Haolin Liu\textsuperscript{\rm 1}\thanks{Alphabetical order},
    Rajmohan Rajaraman \textsuperscript{\rm 2},
    Ravi Sundaram\textsuperscript{\rm 2}, \\
    Anil Vullikanti\textsuperscript{\rm 1},
    Omer Wasim\textsuperscript{\rm 2},
    Haifeng Xu\textsuperscript{\rm 3}
}
\begin{document}

\maketitle

\begin{abstract}
Consider public health officials aiming to spread awareness about a new vaccine in a community interconnected by a social network. How can they distribute information with minimal resources, so as to avoid polarization and ensure community-wide convergence of opinion? 
To tackle such challenges, we initiate the study of sample complexity of opinion formation in networks. Our framework is built on the recognized \emph{opinion formation game}, where we regard each agent's opinion as a data-derived model, unlike previous works that treat opinions as data-independent scalars. The opinion model for every agent is initially learned from its local samples and evolves game-theoretically as all agents communicate with neighbors and revise their models towards an equilibrium. 
Our focus is on the sample complexity needed to ensure that the opinions converge to an equilibrium such that every agent's final model has low generalization error. 

Our paper has two main technical results.  First, we present a novel
polynomial time optimization framework to quantify the \emph{total
  sample complexity} for arbitrary networks, when the underlying
learning problem is (generalized) linear regression.  Second, we leverage
this optimization to study the \emph{network gain} which measures the improvement
of sample complexity when learning over a network compared to that in
isolation.  Towards this end, we derive network gain bounds for
various network classes including cliques, star graphs, and
random regular graphs.  Additionally, our framework provides a method to study sample distribution within the
network, suggesting that it is sufficient to allocate samples
inversely to the degree.  Empirical results on both synthetic and real-world networks strongly support our theoretical findings.


\end{abstract}

%

\section{Introduction}

In today's interconnected world, rapid dissemination of information plays a pivotal role in shaping public understanding and behavior, especially in areas of critical importance like public health. 
People learn and form opinions through personal investigations and interactions with neighbors. 
The heterogeneity of social networks often means that not every individual requires the same amount of information to form an informed opinion. Some may be heavily influenced by their peers, while others may need more direct information. 
This differential requirement presents both a challenge and an opportunity: How can one distribute information to ensure the entire community is well-informed, without unnecessary redundancies or gaps in knowledge dissemination? 

To answer this question, the first step is to understand the dynamics of opinion formation within social networks. 
Starting from the seminal work of \citet{degroot1974reaching}, one of the predominant models assumes that individuals shape their own beliefs by consistently observing and averaging the opinions of their network peers  \citep{degroot1974reaching, friedkin1990social, demarzo2003persuasion, golub2008homophily, bindel2015bad, haddadan2024optimally}. 
We will adopt this seminal framework to model the opinion formation process among agents. However, these works always encapsulate opinions as single numeric values. While this provides a foundational
understanding, it cannot capture the nuanced ways in which individuals process and integrate new information into their existing belief systems. 
To address this issue, our formulation generalizes the opinions to be a model parameter learned using a machine learning-based approach, as explored in \cite{haghtalab2021belief} to study belief polarization. Such formulation aligns with the human decision-making procedure proposed by \cite{simon2013administrative} based on prior experiences and available data.


Specifically, we adopt the game theoretical formulation of this opinion dynamic process introduced by \cite{bindel2015bad}. Within this framework, an agent's equilibrium opinion emerges as a weighted average of everyone's initial opinions on the network (which are learned form their locally available data), where the weights are unique and determined by the network structure. 
It is not guaranteed, however, that a node's equilibrium opinion would have a small generalization error.
Motivated by the work of~\cite{blum2021one}, we study the conditions under this actually happens (i.e., each node has a model with a small generalization error): specifically, how should samples be distributed across the network such that at the
equilibrium of the opinion formation game, everyone has a
model that is close enough to the ground-truth, and the total number of samples is minimized;
we are also interested in understanding the network gain, i.e., how much does the collaboration on a network help the sample complexity.

We note that when opinions are treated as data-driven model parameters, the equilibrium models share the same mathematical structure as the \textit{fine-grained federated learning} model introduced in \cite{donahue2021model}. However, their study assumes fixed sample sizes without considering networks and focuses on collaboration incentives. Our work focuses on the optimal allocation of samples across networks to ensure that the generalization error of the equilibrium model remains low.

\subsection{Problem formulation}
\label{sec:intro problem}
We consider a fixed set $V = \{1, 2, \cdots, n\}$ of agents (also referred to as nodes) connected by a network $G = (V, E)$.
Let $N(i)$ denote the set of neighbors, and $d_i$ the degree of agent $i$.
We assume every agent $i$
in a given network $G$ has a dataset $S_i = \{(x_{i}^j,
y_{i}^j)\}_{j \in [m_i]}$ allocated by a market-designer, where $m_i$ is the number of samples at
$i$, and $x_i^j \in \mathbb{R}^k, \,\, \forall j \in [m_i]$, and are
independently drawn from a fixed distribution $D$ with dimension $k$. Each agent $i$ learns an
initial model $\Bar{\theta}_i$ (known as internal opinion);
$\Bar{\theta}_i$ need not be a fixed number (as
in~\cite{bindel2015bad}), but can be learned through a machine
learning approach, similar as~\cite{haghtalab2021belief}.
We assume that these datasets are
allocated by a system designer and that $m_i \ge k$ (i.e.,
$S_i$ has at least $k$ samples), which ensures everyone has enough
basic knowledge to form their own opinion before learning over the
social network. Our full paper \cite{liu2023sample} includes a table of all notations.
 
Agent $i$'s goal
is to find a model $\theta_i$ that is close to its internal opinion,
as well as to that of its neighbors, denoted by $N(i)$; i.e. compute
$\theta_i$ that minimizes the loss function $\|\theta_i -
\bar{\theta}_i\|^2 + \sum_{j\in N(i)} v_{ij} \|\theta_i -
\theta_j\|^2$, where $v_{ij} \ge 0$ measures the influence of a
neighboring agent $j$ to agent $i$.  We refer to $\pmb{v}$, where
$[\pmb{v}]_{ij} = v_{ij}$, as the influence factor matrix.
In Lemma~\ref{lemma:potential}, we show that the unique Nash
equilibrium of this game is $(\theta^{eq}_1, \cdots, \theta^{eq}_n)^T
= W^{-1}(\Bar{\theta}_1, \cdots, \Bar{\theta}_n)^T$ where $W =
\mathcal{L} + I$ and $\mathcal{L}$ is the weighted Laplacian matrix of
$G$.


We define the \textbf{\emph{total sample complexity}}, $TSC(G,
\pmb{v}, k, \epsilon)$, as the minimum number of total samples, $\sum_i
m_i$, subject to the constraint $m_i \ge k$, so that the expected
square loss of $\theta_i^{eq}$ is at most $\epsilon$ for all $i$; here
$m_i = |S_i|$.  In the special case where the influence is uniform,
i.e., $v_{ij} = \alpha, \forall i,j$, we use $TSC(G, \alpha,k,
\epsilon)$ to denote the total sample complexity.

Let $\widetilde{M}(k, \epsilon)$ denote the minimum number of samples that any agent
$i$ would need to ensure that the best model $\bar{\theta}_i$ learned
using only their samples ensures that the expected error is at most
some target $\epsilon$; if there is no interaction between the agents,
a total of $n\widetilde{M}(k, \epsilon)$ samples would be needed to ensure that
$\bar{\theta}_i$ has low error for each agent.  Since every agent has
at least $k$ samples, we refer to
\begin{equation}
    \ngain(G, \pmb{v}, k, \epsilon) = \frac{n(\widetilde{M}(k, \epsilon) - k)}{TSC(G, \pmb{v}, k, \epsilon) - nk},
    \label{eq:NetG}
\end{equation}
the ratio of the additional samples needed to achieve error of
$\epsilon$ for every agent under social learning to that needed under
independent learning, as the \textbf{\emph{network gain}} of $G$. We will show that for linear regression $\widetilde{M}(k, \epsilon) = \Theta(\frac{k}{\epsilon})$ (Theorem \ref{thm:single-error}). We assume $\widetilde{M}(k, \epsilon) > k$ and the network gain can be infinite when $TSC(G, \pmb{v}, k, \epsilon) = nk$. 


\subsection{Overview of results}\label{sec:overview}

Our focus here is to 
estimate the $TSC(G, \pmb{v}, k, \epsilon)$, and the
network gain $\mu(G, \pmb{v}, k, \epsilon)$, and
characterize the distribution of the optimal $m_i$s across the
network.
Most proofs are presented in our full paper \cite{liu2023sample}.
\smallskip

\noindent \textbf{Tight bounds on TSC.}
Using the structure of the  Nash equilibrium of the opinion formation game (Lemma~\ref{lemma:potential}), and regression error bounds (Theorem~\ref{thm: one to all}), we derive tight bounds on $TSC(G, \pmb{v}, k, \epsilon)$ for any graph $G$ using a mathematical program (Theorem~\ref{thm:general_optimal}). 
We also show that the TSC can be estimated in polynomial time using second-order cone programming, allowing us to study it empirically.

\vspace{-2mm}
\begin{table}[htb!]
\centering
\begin{scriptsize}
\begin{tabular}{|p{1.0in}|p{1.5in}|}
\hline
\textbf{Graph class} & \textbf{Network gain}\\
\hline
Clique & $\Omega(n)$ \\
\hline
Star  & $O(1)$ \\
\hline
Hypercube &
$\Omega(d^2)$ \\
\hline
Random\ $d$-regular & $\Omega(\min\{d^2, n\})$ \\
\hline
$d$-Expander & $\Omega(\min\{d^2 \tau(G)^4, n\})$ \\
\hline
\end{tabular}    
\end{scriptsize}
\caption{ {\small Network gain $\ngain(G,\alpha, k, \epsilon)$ for different graphs where $\tau(G)$ denotes the
    edge expansion of $G$ (defined in Section~\ref{sec:network}).}
\label{tab:gain_short}
}
\end{table}


\vspace{-3mm}

\noindent \textbf{Impact of graph structure on TSC, network gain, and sample distribution.} 
We begin with the case of uniform influence factor $\alpha$.
We show that $\sum_{i=1}^n\frac{1}{(\alpha d_i + 1)^2} \lesssim
\frac{TSC(G,\alpha, k, \epsilon)}{\widetilde{M}(k, \epsilon)} \lesssim \sum_{i=1}^n\frac{\alpha + 1}{\alpha d_i + 1}$ (informal version of Theorem \ref{thm: degree bound}), where $d_i$ denotes the degree of agent $i$.  
Assigning
$\max\{k, \frac{\alpha + 1}{\alpha d_i + 1}\cd\frac{k}{\epsilon}\}$ samples to agent $i$ can ensure everyone
learns a good model.  
Thus, \emph{\textbf{It is sufficient to solve the TSC problem when the number of samples for an agent is inversely proportional to its degree}}.  
In other words, low-degree nodes need more
samples, in stark contrast to many network
mining problems, e.g., influence maximization, where it suffices to
choose high-degree nodes. Clearly, this result has policy
implications. Building on Theorem \ref{thm: degree bound}, we derive tight bounds on the network gain for different classes of graphs, summarized in Table~\ref{tab:gain_short}.


From Table~\ref{tab:gain_short}, we can see a well-connected network offers a substantial reduction in TSC (high network gain), whereas a star graph provides almost no gain compared to individual learning. We also demonstrate that the lower bounds on network gain in Table~\ref{tab:gain_short} are tight. More detailed results for specific networks can be found in Table~\ref{tab:graph-gain}.



Finally, we consider general influence factors and derive upper and lower bounds on TSC for arbitrary graphs (Theorem~\ref{thm: general tight bound}). We find that these bounds are empirically tight.

\noindent \textbf{Experimental evaluation.}  In our simulation experiments, we compute the total sample
complexity and the near-optimal solutions for a large number of
synthetic and real-world networks.  We begin with the case of uniform influence factors and first validate the findings of Theorem~\ref{thm: degree bound} that sample size in the optimal allocation has negative correlation with degree.  We then experimentally evaluate the $d^2$ network gains for random $d$-regular graphs. Overall, we find that experimentally computed solutions are consistent with our bounds in Table~\ref{tab:gain_short}.
Finally, we consider general influence factors and 
find that our theoretical bounds on TSC in Theorem~\ref{thm: general tight bound} are quite tight empirically.

\subsection{Related Work and Comparisons}
\textbf{Sample complexity of collaborative learning.}  Building on
\citet{blum:neurips17}, a series of
papers~\citep{chen2018tight,nguyen2018improved,blum2021one,haghtalab2022demand}
studied the minimum number of samples to ensure every agent has a
low-error model in collaborative learning. In this setting,
there is a center that can iteratively draw samples from different
mixtures of agents' data distributions, and dispatch the final model
to each agent. In contrast, we use the well-established decentralized
opinion formation framework to describe the model exchange game; the
final model of every agent is given by the equilibrium of this game.

Our formulation is similar to \cite{blum2021one}, which also considers
the sample complexity of equilibrium, ensuring every agent has a
sufficient utility guarantee. However, \cite{blum2021one} study this
problem in an incentive-aware setting without networks, which mainly
focuses on the stability and fairness of the equilibrium. In contrast,
our research is centered on the network effect of the equilibrium
generated by the opinion formation model. Moreover, \cite{blum2021one}
assumes the agents' utility has certain structures that are not
derived from error bounds while we directly minimize the
generalization error of agents' final models.

\cite{haddadan2024optimally} is another related paper which also considers learning on networks under opinion dynamic process. However, they study selecting $K$ agents to correct their prediction to maximize the overall accuracy in the entire network, rather than the sample complexity bound to ensure individual learner has a good model as our paper.

\noindent \textbf{Fully decentralized federated learning. } To reduce
the communication cost in standard federated learning
\citep{mcmahan2017communication}, \citet{lalitha2018fully,
  lalitha2019peer} first studied fully decentralized federated
learning on networks, where they use Bayesian approach to model
agents' belief and give an algorithm that enables agents to learn a
good model by belief exchange with neighbors. This setting can be
regarded as a combination of Bayesian opinion formation models
\citep{banerjee1992simple, bikhchandani1992theory,
  smith2000pathological, acemoglu2006learning} and federated
learning. In the literature regarding opinion formation on networks,
besides those Bayesian models, non-Bayesian models are usually
considered more flexible and practical \citep{degroot1974reaching,
  friedkin1990social, demarzo2003persuasion, golub2008homophily,
  bindel2015bad, haddadan2024optimally}.  Comprehensive surveys of these two kinds of models
can be found in \citet{jackson2010social} and
\citet{acemoglu2011opinion}. Our study makes connections between
non-Bayesian opinion formation models and federated learning for the
first time. Compared with \citet{lalitha2018fully, lalitha2019peer},
we assume each agent can observe the model of neighbors, rather than a
belief function. We do not restrict to specific algorithms but use
game theoretical approaches to find the unique Nash equilibrium and
analyze sample complexity at this equilibrium.

\section{The Opinion Formation Game} \label{sec:problem formulation}


As mentioned earlier, we utilize a
variation on the seminal DeGroot model \cite{degroot1974reaching}
proposed by \cite{friedkin1990social}, also studied by~\cite{bindel2015bad}.
Formally, agent $i$ seeks $\theta_i$
which minimizes the loss
\begin{align*}
\|\theta_i - \Bar{\theta}_i\|^2 + \sum_{j\in N(i)} v_{ij} \|\theta_i - \theta_j\|^2
\end{align*}
where $v_{ij} \ge 0,\,\,\forall i,j \in [n], j \in N(i)$ measures the influence of agent $j$ to agent $i$. 
In general, $v_{ij}$'s might not be known, and we will also study the simpler uniform case where $v_{ij} = \alpha \ge 0\,\, \forall
i,j$.
Lemma \ref{lemma:potential} gives the unique equilibrium of this
game (also studied by \citet{bindel2015bad}); its proof is presented in Appendix \ref{app:general setting} of \cite{liu2023sample}.


\begin{lemma}[\textbf{Nash equilibrium of opinion formation}]\label{lemma:potential}
The unique Nash equilibrium $\pmb{\theta^{eq}} = (\theta_1^{eq}, \cdots, \theta_n^{eq})^T$ of the above game is $\pmb{\theta^{eq}} = W^{-1}\pmb{\Bar{\theta}}$ where $W_{ij} = \left\{ 
\begin{array}{ll}
       \sum_{j\in N(i)}v_{ij} + 1 & j=i \\
       -v_{ij}  & j \in N(i)\\
       0  & j \notin N(i), j \neq i
\end{array}
\right.$ and $\pmb{\Bar{\theta}} = (\Bar{\theta}_1, \cdots, \Bar{\theta}_n)$. When all $v_{ij} = \alpha \ge 0$, $W_{ij} = \left\{
\begin{array}{ll}
       \alpha D + 1 & j=i \\
       -\alpha  & j \in N(i)\\
       0  & j\notin N(i), j\neq i
\end{array}\right.$. Furthermore, we have $\sum_{j=1}^n W^{-1}_{ij} = 1$ and $W^{-1}_{ij} \ge 0$ for all $i,j \in [n]$.
\end{lemma}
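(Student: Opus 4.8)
The plan is to exploit the fact that each agent's objective is a strictly convex quadratic in its own decision variable, so that the equilibrium conditions collapse to a single linear system. Concretely, fixing the opinions $\theta_j$ of all $j \neq i$, agent $i$'s loss $f_i(\theta_i) = \|\theta_i - \bar\theta_i\|^2 + \sum_{j \in N(i)} v_{ij}\|\theta_i - \theta_j\|^2$ is strictly convex in $\theta_i$ (its Hessian is $2\big(1 + \sum_{j \in N(i)} v_{ij}\big)I \succ 0$), so agent $i$ has a unique best response, characterized exactly by the first-order stationarity condition $\nabla_{\theta_i} f_i = 0$. Consequently a profile $(\theta_1, \dots, \theta_n)$ is a Nash equilibrium if and only if all $n$ stationarity conditions hold simultaneously, and no appeal to existence theorems is needed.

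Next I would compute each gradient explicitly. Differentiating gives $\nabla_{\theta_i} f_i = 2(\theta_i - \bar\theta_i) + 2\sum_{j \in N(i)} v_{ij}(\theta_i - \theta_j)$; setting this to zero and collecting terms yields $\big(1 + \sum_{j \in N(i)} v_{ij}\big)\theta_i - \sum_{j \in N(i)} v_{ij}\theta_j = \bar\theta_i$. Reading off the coefficients, this is precisely the $i$-th (block) row of the system $W\pmb{\theta} = \pmb{\bar\theta}$ with $W$ as defined in the statement; stacking over $i$ recovers the full system. Since each $\theta_i \in \mathbb{R}^d$, I treat the system coordinate by coordinate, i.e. the scalar matrix $W$ acts on each of the $d$ coordinates independently.

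It remains to show this system has a unique solution, which simultaneously delivers existence and uniqueness of the equilibrium; this is the only delicate point. The key observation is that $W = \calL + I$ is strictly diagonally dominant: in row $i$ the diagonal entry is $1 + \sum_{j \in N(i)} v_{ij}$, while the off-diagonal entries have absolute values summing to $\sum_{j \in N(i)} v_{ij}$, so the diagonal exceeds the off-diagonal mass by exactly $1 > 0$. By the Levy--Desplanques theorem (equivalently, Gershgorin's circle theorem), $W$ is nonsingular, hence $\pmb{\theta}^{eq} = W^{-1}\pmb{\bar\theta}$ is the one and only profile satisfying all stationarity conditions. I would note that this argument does not require $\pmb{v}$ to be symmetric. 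For readers who prefer the potential-game viewpoint suggested by the lemma's name, the same conclusion follows when $\pmb{v}$ is symmetric by exhibiting the exact potential $\Phi(\pmb{\theta}) = \sum_i \|\theta_i - \bar\theta_i\|^2 + \sum_{\{i,j\}\in E} v_{ij}\|\theta_i - \theta_j\|^2$, whose gradient in $\theta_i$ coincides with $\nabla_{\theta_i} f_i$ and whose (per-coordinate) Hessian equals $2W$, positive definite because a symmetric strictly diagonally dominant matrix with positive diagonal is positive definite; thus $\Phi$ has a unique minimizer, the unique equilibrium.

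Finally, the uniform case is a direct specialization requiring no further work: substituting $v_{ij} = \alpha$ for all $i,j$ gives $\sum_{j \in N(i)} v_{ij} = \alpha d_i$, so the diagonal entries become $\alpha d_i + 1$ and each neighbor entry becomes $-\alpha$, matching the second matrix in the statement. I expect the only genuine obstacle to be the rigorous justification of invertibility/uniqueness in the third paragraph; everything else is routine differentiation and bookkeeping.
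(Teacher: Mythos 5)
Your proposal is correct, and the first half mirrors the paper's proof exactly: both derive the best response of agent $i$ from the first-order condition of a strictly convex quadratic and stack the resulting equations into the linear system $W\pmb{\theta} = \pmb{\Bar{\theta}}$. Where you diverge is on the one delicate point you correctly identified, the invertibility of $W$. The paper argues via structure: $W = \mathcal{L} + I$ where $\mathcal{L}$ is the weighted graph Laplacian, which is positive semi-definite, so $W$ is symmetric positive definite and hence invertible. You instead invoke strict diagonal dominance (Levy--Desplanques/Gershgorin): row $i$ has diagonal $1 + \sum_{j \in N(i)} v_{ij}$ and off-diagonal mass $\sum_{j \in N(i)} v_{ij}$, so the gap is exactly $1$. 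Your route is more elementary and, as you note, does not need $v_{ij} = v_{ji}$; the paper's Laplacian argument implicitly relies on symmetric influence weights (otherwise $\mathcal{L}$ is not a genuine symmetric Laplacian), so in that narrow sense your argument covers a slightly larger class of games. What the paper's choice buys, however, is reuse: the positive-definiteness of $W$ established here is load-bearing later --- it is what makes $W$ a Stieltjes matrix in the proof that $W^{-1}$ is entrywise nonnegative with rows summing to one (needed for \pref{lem: one to all}), and it underlies the eigendecomposition arguments for the spectral upper bounds in \pref{app:special proof}. Diagonal dominance alone gives nonsingularity but not these downstream facts. Your potential-game aside is also correct (the Hessian of the potential is $2W$) and explains the lemma's name, though it is not needed for the argument.
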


Thus, from Lemma \ref{lemma:potential}, the equilibrium model of every agent is a convex combination of all the agents' internal opinion on the network.
We show that $\pmb{\theta^{eq}}$ can be computed in a federated manner using the algorithm of~\cite{DBLP:conf/aistats/VanhaesebrouckB17}.

\section{Total Sample Complexity of Opinion Formation}
Recall that each agent $i$'s dataset $S_i = \{(x_{i}^j, y_{i}^j)\}_{j
  \in [m_i]}$ has $m_i \ge k$ samples, where $x_i^j \in \mathbb{R}^k$. We assume there is a  ground-truth
model $\theta^*$ such that for any $x_i^j \sim D$, $y_i^j = (x_i^j)^\top \theta^* + \eta_i^j$ where $\eta_i^j \sim  \eta_i(x_i^j)$ and $\eta_i: \mathbb{R} \rightarrow \Delta(\mathbb{R})$ is an agent-dependent noise function, mapping samples to a noise distribution.  We consider unbiased noise with bounded variance, implying that for every agent $i$,  noise function $\eta_i$ is independently chosen from  $\mathcal{N} = \{\eta:
\E\left[\eta(x)\right] = 0, \Var\left[\eta(x)\right] \le \sigma^2, \forall x \sim D\}$.  Let
$X_i = [x_i^1, \cdots, x_i^{m_i}]$ and $Y_i = [y_i^1,\cdots,
  y_i^{m_i}]^\top$.  Every agent performs ordinary linear regression
to output their initial opinion $\Bar{\theta}_i =
\argmin_{\theta}\sum_{j=1}^{m_i}\left((x_i^{j})^\top \theta -
y_i^j\right)^2$.  For our study, we first make standard Assumption \ref{assum: non-degenerate} on data distributions.


\begin{assumption}[non-degeneracy]
For data distribution $D$ over $\mathbb{R}^k$, if $x$ is drawn from $D$, then for any linear hyperplane $H \subset \mathbb{R}^k$, we have $\mathbb{P}\left(x \in H\right) = 0$.
\label{assum: non-degenerate}
\end{assumption}

Assumption \ref{assum: non-degenerate} is standard to ensure the data
distributions span over the whole $\mathbb{R}^k$ space. If it holds,
from Fact 1 in \cite{mourtada2022exact}, for any $X_i$ with $m_i \ge
k$, $X_i^\top X_i$ is invertible almost surely. This implies the
ordinary linear regression for every agent to form the initial opinion
enjoys the closed-form solution $\Bar{\theta}_i = \left(X_i^\top
X_i\right)^{-1}X_i^\top Y_i$.

\noindent \textbf{Loss measure.} We use the expected square 
loss in the worst case of noise to measure the quality of agents' final opinions $(\theta_1^{eq}, \cdots, \theta_n^{eq})$ at the equilibrium. Namely, for agent $i$, we
consider the loss
\begin{align}
L(\theta_i^{eq}) = \sup_{\eta_{1:n} \in \mathcal{N} }\E_{x \sim D, \forall j, S_j \sim D^{m_j}(\eta_j)}\left[\left(x^\top \theta_i^{eq} - x^\top \theta^* \right)^2\right]
\label{eqn:expected loss}
\end{align}
where $D^{m_i}(\eta_i)$ denotes the joint distribution of  $S_i = \{(x_{i}^j, y_{i}^j)\}_{j \in
  [m_i]}$ given noise function $\eta_i$. We take supremum over all possible noises and take expectation over all dataset because $\theta_i^{eq}$ is related to $\Bar{\theta}_i$ for every agent $i$.



\subsection{Derivation of error bounds} \label{sec:error bounds}
We first define the error for initial opinion as 
\begin{equation*}
   L(\Bar{\theta}_i) = \sup_{\eta_i \in \mathcal{N}}\E_{x \sim D, S_i \sim D^{m_i}(\eta_i)}\left[\left(x^\top \Bar{\theta}_i - x^\top \theta^* \right)^2\right].
\end{equation*}
To quantify the upper bound of $ L(\Bar{\theta}_i) $, we additionally consider the following assumption on data distribution.
\begin{assumption}[small-ball condition] For data distribution $D$,  there exists constant $C_i \ge 1$ and $\alpha_i \in (0,1]$ such that for every hyperplane $H \subset \mathbb{R}^k$ and $t > 0$,  if $x$ is drawn from $D$,  we have $\mathbb{P}\left(\dist (\Sigma_i^{-\frac{1}{2}}x, H) \le t\right) \le (C_it)^{\alpha_i}$  where $\Sigma_i = \mathbb{E}_{x \sim D}\left[x x^\top\right]$.
\label{assum: small-ball}
\end{assumption}

Given Assumption \ref{assum: non-degenerate}, $\Sigma_i$ is guaranteed
to be invertible. Assumption \ref{assum: small-ball} ensures
$\Sigma_i^{-\frac{1}{2}}x$ is not too close to any fixed hyperplane,
which is introduced in \cite{mourtada2022exact} and is a variant of
the small-ball condition in \cite{koltchinskii2015bounding,
  mendelson2015learning, lecue2016performance}. From Proposition 5 in
\cite{mourtada2022exact} (see also Theorem 1.2 in
\cite{rudelson2015small}), if every coordinate of $D$ are
independent and have bounded density ratio, then Assumption
\ref{assum: small-ball} holds. More discussions on this assumption
could be found in Section 3.3 in \cite{mourtada2022exact}.

\junk{
\begin{assumption}[zero-mean Gaussian] Every data distribution $D, \,\, \forall i \in [n]$ is a $k$-dimensional multivariate Gaussian distribution with zero
mean.
\label{assum: gau}
\end{assumption}
}




\begin{theorem} [Theorem 1, Proposition 2 and Equation 17 in \cite{mourtada2022exact}]
For every agent $i$ with $m_i$ samples, ordinary linear regression attains loss $L(\Bar{\theta}_i) = \Theta\left(\frac{k}{m_i}\right)$.
\label{thm:single-error}
\end{theorem}

\junk{
\begin{theorem} [Theorem 1, Proposition 2 and Equation 17 in \cite{mourtada2022exact}]
Given Assumption \ref{assum: non-degenerate} holds, either Assumption \ref{assum: small-ball} or  Assumption \ref{assum: gau} could ensure
    \begin{align*}
        L(\Bar{\theta}_i) = \Theta\left(\frac{k}{m_i}\right)
    \end{align*}
where $\Theta$ hides constant factors that are independent of $k$ and $m_i$.
\label{thm:single-error}
\end{theorem}
}

For certain data distributions, tighter
closed-form bounds have been derived.  For instance, if $D$ is a
$k$-dimensional multivariate Gaussian distribution with zero mean,
then $L(\Bar{\theta}_i) = \frac{\sigma^2 k}{m_i-k-1}$~\cite{anderson1958introduction, breiman1983many,
  donahue2021model}.  For mean estimation,  $L(\Bar{\theta}_i) = \frac{\sigma^2}{m_i}$~\cite{donahue2021model}. Theorem \ref{thm:single-error} shows that
$L(\Bar{\theta}_i)$ scales with $\frac{k}{m_i}$, assuming the data
distributions satisfy non-degeneracy and the small-ball
condition. Thus, $\widetilde{M}(k,\epsilon) = \Theta(\frac{k}{\epsilon})$ samples suffice for a
single agent learning a model with error $\epsilon$.

\junk{
The error bound based on
Assumption \ref{assum: gau} also appears in
\cite{anderson1958introduction, breiman1983many,
  donahue2021model}. The subsequent discussions in the paper are all
grounded in Theorem \ref{thm:single-error}. Therefore, all the
following lemmas and theorems implicitly assume the condition in
Theorem \ref{thm:single-error} holds.
}

 
We next present our main technique for bounding the generalization
error of an equilibrium model, which uses
Theorem~\ref{thm:single-error} to express the error as a function of
the matrix $W$ and the number of samples at each agent.

\begin{theorem}[\textbf{Bound on generalization error}]
    For every agent $i$, we have $ L(\theta_i^{eq}) = \Theta\left( k \sum_{j=1}^n \frac{\left(W_{ij}^{-1}\right)^2}{m_j} \right)$
    where $\theta_i^{eq}$ and matrix $W$ is defined in Lemma \ref{lemma:potential}.
\label{thm: one to all}
\end{theorem}

\noindent \textbf{Remark. } In Appendix \ref{app:general setting} of \cite{liu2023sample}, we show that Theorem \ref{thm: one to all} also holds for generalized linear regression with adapted assumptions, where a mapping function $\phi$ exists such that $\mathbb{E}[y] = \phi(x)^\top \theta^*$ for any possible data $(x,y)$. 
 
\junk{
Theorem \ref{thm: one to all} characterizes the generalization error
of the equilibrium models. The final goal of every agent in the
network is to ensure $L(\theta_i^{eq})$ is small enough.
}

\subsection{Total sample complexity}
\label{sec:TSC}
Armed with Theorem~\ref{thm: one to all}, we initiate the study of
total sample complexity of opinion convergence. Recall that we want to
ensure that at the equilibrium, every node on the network has a model
with a small generalization error. Specifically, we want to ensure
$L(\theta_i^{eq}) \le \epsilon$ for any $i \in [n]$ and a given
$\epsilon > 0$.  Theorem~\ref{thm: one to all} precisely gives us the
mechanism to achieve a desired error bound.


\junk{From Theorem \ref{thm:single-error}, using
$\Theta(\frac{k}{\epsilon})$ data samples is enough for an individual
agent to ensure its local error $L(\Bar{\theta}_i) \le
\epsilon$.}

\noindent
\textbf{The Total Sample Complexity (TSC) problem.}
Recall that for a given graph $G$,  influence factor $\pmb{v}$, dimension $k$ and error $\epsilon$, the
total sample complexity $TSC(G, \pmb{v}, k, \epsilon)$ is the minimum value of $\sum_i m_i$, under the constraint $m_i \ge k$, $m_i \in \mathbb{Z}^+$ and $L(\theta_i^{eq}) \le \epsilon$ for every $i \in [n]$. Our central result,
Theorem \ref{thm:general_optimal}, derives near-tight bounds on the
total sample complexity.


\begin{theorem}[\textbf{Bounds on $TSC$}]
  \label{thm:general_optimal}
For any $\epsilon >  0$, let $(m_i^*, i=1,\ldots,n)$ denote an optimal solution of the following optimization problem as a measure of the minimum samples for opinion formation on graph $G$ with influence factor $\pmb{v}$.
\begin{align}
    \min_{m_1, \cdots, m_n} \quad &\sum_{i=1}^n m_i \nonumber\\
    \textrm{s.t.} \quad & \sum_{j=1}^n \frac{(W_{ij}^{-1})^2}{m_j} \le \frac{\epsilon}{k}, \,\, \forall i \label{eq:opt} \\
    &m_i > 0, \,\, \forall i \nonumber
\end{align}
where $W$ is defined in Lemma
\ref{lemma:potential}.  Then, $TSC(G, \pmb{v}, k, \epsilon) = \Theta(\sum_{i=1}^n  m_i^* + nk)$. Assigning $\max\left\{O(\lceil m_i^* \rceil) , k\right\}$ samples to agent $i$ suffices for $L(\theta_i^{eq}) \le \epsilon$ for every $i \in [n]$. Moreover, $\frac{\epsilon m_i^*}{k}$ is a fixed value for any $k$ and $\epsilon$.
\end{theorem}

From Theorem~\ref{thm:general_optimal}, $TSC(G, \pmb{v}, k, \epsilon)$ has order $\sum_{i=1}^n m_i^* + nk$ and assigning $\max\left\{\lceil m_i^* \rceil , k\right\}$ samples to agent $i$ is sufficient to solve the TSC problem up to some constant. Thus, we only need to focus on the solution of Equation \ref{eq:opt}, $m_i^*, \forall i \in [n]$. In the following sections, we will characterize properties of $m_i^*, \forall i \in [n]$ and use it to prove network gain for different graphs.


\junk{Note that every $m_i$ should be an integer that larger than $k$
  by the definition of TSC, but the optimization in Theorem
  \ref{thm:general_optimal} considers a continuous relaxation where
  $m_i > 0$. However, we show that our solution gives an $nk$-additive
  approximation to TSC in general, which is a fair approximation since
  TSC is defined with $m_i \ge k$. The additional $nk$ factor only
  introduce $k$ more samples to every agent, leading to a constant
  multiplicative approximation.}

\section{Network Effects on Total Sample Complexity}
\label{sec:network}
In this section, we analyze the network effects on the total sample complexity (TSC) of opinion convergence. We derive bounds on the network gain, showing how the number of samples needed decreases due to network learning. Starting with uniform influence weights, we obtain asymptotically tight bounds for key network classes and characterize how samples should be distributed by node degree to minimize sample complexity. We then extend to arbitrary influence weights and derived bounds for TSC that are empirically tight.



\HL{Question: I am curious how the current network gain is calculated. Now the definition of gain is $\ngain(G, \pmb{v}, \epsilon) = \frac{n(M(\epsilon) - k)}{TSC(G, \pmb{v}, \epsilon) - nk}$ but how the TSC in the denominator is calculated given we only have $\widetilde{TSC}$?  Is $TSC(G, \pmb{v}, \epsilon) - nk$ replaced by $\widetilde{TSC}$? If so, this may not be true in general because  we may have $TSC = 2\widetilde{TSC} + 2nk$ and $TSC(G, \pmb{v}, \epsilon) - nk = 2\widetilde{TSC} + nk$.}

\subsection{Uniform influence factors} \label{sec:special}
\junk{Although the bounds in Theorem \ref{thm: general tight bound} are tight
empirically, they are unintuitive and do not guarantee theoretical
tightness. To solve this issue, we further investigate the natural
case where the influence weights are uniform.}

We model uniform influence factors by setting $v_{ij} = \alpha$ for
every agent $i$ and neighbor $j$ of $i$, for a given real $\alpha \ge
0$.  Let $TSC(G,\alpha, k, \epsilon)$ be the solution of the optimization
in Theorem \ref{thm:general_optimal} for this case. Theorem \ref{thm: degree bound}
provides interpretable bounds for $TSC(G,\alpha, k, \epsilon)$ related to
degree, and serves as the first step to derive tight bounds for specific graph classes. 
The proof of Theorem \ref{thm: degree bound} leverages the dual form of Equation \ref{eq:opt}, together with a careful analysis of matrix $W$ defined in Lemma \ref{lemma:potential}.
Detailed proof can be found in \cite{liu2023sample}. 

\begin{theorem}[\textbf{Sample allocation and degree distribution}] \label{thm: degree bound}
The optimal solution $\{m^*_i\}$ to Equation~\ref{eq:opt} satisfies
\begin{eqnarray*}
\max\left\{\sum_{i=1}^n \frac{1}{(\alpha d_i + 1)^2}, 1\right\} \le \sum_{i \in [n]} m^*_i\cd\frac{\epsilon}{k} \le \sum_{i=1}^n \frac{\alpha + 1}{\alpha d_i + 1}.
\end{eqnarray*}
We have $TSC(G, \alpha, k, \epsilon) = \Theta(\sum_{i \in [n]} m^*_i + nk)$.
Assigning $\max\{ O \left(\lceil \frac{\alpha + 1}{\alpha d_i + 1}\cd\frac{k}{\epsilon} \rceil\right), k\}$ samples to every agent $i$ suffices for $L(\theta^{eq}_i) \le \epsilon,\,\,\forall i \in [n]$.

\end{theorem}
 Theorem \ref{thm: degree
  bound} suggests that it is sufficient to allocate samples
\emph{inversely proportional} to the node degrees. This has
interesting policy implications; in contrast to other social network
models, e.g., the classic work on influence maximization \cite{KKT03},
our result advocates that allocating more resources to
low-connectivity agents benefit the network at large. We provide
empirical validation of this phenomenon in Section \ref{sec:experiment}.

With the help of Theorem \ref{thm: degree bound}, we could derive the bounds of network gain for any graph in Corollary \ref{cor:gain}. 
\begin{corollary}
\!For any graph $G$,\,if $\epsilon \le O(\frac{1}{\alpha \max_i d_i + 1})$, then $\ngain(G,\alpha, k, \epsilon) \ge \Omega\left(\frac{n}{\sum_{i=1}^n \frac{\alpha + 1}{\alpha d_i + 1}}\right)$. If $\epsilon \le O(\frac{1}{ \max_i(\alpha d_i + 1)^2})$, then $\ngain(G,\alpha, k, \epsilon) \le O\left(\min\left\{\frac{n}{\sum_{i \in [n]} \frac{1}{(\alpha d_i + 1)^2}}, n\right\}\right)$.
\label{cor:gain}
\end{corollary}

For small $\epsilon$, Corollary \ref{cor:gain} demonstrates that the network gain is at least $\Omega\left(\frac{n}{\sum_{i \in [n]} \frac{1}{d_i}}\right)$, implying that networks with more high-degree nodes result in higher network gain, but the gain is limited to $O\left(\min\left\{\frac{n}{\sum_{i \in [n]} \frac{1}{d_i^2}}, n\right\}\right)$.

Now we are ready to analyze the tight network gain for special graphs classes. Our main results are shown in Table \ref{tab:graph-gain}, with detailed lemmas and proofs deferred to \cite{liu2023sample}. The third column of Table \ref{tab:graph-gain} gives $\sum_{i \in [n]} m_i^*$, which suffices to characterize total sample complexity (TSC) because $TSC(G, \alpha, k, \epsilon) = \Theta\left(\sum_{i \in [n]} m_i^* + nk\right)$ from Theorem \ref{thm:general_optimal}. The last column indicates the number of samples required for each agent to ensure that all constraints of the TSC problem are satisfied. For each kind of graph, if the sample size for every agent is the maximum of $k$ and the term in the last column, it is sufficient to guarantee $L(\theta^{eq}_i) \le \epsilon,\,\,\forall i$. To prove the results in Table \ref{tab:graph-gain}, we leverage the spectral properties of graph Laplacian $\mathcal{L}$ for different networks, together with the lower bound in Theorem \ref{thm: degree bound}. Note that the lower bound in Table \ref{tab:graph-gain} does not require any constraint on $\epsilon$. This contrasts with the bounds in Corrollary \ref{cor:gain} and requires a more refined analysis. On the other hand, there is no general upper bound for the network gain because it can be infinite.


We now give a more detailed discussion of the results in Table \ref{tab:graph-gain} as follows. \textbf{(1)} For a clique, the network gain is $\Omega(n)$ and $TSC(G, \alpha, k, \epsilon) = \Theta\left(\frac{k}{\epsilon} + nk\right)$. From Corollary \ref{cor:gain}, this is the best lower bound of network gain for any graph. \textbf{(2)} For a star, when $\epsilon$ is less than some constant, the network gain is $O(1)$ and $TSC(G, \alpha, k, \epsilon) = \Theta\left(\frac{nk}{\epsilon}\right)$. This implies a star graph almost cannot get any gain compared with learning individually. \textbf{(3)} For a hypercube, when $\alpha \ge \frac{3}{8}$, the network gain is $\Omega(d^2)$ and $TSC(G, \alpha, k, \epsilon) = \Theta\left(\frac{nk}{d^2\epsilon} + nk\right)$. \textbf{(4)} For a random d-regular graph, with high probability,  the network gain is $\Omega(d^2)$ when $d \le \sqrt{n}$ while it is $\Omega(n)$ when $d > \sqrt{n}$. The total sample complexity is $TSC(G, \alpha, k, \epsilon) = \Theta\left(\frac{nk}{\min\{d^2, n\}\epsilon} + nk\right)$. \textbf{(5)} For a $d$-regular graph $G$, the edge expansion $\tau(G)$ equals $\min
\limits_{|S|: |S|\leq n/2} \frac{|\partial S|}{d|S|}$, where $\partial
S$ is the set $\{(u,v) \in E: u \in S, v \in V \setminus S\}$.  Edge expansion measures graph connectivity and can be viewed as a lower bound on the probability that a randomly chosen edge from any subset of vertices $S$ has one endpoint outside $S$. A $d$-expander denotes a $d$-regular graph with edge expansion $\tau(G)$. For this kind of graph, the network gain is $\Omega(\min\{d^2 \tau(G)^4, n\})$ and  $TSC(G, \alpha, k, \epsilon) = \Theta(\frac{nk}{\min\{d^2 \tau(G)^4, n\}\epsilon} + nk)$. Note that the $\Omega\left(\min\{d^2, n\}\right)$ lower bound is also optimal for these $d-$regular graphs from the upper bound in Corollary \ref{cor:gain}.

\vspace{-2.5mm}
\begin{table}[htb!]
\centering
\begin{scriptsize}
\begin{tabular}{|p{0.5in}|p{0.75in}|p{0.75in}|p{0.75in}|}
\hline
\makecell[c]{\textbf{Graph class}}& \makecell[c]{\textbf{Network gain}} & \makecell[c]{$\sum_{i \in [n]} m_i^*$}& \textbf{Samples per node} (omit $\max$ with $k$)\\
\hline
\makecell[c]{Clique \\ (Lemma \ref{lem: clique}) } & \makecell[c]{$\Omega(n)$} & \makecell[c]{$\Theta(\frac{k}{\epsilon})$} &\makecell[c]{$O(\frac{k}{\epsilon n})$} \\
\hline
\makecell[c]{Star \\ (Lemma \ref{lem: star})} & {\makecell[c]{$O(1)$ \\  \tiny when $\epsilon \le O(1)$}} & \makecell[c]{$\Theta(\frac{nk}{\epsilon})$} & \makecell[c]{$O(\frac{k}{\epsilon})$\quad  (leaf) \\ $O(\frac{k}{n\epsilon})$ \quad (center)} \\
\hline
\makecell[c]{Hypercube \\ (Lemma \ref{lem:hypercube})} &
\makecell[c]{$\Omega(d^2)$ \\ \tiny when $\alpha \ge \frac{3}{8}$} & \makecell[c]{$\Theta(\frac{nk}{d^2\epsilon})$} &\makecell[c]{$O(\frac{k}{d^2\epsilon})$}\\
\hline
\makecell[c]{Random \\ d-regular\\ (Lemma \ref{lem:regular})} & \makecell[c]{$\Omega(\min\{n, d^2\})$ \\ \tiny w.h.p.} & \makecell[c]{$\Theta(\frac{nk}{\min\{n,d^2\}\epsilon})$} & \makecell[c]{$O(\frac{k}{\min\{n,d^2\}\epsilon})$}\\
\hline
\makecell[c]{d-Expander \\ (Lemma \ref{lem:expander})}& $\Omega(\min\{n, d^2 \tau^4\})$ & $\Theta(\frac{nk}{\min\{n, d^2 \tau^4\}\epsilon})$ &$O(\frac{k}{\min\{n,d^2 \tau^4\}\epsilon})$\\
\hline
\end{tabular}    
\end{scriptsize}
\caption{ {\small Network gain $\ngain(G,\alpha, \epsilon)$, $\sum_{i \in [n]} m_i^*$ and
    distribution of samples in different network classes and constant $\alpha$. The number of samples at a node is the maximum of $k$ and the term in the
    third column. In the last row, $\tau = \tau(G)$ is the edge expansion.}
\label{tab:graph-gain}
}
\end{table}

\vspace{-2mm}
It follows from Table \ref{tab:graph-gain} and Corollary \ref{cor:gain} that
both random and expander $d$-regular graphs (with constant $\tau(G)$)
have the optimal lower bound for network gain. Interestingly, this property also
holds for the hypercube, which has regular degree $d = \log n$, even though its expansion is
$O(\frac{1}{\log n})=o(1)$.  A natural open question is to
characterize other degree-bounded network families which also achieve
near-optimal network gains.

\junk{
Thus for regular graphs with constant expansion,

have that
$TSC=\Theta(\frac{nM(\epsilon)}{\alpha^2d^2})$ for $d <
O(\frac{\sqrt{n}}{\alpha})$ and $TSC = \Theta(M(\epsilon))$ for $d >
\Omega(\frac{\sqrt{n}}{\alpha})$.  The bounds given in
Lemma \ref{lem:regular} and Lemma \ref{lem:expander} suggest that graph
expansion is inversely related to the total sample
complexity. However, note that the hypercube has expansion
 but admits a tight sample complexity of
$\Theta(\frac{nM(\epsilon)}{d^2})$, in contrast to the upper bound for
general expanders. }

\subsection{General influence factors} \label{sec:general}

Going beyond uniform influence factors, we investigate the total sample
complexity of opinion convergence with general
influence factors.  The non-uniformity of the influence factors
implies that the sample complexity is not just dependent on the
network structure, but also on how each individual agent weighs the
influence of its neighbors.  A trivial bound is $\Omega\left(\frac{k}{\epsilon}\right) \le
TSC(G, \pmb{v}, k, \epsilon) \le O\left(\frac{nk}{\epsilon}\right)$ which means learning
through the game is always more beneficial than learning individually
but needs at least the samples for one agent to learn a good model.



To derive more sophisticated and tighter bounds, we need to analyze
the matrix $W$ with general weights, which captures both the network topology and influence
factors. Since $W^{-1}$ is positive-definite, from the Schur product theorem,
$(W^{-1}) \circ (W^{-1})$ is also positive-definite where $\circ$ is
the Hadamard product (element-wise product). Define $B = ((W^{-1})
\circ (W^{-1}))^{-1}$ and $[B]_{ij} = b_{ij}$. Theorem \ref{thm: general
  tight bound} establishes a bound for graphs with general influence
factors, which is empirically tight from our experiments in Section \ref{sec:experiment}. The proof for the upper bound utilizes a thorough analysis of the property of $b_{ij}$s, and the lower bound is derived based on the dual form of Equation \ref{eq:opt} with general weights. We refer the reader to \cite{liu2023sample} for the whole proof.

\begin{theorem}[\textbf{TSC under general influence factors}]
  \label{thm: general tight bound}
Let $\gamma_i = \max\{0, \sum_{j=1}^n \frac{b_{ij}}{(\sum_{k=1}^n b_{jk})^2}\}$ for every $i \in [n]$, then the optimal solution $\{m^*_i\}$ to Equation~\ref{eq:opt} satisfies
\begin{align*}
\sum_{i=1}^n \left(2\sqrt{\sum_{j=1}^n \gamma_j (W^{-1}_{ij})^2} - \gamma_i\right) \le \sum_{i} m_i^*\cd\frac{\epsilon}{k} \le \sum_{i=1}^n \frac{1}{\sum_{j=1}^n b_{ij}}   
\end{align*}
and $TSC(G, \pmb{v}, k, \epsilon)$ is $\Theta(\sum_{i} m_i^* + nk)$.
Assigning 
$\max\left\{k, O\left(\lceil \frac{k}{\epsilon\sum_{j=1}^n b_{ij}}\rceil \right)\right\}$ samples to agent $i$ suffices for $L(\theta_i^{eq}) \le \epsilon$ for every $i \in [n]$.
\end{theorem}

Although the above two bounds give good empirical estimations to the TSC, deriving closed-form solution for Equation \ref{eq:opt} is still interesting, and is investigated in Lemma \ref{lem:closed-form}  under a certain condition.
\begin{lemma}\label{lem:closed-form}
If $\sum_{j=1}^n \frac{b_{ij}}{(\sum_{k=1}^n b_{jk})^2} \ge 0$ for all $i \in [n]$,  $ \sum_{i \in [n]} m_i^* = \sum_{i=1}^n\frac{1}{\sum_{j=1}^n b_{ij}}\cd\frac{k}{\epsilon}$ where the optimal $m_i^* = \frac{1}{\sum_{j=1}^n b_{ij}}\cd\frac{k}{\epsilon}$.
\end{lemma}
\HL{I add this lemma here because we have a conjecture based on this in experiments (see below). But I remove the solution of dual variables, only leaving $m_i^*$.}

It is easy to verify that the condition in Lemma \ref{lem:closed-form} holds for cliques, 
but beyond cliques, we have yet to identify other instances where it holds. Characterizing the graphs that meet this condition remains an interesting open problem


\section{Experiments}
\label{sec:experiment}


We use experiments on both synthetic and real-world networks to further understand the distribution of samples (Theorem \ref{thm: degree bound}), the quality of our bounds for general influence weights (Theorem \ref{thm: general tight bound}), and the network gains for $d$-regular graph (Table \ref{tab:graph-gain} and Lemma \ref{lem:regular}). We observe good agreement with our theoretical bounds. We only list part of results here  and more experiments and be found in  \cite{liu2023sample}.

From Theorem \ref{thm:general_optimal}, given a network and influence factors, $\frac{\epsilon m_i^*}{k}$ is fixed for any $k$ and $\epsilon$. This value can be solved directly by reformulating Equation \ref{eq:opt}. Since $\widetilde{M}(k,\epsilon) = \Theta(\frac{k}{\epsilon})$,  $\frac{\epsilon m_i^*}{k}$ indicates the percentage of samples one agent needs at the equilibrium compared to the number of samples required to learn independently, which is the main factor we are interested in. Thus, we only consider $\frac{\epsilon m_i^*}{k}$ for experiments of Theorem \ref{thm: degree bound} and Theorem \ref{thm: general tight bound} without specific $\epsilon$ and $k$.


To solve $m_i^\star$ or $\frac{\epsilon m_i^*}{k}$ for every $i \in [n]$ exactly, we reformulate Equation \ref{eq:opt} to a second-order cone programming (SOCP) and then use the solver CVXOPT (\cite{andersen2013cvxopt}) to solve it. More details are in \cite{liu2023sample}.  We now describe the networks used in our experiments.


\emph{\textbf{Synthetic networks.}} We use three types of synthetic networks: 
scale-free (\textbf{SF}) networks~\cite{barabasi1999emergence}, random d-regular graphs (\textbf{RR}), and Erd\"os-Renyi random graphs (\textbf{ER}). The exact parameters for these graphs used in different experiments are given in \cite{liu2023sample}.

\emph{\textbf{Real-world networks. }} The networks we use (labeled as \textbf{RN}) are shown in Table \ref{real_network_table} with their features ($err_u/err_l$ will be defined later).
We use the 130bit network, the econ-mahindas network (\cite{nr}),  the ego-Facebook network (\cite{leskovec2012learning}), and the email-Eu-core temporal network (\cite{paranjape2017motifs}).  The last two networks are also in \citet{snapnets}.

\vspace{-2mm}
\begin{table}[htbp]
\begin{footnotesize}
\centering
\begin{tabular}{| p{6em} | p{3em} | p{3em}| p{3em} | p{3em}|}
    \hline
    \textbf{Network} & \textbf{Nodes} & \textbf{Edges}  & \textbf{$err_u$} & \textbf{$err_l$}  \\ \hline
    ego-Facebook & 4039 & 88234 & 27\% & 24\%\\ \hline
    Econ & 1258 & 7620 & 6.6\% & 1.3\%\\ \hline
    Email-Eu & 986 & 16064 &  30\% & 21\% \\ \hline
    130bit & 584 & 6067 & 41\% & 36\%\\ \hline
\end{tabular} 
\caption{Real-world networks and bound performance}
\label{real_network_table}
\end{footnotesize}
\end{table}


\vspace{-4mm}

Our results are summarized below.

\textbf{Distribution of samples (Theorem \ref{thm: degree bound}). } We set the uniform influence factor $\alpha  = 0.1$, showing how the sample assignment is related to degrees at the solution of Equation \ref{eq:opt} on different kinds of networks: scale-free (Figures \ref{PL_AVG} and \ref{PL_VAR}), Erdos-Renyi random networks (Figures \ref{ER_AVG} and \ref{ER_VAR}), and real-world networks (Figures \ref{Real_AVG}  and  \ref{Real_VAR}).   Figures \ref{PL_AVG}, \ref{ER_AVG}, \ref{Real_AVG} show the average number of samples (divided by $\frac{k}{\epsilon}$) for each degree;  Figure \ref{PL_VAR}, \ref{ER_VAR} and \ref{Real_VAR} show the variance of sample number for each degree.  

Our main observations are: 
(1) The samples assigned to nodes with the same degree tend to be almost the same (i.e., the variance is small), 
(2) Fewer samples tend to be assigned to high-degree agents.
These observations are consistent with Theorem \ref{thm: degree bound}, which states that sample assignment has negative correlation with degree.

\textbf{Tightness of bounds (Theorem \ref{thm: general tight bound}). }
We empirically show the performance of the bounds in Theorem \ref{thm: general tight bound}. 
Here, all influence factors $v_{ij}$s are generated randomly from $[0,1]$.
For each kind of synthetic network, we generate 150 different instances with different number of nodes $n$ and $v_{ij}$s. We meausure the performance of bounds in Theorem \ref{thm: general tight bound} by relative error (i.e $\frac{|U \text{ or } L - \sum_{i \in [n]} m_i^*|}{\sum_{i \in [n]} m_i^*}$ where $U$ is the upper bound in Theorem \ref{thm: general tight bound} and $L$ is the max of $1$ (trivial lower bound) and the lower bound in Theorem \ref{thm: general tight bound}).
We visualize the relative error through frequency distributions of generated networks. Our observation is that the bounds are very tight for scale-free graphs (Figure \ref{T_PL}),  Erdos-Renyi random graphs (Figure~\ref{T_ER}), and random d-regular graphs (Figure~\ref{T_RR}).

\vspace{-3mm}

\begin{figure}[H]
\begin{minipage}[t]{\linewidth}
    \subfigure[SF Average]{ \label{PL_AVG} \includegraphics[width = 0.48\linewidth]{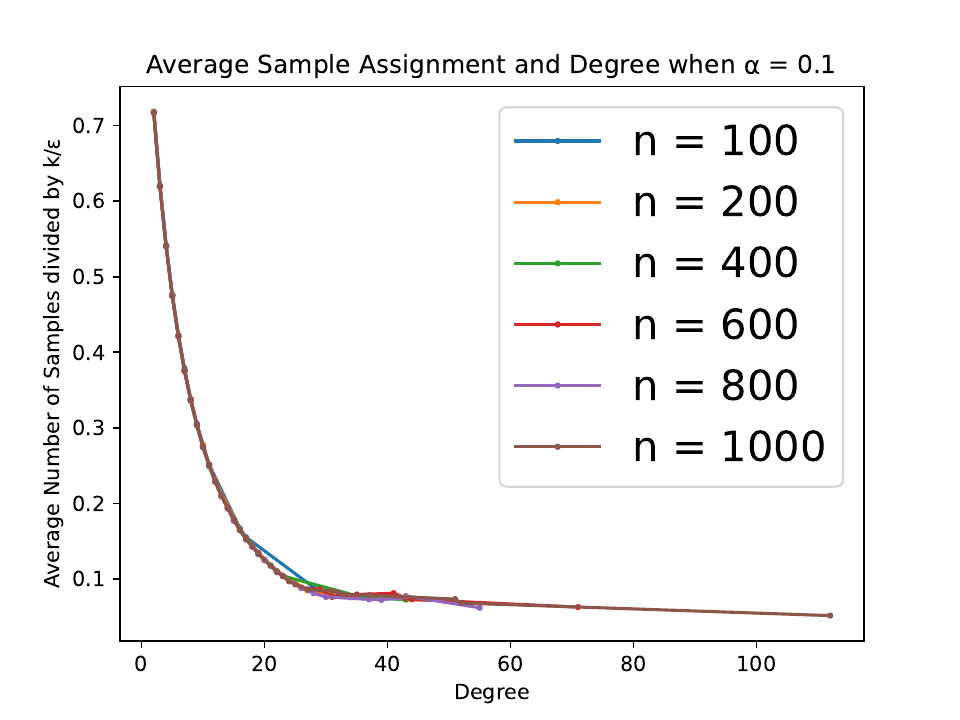}}
    \subfigure[SF Variance]{ \label{PL_VAR} \includegraphics[width = 0.48\linewidth]{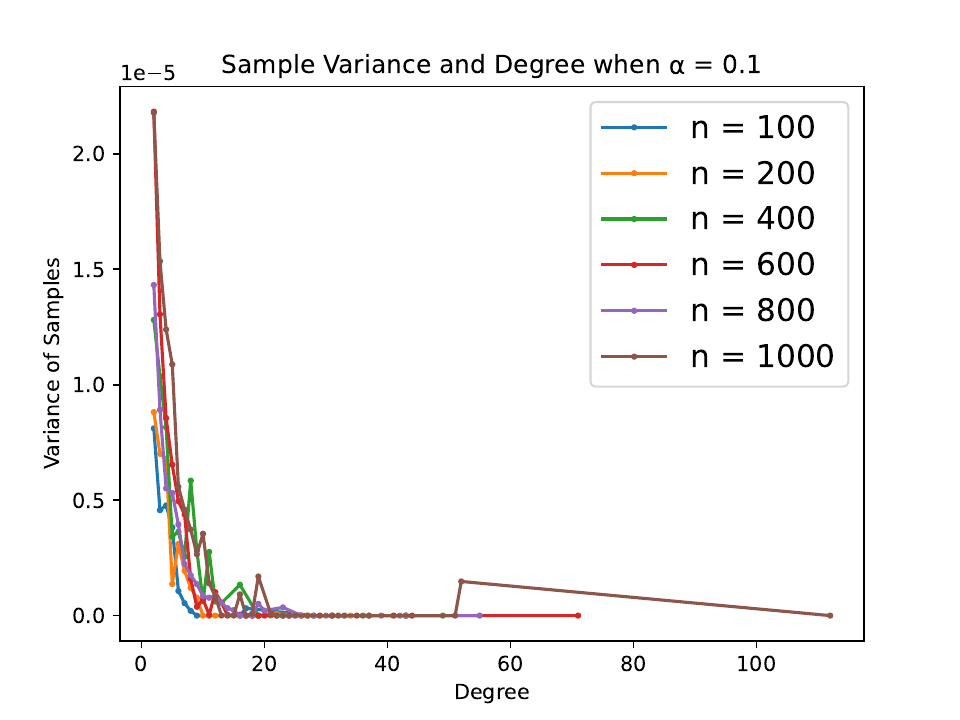}}
\end{minipage}

\begin{minipage}[t]{\linewidth}
    \subfigure[ER Average]{ \label{ER_AVG} \includegraphics[width = 0.48\linewidth]{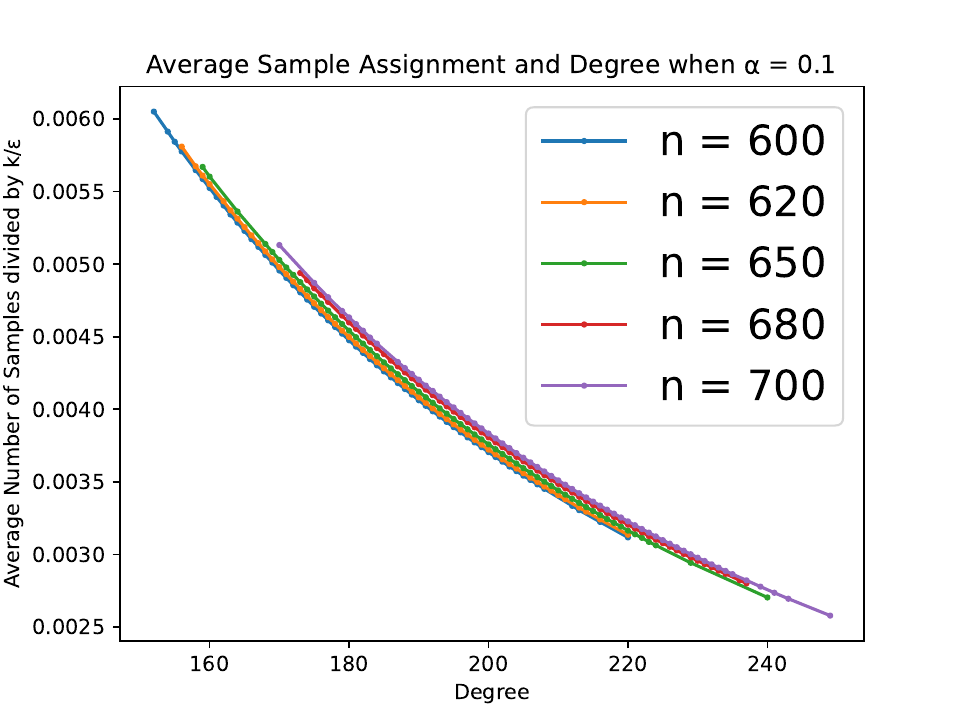}}
    \subfigure[ER Variance]{ \label{ER_VAR} \includegraphics[width = 0.48\linewidth]{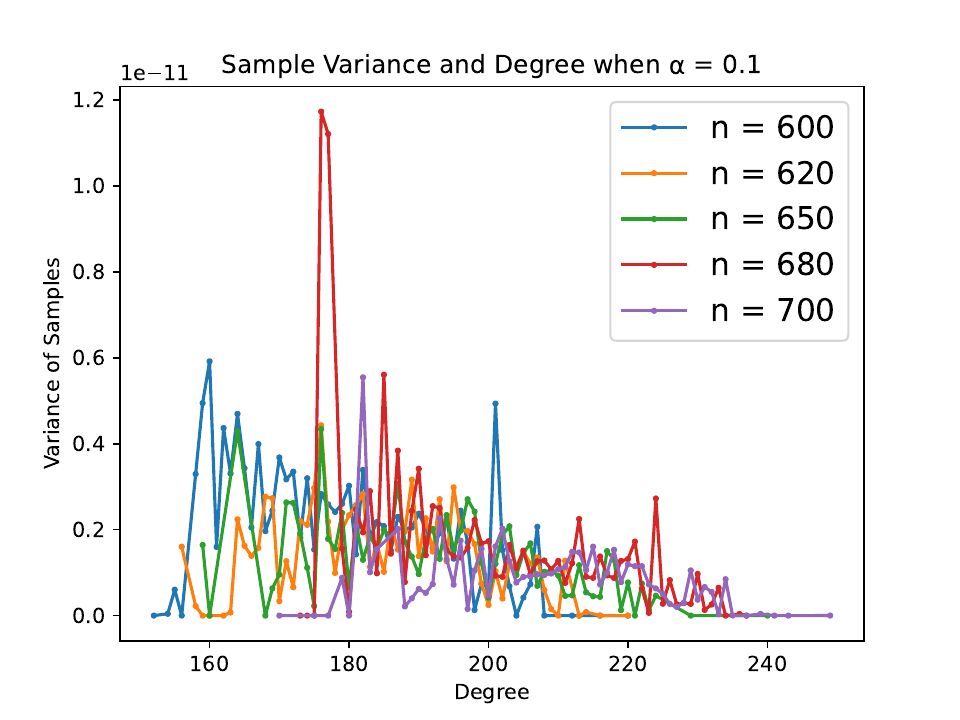}}
    \label{ER_TSC}
\end{minipage}
\begin{minipage}[t]{\linewidth}
     \subfigure[RN Average]{ \label{Real_AVG} \includegraphics[width = 0.48\linewidth]{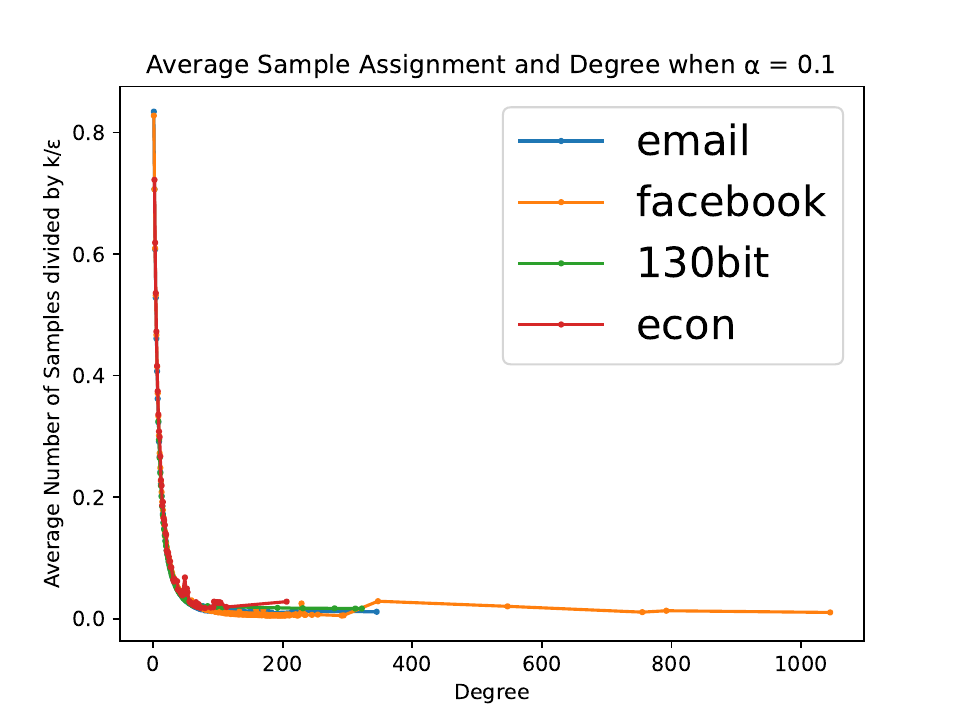}}
    \subfigure[RN Variance]{ \label{Real_VAR} \includegraphics[width = 0.48\linewidth]{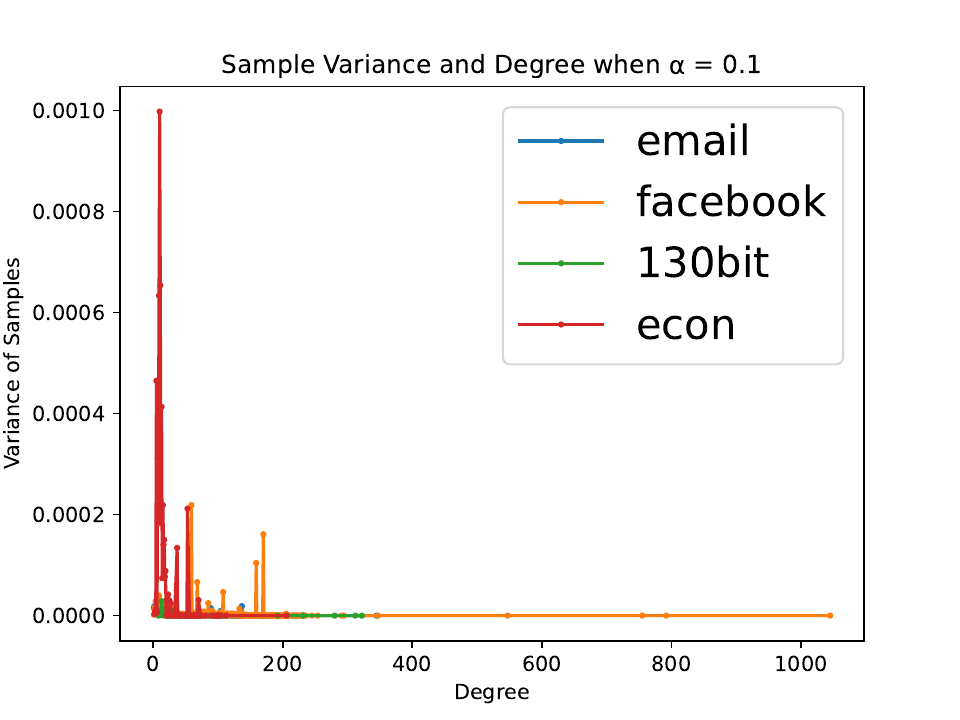}}
    \label{real_TSC}
\end{minipage}
\vspace{-2mm}

\caption{
\small
Relationship between sample distribution and degree for different networks when $\alpha = 1$.
The x-axis is node degree and the y-axis is the average or variance of $N^d = \{\frac{\epsilon m_i^*}{k}: d_i = d, i \in [n]\}$ where $m_i^*, \forall i \in [n]$ is the solution of Equation \ref{eq:opt}.
}
\label{fig:All figures, sample/degree}
\end{figure}
\vspace{-3.5mm}

\vspace{-3.5mm}
\begin{figure}[H]

\begin{minipage}[t]{0.3333\linewidth}
    \subfigure[\scriptsize SF with random $v_{ij}$s]{\label{T_PL}  \includegraphics[width =\linewidth]{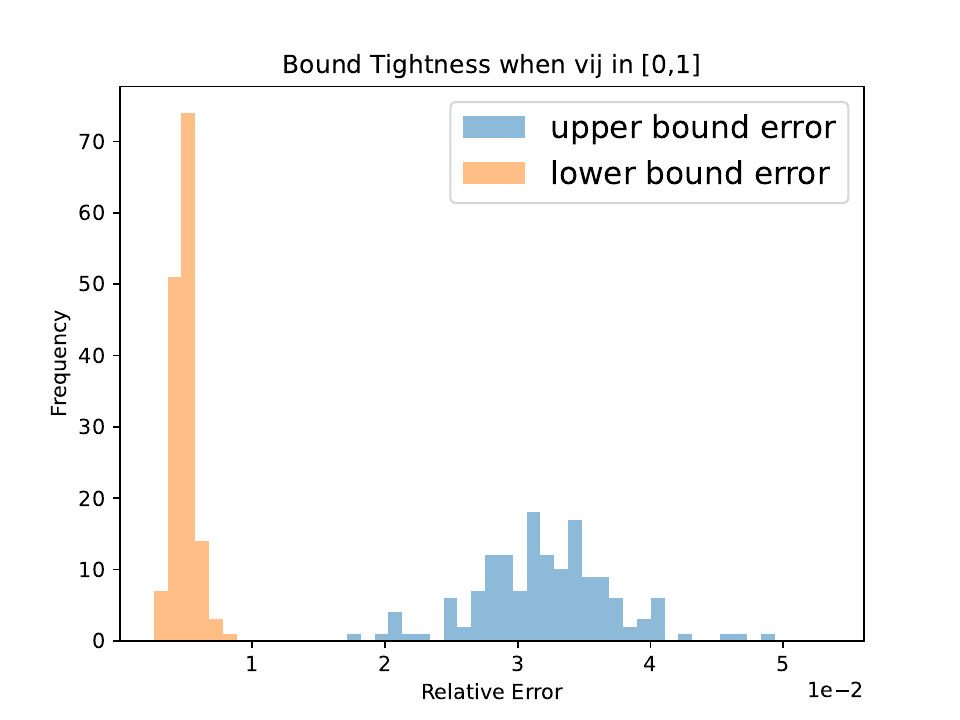}}
\end{minipage}%
\begin{minipage}[t]{0.3333\linewidth}
     \subfigure[\scriptsize ER with random $v_{ij}$s]{\label{T_ER} \includegraphics[width = \linewidth]{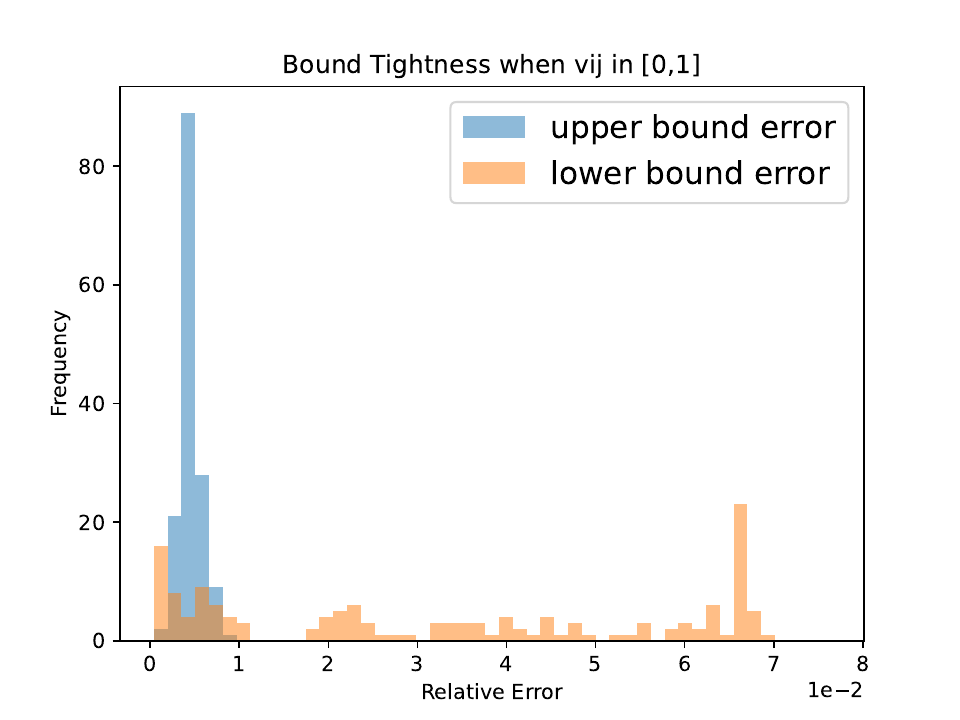}}
\end{minipage}%
\begin{minipage}[t]{0.3333\linewidth}
     \subfigure[\scriptsize RR with random $v_{ij}$s]{ \label{T_RR} \includegraphics[width = \linewidth]{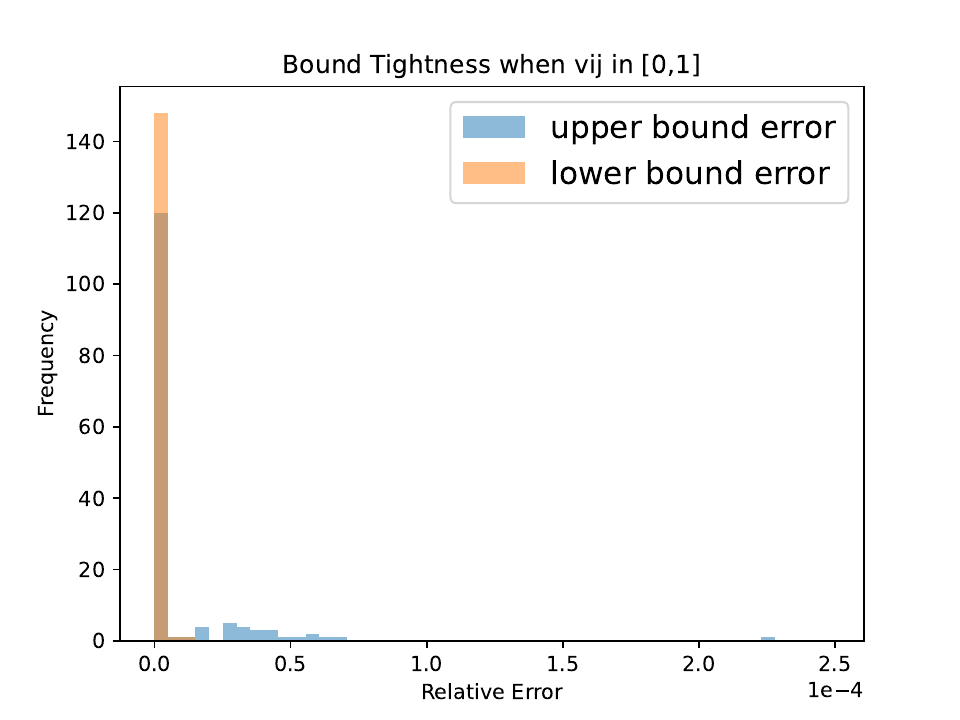}}
\end{minipage}

\vspace{-2mm}
\caption{\small Bound tightness for synthetic networks for random $v_{ij}$s. The x-axis is the relative errors of upper/lower bounds and the y-axis is the frequency.}
\label{Tight_special}
\end{figure}



\vspace{-3.5mm}

For real-world networks, we generate 50 different $v_{ij} \in [0,1]$ and denote the maximum relative error of upper/lower bounds as $err_u/err_l$. From Table \ref{real_network_table}, the relative errors are still relatively small for real-world networks, showing the tightness of our bounds.

\textbf{Network Gain (Table \ref{tab:graph-gain} and Lemma \ref{lem:regular}). } We empirically demonstrate the network gain results for random $d$-regular graphs (Lemma \ref{lem:regular}). To calculate the network gain, we replace the $TSC(G, \pmb{v}, k, \epsilon)$ in network gain (Equation \ref{eq:NetG}) by its upper bound $\sum_{i=1}^n m_i^* + nk + n$. This allows us to calculate a lower bound of the network gain, which aligns with the trend of lower bound in Lemma \ref{lem:regular}. We set $k = 5$ and $\epsilon = 0.01$, making $\frac{k}{\epsilon} = 500$. Additionally, we set $\alpha = 1$ and consider degree ranging from $2$ to $10$. We choose different number of nodes $n$ such that degree $d \le \sqrt{n}$, which implies a $d^2$ gain in theory (4$^{th}$ row of Table \ref{tab:graph-gain} and Lemma \ref{lem:regular}). The empirical result, shown in Figure \ref{fig: RR_Gain}, exactly confirms the gain is $\Theta(d^2)$ in our setup. Moreover, we also found that when degree $d = 11$, the gain becomes infinity. This is because $\frac{k}{d^2\epsilon} = \frac{500}{121} < k = 5$. According to the 4$^{th}$ row of Table \ref{tab:graph-gain} (Lemma \ref{lem:regular}), in this scenario, it is sufficient for each node to have $k$ samples, resulting in an infinite network gain.

\vspace{-3.5mm}
\begin{figure}[H]
\centering
\includegraphics[width = 0.75\linewidth]{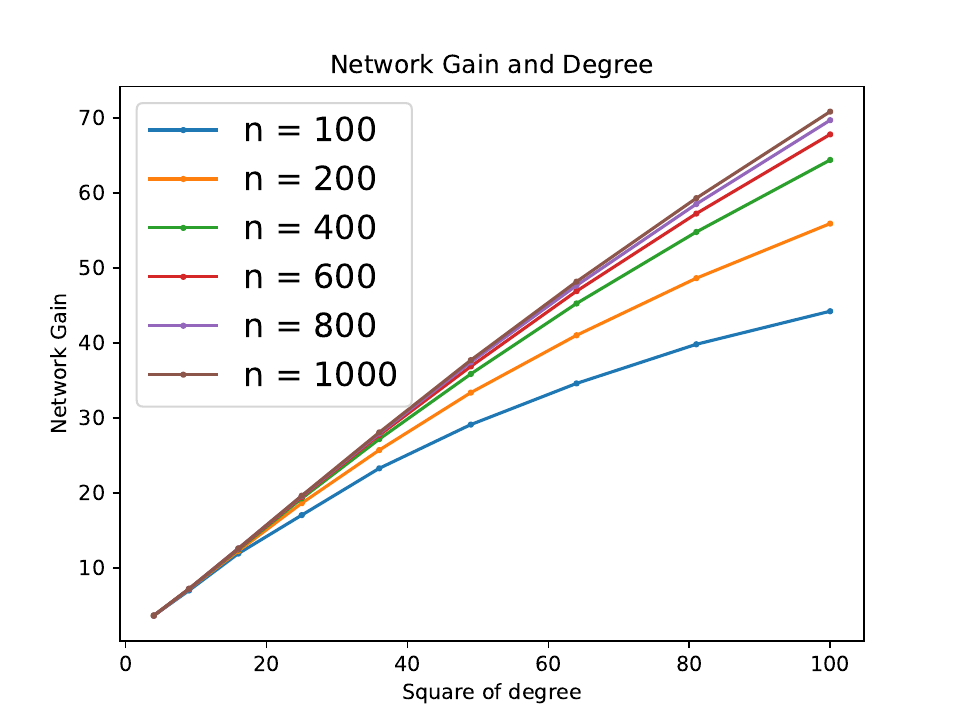}
\vspace{-2mm}

\caption{\small Network Gain of RR. The x-axis is the square of degree and the y-axis is a lower bound of network gain.}
\label{fig: RR_Gain}
\end{figure}
\vspace{-4mm}

\textbf{Impact of influence factors.}
When $\alpha$ is small, sample assignment is mostly independent of $n$; however, for large $\alpha$, increasing $n$ leads to fewer samples for nodes of the same degree. Additionally, the bounds in Theorem \ref{thm: general tight bound} get worse as influence factors increase. See \cite{liu2023sample} for details.




\vspace{-3mm}
\section{Discussion and future work}
\label{sec:discussion}

We initiate a study on the total sample complexity (TSC) required for opinion formation over networks. By extending the standard opinion formation game into a machine-learning framework, we characterize the minimal samples needed to ensure low generalization error for equilibrium models. We provide tight bounds on TSC and analyze sample distributions, showing that network structure significantly reduces TSC, with an inverse relationship to node degree being an interesting result.

Our formulation opens several research directions. We currently focus on (generalized) linear regression, but it’s worth investigating if other problems like kernel methods or soft SVMs. Moreover, it turns out that our problem formulation is similar to the best arm identification problem for multi-armed bandits with fixed confidence~\citep{karnin2013almost}. It is interesting to extend our results analog to the fixed budget setting. Namely, assuming the number of samples is fixed, how to assign them to minimize the total error? Another future direction is to consider individual-level incentives for sample collection, building on the results of~\cite{blum2021one}.

\clearpage
\section*{Acknowledgments}
RR and OW were partially supported by NSF grant CCF-2335187.
AV was partially supported by NSF grants CCF-1918656, IIS-1955797, CNS 2317193 and IIS 2331315. RS was  partially supported by NSF grant D-ISN 2146502.

\bibliography{references}


\newpage
\onecolumn 
\appendix

\appendixpage

{
\startcontents[section]
\printcontents[section]{l}{1}{\setcounter{tocdepth}{2}}
}

\newpage

\section{Additional details on preliminaries}
\label{sec:prelim-appendix}

\begin{table}[htb!]
\centering
\begin{tabular}{|p{1in}|p{5in}|}
\hline
$D$ & Data distribution\\
\hline
$k$ & Dimension of $D$\\ \hline
$S_i$ & Set of samples for agent $i$; $S_i= \{(x_{i}^j,
y_{i}^j)\}_{j \in [m_i]}$ \\ \hline
$m_i$ & Number of samples for agent $i$\\ \hline
$\Bar{\theta}_i$ & Initial model for agent $i$ (also known as ``internal opinion'')\\ \hline
$\theta_i$ & Model learned by agent $i$ in opinion formation game\\ \hline
$N(i), d_i$ & Set of neighbors and degree of node $i$, respectively\\ \hline
$\theta^{eq}_i$ & Model for agent $i$ in the Nash equilibrium\\ \hline
\makecell[l]{ \\ $\widetilde{M}(k, \epsilon)$} & Minimum number of samples that any agent
$i$ would need to ensure that the best model $\bar{\theta}_i$ learned
using only their samples ensures that the expected error is at most
some target $\epsilon$ (bounded in Theorem~\ref{thm:single-error} for linear regression)\\ \hline
$a_{ij}$ & We denote $a_{ij} = (W^{-1})_{ij}$ for all $i,j \in [n]$ to simplify notations. \\ \hline
$b_{ij}$ &  We denote $b_{ij} = [(W^{-1} \circ W^{-1})^{-1}]_{ij}$ for all $i,j \in [n]$.\\
\hline
\end{tabular}
\caption{Notation used in the paper}
\label{tab:notation-appendix}
\end{table}
\section{Missing proof in Section \ref{sec:problem formulation}}
\label{app:formulation proof}

\subsection{Missing proof in Section \ref{sec:problem formulation}}
\label{app:OF}
\textbf{Lemma \ref{lemma:potential}.} The unique Nash equilibrium $\pmb{\theta^{eq}} = (\theta_1^{eq}, \cdots, \theta_n^{eq})^\top$ of the above game is $\pmb{\theta^{eq}} = W^{-1}\pmb{\Bar{\theta}}$ where $W_{ij} = \left\{ 
\begin{array}{ll}
       \sum_{j\in N(i)}v_{ij} + 1 & j=i \\
       -v_{ij}  & j \in N(i)\\
       0  & j \notin N(i), j \neq i
\end{array}
\right.$ and $\pmb{\Bar{\theta}} = (\Bar{\theta}_1, \cdots, \Bar{\theta}_n)$. When all $v_{ij} = \alpha \ge 0$, $W_{ij} = \left\{ 
\begin{array}{ll}
       \alpha D + 1 & j=i \\
       -\alpha  & j \in N(i)\\
       0  & j\notin N(i), j\neq i
\end{array}\right.$. Furthermore, we have $\sum_{j=1}^n W^{-1}_{ij} = 1$ and $W^{-1}_{ij} \ge 0$ for all $i,j \in [n]$. 

\begin{proof}
For any agent $i$, given $\theta_j(j \in N(i))$, by setting the gradient of the loss function to zero, we have the best response for $i$ is $\theta_i = \frac{\Bar{\theta}_i + \sum_{j \in N(i)} v_{ij} \theta_j}{1 + \sum_{j \in N(i)}v_{ij}}$. 
An action profile $\pmb{\theta^{eq}} = (\theta_1^{eq}, \cdots, \theta_n^{eq})^\top$ is a Nash equilibrium if and only if for any $i$, $\theta_i^{eq} = \frac{\Bar{\theta}_i + \sum_{j \in N(i)} v_{ij} \theta_j^{eq}}{1 + \sum_{j \in N(i)}v_{ij}}$, which is equivalent to $(\sum_{j \in N(i)}v_{ij} + 1)\theta_i^{eq} -  \sum_{j \in N(i)} v_{ij}\theta_j^{eq} = \Bar{\theta}_i $ for any $i$.  Combining all these equations, we have $W \pmb{\theta}^{eq} = \pmb{\Bar{\theta}}$ where $W_{ij} = \left\{ 
\begin{array}{ll}
       \sum_{j \in N(i)}v_{ij} + 1 & j=i \\
       -v_{ij}  & j \in N(i)\\
       0  & j\notin N(i), j\neq i
\end{array}
\right.$ Note that $W =  \mathcal{L} + I_n$ where $\mathcal{L}$ is the weighted Laplacian matrix of the graph $G$ and $I_n$ is the $n \times n$ identity matrix. Since $\mathcal{L} $ is positive semi-definite, $W$ is positive definite. Thus $W$ is invertible and $\pmb{\theta}^{eq} = W^{-1}\pmb{\bar{\theta}}$. Given the graph structure, $\pmb{\theta}^{eq}$ is unique and thus the Nash equilibrium is unique.  When all $v_{ij} = \alpha$, we can simplify $W_{ij} = \left\{ 
\begin{array}{ll}
       \alpha D + 1 & j=i \\
       -\alpha  & j \in N(i)\\
       0  & j\notin N(i), j\neq i
\end{array}\right.$ by direct calculation.

Since $v_{ij} > 0$, every non-diagonal element in $W$ is negative. On the other hand, since $W =  \mathcal{L} + I_n$ where $\mathcal{L}$ is the weighted Laplacian matrix of $G$, $W$ is symmetric and positive definite. Thus $W$ is a Stieltjes matrix. From \citet{young2014iterative}, the inverse of any Stieltjes matrix is a non-negative matrix. Thus, $W^{-1}$ is a non-negative matrix.

Let $\bar{1}$ be the vector in which all elements are $1$.  Since $W\bar{1} = \bar{1}$, we have $\bar{1} = W^{-1}W \bar{1} = W^{-1}\bar{1}$.
Thus, $\sum_{j = 1}^n (W^{-1})_{ij} = 1$ for any $i$ and $(W^{-1})_{ij} \in [0,1]$.
Since the best response update for $i$ is $\theta_i = \frac{\Bar{\theta}_i + \sum_{j \in N(i)} v_{ij} \theta_j}{1 + \sum_{j \in N(i)}v_{ij}}$, the final weight $(W_{ij})^{-1} = 0$ if and only if $i$ and $j$ are not connected. Since we consider a connected network(or we can consider different connected components separately), we have $W_{ij}^{-1} > 0$. Thus, $W_{ij}^{-1}$s are normalized weights.
\end{proof}

\subsection{Missing proofs in Section \ref{sec:error bounds}}
\label{app:general setting}

We consider a slightly general case here, where 
every agent $i$'s dataset $S_i = \{(x_{i}^j, y_{i}^j)\}_{j \in
  [m_i]}$ has $m_i \ge k$ samples, where $x_i^j \in \mathbb{R}^k, \,\, \forall j
\in [m_i]$ are independently drawn from distribution $D$. We assume there is a
global ground-truth model $\theta^*$ such that $y_i^j = f(\theta^*, x_i^j) + \eta_i$ where $\eta_i \in H$ is an unbiased noise with bounded variance $\sigma^2$. We further assume $f$ has linear structure over $\theta$, which means $f(\lambda_1\theta_1 + \lambda_2\theta_2, x) = \lambda_1 f(\theta_1, x) + \lambda_2 f(\theta_2, x)$. This capture the case of generalized linear regression \cite{audibert2010linear} where $f(\theta, x) = \phi(x)^\top \theta$ and $\phi$ is an arbitrary mapping function.

We further assume every agent learns a unbiased estimator $\Bar{\theta}_i$ from their dataset $S_i$ such that $\E_{S_i}\left[\Bar{\theta}_i\right] = \theta^*$. Note that in ordinary linear regression $(X_i^\top X_i)^{-1}X_i^\top Y_i$ is an unbiased estimator. In generalized linear regression, $(\Phi_i^\top \Phi_i)^{-1}\Phi_i^\top Y_i$ is unbiased where $\Phi_i = [\phi(x_i^1), \cdots, \phi(x_i^{m_i})]$.

We extend the definition of $L(\cdot)$ as follows.
\begin{align*}
    &L(\theta_i^{eq}) = \sup_{\eta_1, \cdots, \eta_n \in \mathcal{N} }\E_{x \sim D, \forall j, S_j \sim D^{m_j}(\eta_j)}\left[\left(f(\theta_i^{eq}, x) - f(\theta_i^*, x) \right)^2\right]\\
     &L(\Bar{\theta}_i) = \sup_{\eta_i \in \mathcal{N}}\E_{x \sim D, S_i \sim D^{m_i}(\eta_i)}\left[\left(f(\Bar{\theta}_i, x) - f(\theta_i^*, x) \right)^2\right].
\end{align*}

\noindent \textbf{Theorem \ref{thm: one to all}. } If Assumption \ref{assum: non-degenerate} holds, then for every agent $i$, 
\[
L(\theta_i^{eq}) = \sum_{j=1}^n \left(W_{ij}^{-1}\right)^2 L(\Bar{\theta}_j)
\]
where $\theta_i^{eq}$ and matrix $W$ is defined in Lemma \ref{lemma:potential}. Moreover, if we consider linear regression and the condition in Theorem \ref{thm:single-error} holds, we have
\begin{align*}
    L(\theta_i^{eq}) = \Theta\left( k \sum_{j=1}^n \frac{\left(W_{ij}^{-1}\right)^2}{m_j} \right).
\end{align*}

\begin{proof}
To simplify notation, we define $w_{ij} = W_{ij}^{-1}$ in our proof. By Lemma \ref{lemma:potential}, we have $\sum_{j=1}^n w_{ij} = 1$ and $w_{ij} \ge 0$ for all $i,j \in [n]$. For any agent $i$ and any noise $\eta_1, \cdots, \eta_n \in \mathcal{N}$, we have

\begin{align*}
&\E_{x \sim D, \forall j, S_j \sim D^{m_j}(\eta_j)}\left[\left(f\left(\theta_i^{eq}, x\right) - f(\theta^*, x)\right)^2\right]
\\&=\E_{x \sim D, \forall j, S_j \sim D^{m_j}(\eta_j)}\left[\left(f\left(\sum_{j=1}^n w_{ij}\Bar{\theta}_j, x\right) - f(\theta^*, x)\right)^2\right]
\\&= \E_{x \sim D, \forall j, S_j \sim D^{m_j}(\eta_j)}\left[\left(\sum_{j=1}^n w_{ij}f\left(\Bar{\theta}_j, x\right) - f(\theta^*, x)\right)^2\right] \tag{$f(\theta,x)$ is linear for $\theta$}
\\&= \E_{x \sim D, \forall j, S_j \sim D^{m_j}(\eta_j)}\left[\left(\sum_{j=1}^n w_{ij}\left(f\left(\Bar{\theta}_j, x\right) - f\left(\theta^*, x\right)\right)\right)^2\right] \tag{$\sum_{j=1}^n w_{ij} = 1$}
\\&=\mathbb{E}_{x \sim D, \forall j, S_j \sim D^{m_j}(\eta_j)} \left[ \sum_{j=1}^n w_{ij}^2 \left(f\left(\Bar{\theta}_j, x\right) - f\left(\theta^*, x\right)\right)^2 \right] 
\\ &\qquad \quad + \underbrace{\sum_{j=1}^n \sum_{k\neq i} w_{ij}w_{ik} \mathbb{E}_{x \sim D, \forall j, S_j \sim D^{m_j}(\eta_j)}\left[\left(f\left(\Bar{\theta}_j, x\right) - f\left(\theta^*, x\right)\right)\left(f\left(\Bar{\theta}_k, x\right) - f\left(\theta^*, x\right)\right)\right]}_{=0}
\\&= \sum_{j=1}^n w_{ij}^2\mathbb{E}_{x \sim D,  S_j \sim D^{m_j}(\eta_j)} \left[ \left(f\left(\Bar{\theta}_j, x\right) - f\left(\theta^*, x\right)\right)^2 \right] \tag{$\Bar{\theta}_i, \,\, \forall i \in [n]$ are unbiased estimators}
\end{align*}
Thus, we have
\begin{align*}
    L(\theta_i^{eq}) &= \sup_{\eta_1, \cdots, \eta_n \in \mathcal{N}} \E_{x \sim D, \forall j, S_j \sim D^{m_j}(\eta_j)}\left[\left(f(\theta_i^{eq}, x) - f(\theta^*, x) \right)^2\right]
    \\&= \sup_{\eta_1, \cdots, \eta_n \in \mathcal{N}} \sum_{j=1}^n w_{ij}^2\mathbb{E}_{x \sim D,  S_j \sim D^{m_j}(\eta_j)} \left[ \left(f\left(\Bar{\theta}_j, x\right) - f\left(\theta^*, x\right)\right)^2 \right] 
    \\&= \sum_{j=1}^n w_{ij}^2 \sup_{\eta_j \in \mathcal{N}} \mathbb{E}_{x \sim D,  S_j \sim D^{m_j}(\eta_j)} \left[ \left(f\left(\Bar{\theta}_j, x\right) - f\left(\theta^*, x\right)\right)^2 \right]  \tag{independece of noises}
    \\&= \sum_{j=1}^n w_{ij}^2 L(\Bar{\theta}_j).
\end{align*}
For linear regression and the given condition in Theorem \ref{thm:single-error}, we directly have
\begin{align*}
    L(\theta_i^{eq}) = \Theta\left( k \sum_{j=1}^n \frac{\left(W_{ij}^{-1}\right)^2}{m_j} \right).
\end{align*}
Note that we can also get the counterpart of Theorem \ref{thm:single-error} for generalized linear regression. The only thing we need to change is that 
Assumption \ref{assum: non-degenerate} and Assumption \ref{assum: small-ball} should be taken for the distribution of $\phi(x)$ rather than $x$.
\end{proof}
\noindent \textbf{Theorem \ref{thm:general_optimal}. }
For any $\epsilon >  0$, let $(m_i^*, i=1,\ldots,n)$ denote an optimal solution of the following optimization problem as a measure of the minimum samples for opinion formation on graph $G$ with influence factor $\pmb{v}$.

\begin{align*}
    \min_{m_1, \cdots, m_n} \quad &\sum_{i=1}^n m_i \nonumber\\
    \textrm{s.t.} \quad & \sum_{j=1}^n \frac{(W_{ij}^{-1})^2}{m_j} \le \frac{\epsilon}{k}, \,\, \forall i \label{eq:opt} \\
    &m_i > 0, \,\, \forall i \nonumber
\end{align*}
where $W$ is defined in Lemma \ref{lemma:potential}.
Then, $TSC(G, \pmb{v}, k, \epsilon) = \Theta(\sum_{i=1}^n m_i^* + nk)$. Moreover, given $G$ and  $\pmb{v}$, $\frac{\epsilon m_i^\star}{k}$ is a fixed value for any $k$ and $\epsilon$. If $\frac{\epsilon m_i^\star}{k} \ge \epsilon$ for every $i \in [n]$,
then $TSC(G, \pmb{v}, k, \epsilon) = \Theta(\sum_{i=1}^n m_i^*)$.

\begin{proof}
We first show a tight characterization of TSC. By Theorem \ref{thm: one to all}, there exists constant $c_1,c_2$ such that
\begin{equation} 
   c_1 k\sum_{j=1}^n \frac{\left(W_{ij}^{-1}\right)^2}{m_i} \le   L(\theta_i^{eq}) \le c_2 k\sum_{j=1}^n \frac{\left(W_{ij}^{-1}\right)^2}{m_i}\label{eq:c1c2eq}
\end{equation}
By definition, $TSC(G, \pmb{v}, k, \epsilon)$ is the solution of Equation \ref{eq:real TSC}
\begin{align}
    \min_{m_1, \cdots, m_n} \quad &\sum_{i=1}^n m_i \nonumber\\
    \textrm{s.t.} \quad &  L(\theta_i^{eq}) \le \epsilon , \,\, \forall i \label{eq:real TSC}\\
    &m_i \ge k, \,\, \forall i \nonumber \\
    &m_i \in \mathbb{Z}^+ \nonumber
\end{align}
We consider the following two optimizations. Denote the solution of Equation \ref{eq:TSC_L} as $\text{Opt-L}(G, \pmb{v}, k, \epsilon)$ and the solution of Equation \ref{eq:TSC_U} as $\text{Opt-U}(G, \pmb{v}, k, \epsilon)$.

\begin{minipage}{0.5\linewidth}
\begin{align}
    \min_{m_1, \cdots, m_n} \quad &\sum_{i=1}^n m_i \nonumber\\
    \textrm{s.t.} \quad & c_1 k\sum_{j=1}^n \frac{\left(W_{ij}^{-1}\right)^2}{m_i}  \le \epsilon , \,\, \forall i \label{eq:TSC_L}\\
    &m_i \ge k, \,\, \forall i \nonumber \\
    &m_i \in \mathbb{Z}^+ \nonumber
\end{align}
\end{minipage}%
\begin{minipage}{0.5\linewidth}
\begin{align}
    \min_{m_1, \cdots, m_n} \quad &\sum_{i=1}^n m_i \nonumber\\
    \textrm{s.t.} \quad &  c_2 k\sum_{j=1}^n \frac{\left(W_{ij}^{-1}\right)^2}{m_i}  \le \epsilon , \,\, \forall i \label{eq:TSC_U}\\
    &m_i \ge k, \,\, \forall i \nonumber \\
    &m_i \in \mathbb{Z}^+ \nonumber
\end{align}
\end{minipage}
From Equation \ref{eq:c1c2eq}, we have
\begin{align}
    \text{Opt-L}(G, \pmb{v}, k, \epsilon) \le TSC(G, \pmb{v}, k, \epsilon) \le \text{Opt-U}(G, \pmb{v}, k, \epsilon).
\label{eq: TSC-LG}
\end{align}

\noindent We restate Equation \ref{eq:opt} here and denote the solution as $m_i^\star$ for every $i \in [n]$.

\begin{align}
    \min_{m_1, \cdots, m_n} \quad &\sum_{i=1}^n m_i \nonumber\\
    \textrm{s.t.} \quad & \sum_{j=1}^n \frac{(W_{ij}^{-1})^2}{m_j} \le \frac{\epsilon}{k}, \,\, \forall i \nonumber \\
    &m_i > 0, \,\, \forall i \nonumber
\end{align}
By defining $m_i' = \frac{\epsilon m_i}{k}$, Equation \ref{eq:opt} could be reformulated to 
\begin{align}
    \min_{m_1', \cdots, m_n'} \quad &\sum_{i=1}^n m_i' \nonumber\\
    \textrm{s.t.} \quad & \sum_{j=1}^n \frac{(W_{ij}^{-1})^2}{m_j'} \le 1, \,\, \forall i \label{eq:first-re}\\
    &m_i' > 0, \,\, \forall i \nonumber
\end{align}
Thus, $\frac{\epsilon m_i^*}{k}$ is a fixed value for any $k$, $\epsilon$. This also implies that if $\frac{\epsilon}{k}$ is replaced by $\frac{1}{c}\cd\frac{\epsilon}{k}$ for some constant $c$ in Equation \ref{eq:opt}, the solution will become $cm_i^*$. Thus, 
\begin{equation}
     \text{Opt-L}(G, \pmb{v}, k, \epsilon) \ge c_1 \sum_{i \in [n]} m_i^*
\label{eq:OPT-LL}
\end{equation}
Since $\text{Opt-L}(G, \pmb{v}, k, \epsilon) \ge nk$, we have
\begin{align}
    \text{Opt-L}(G, \pmb{v}, k, \epsilon) \ge \frac{c_1}{2}\sum_{i \in [n]} m_i^* + \frac{nk}{2}
\label{eq:OPT-LLm}
\end{align}
On the other hand, since $\max\left\{\lceil c_2 m_i^\star \rceil, k \right\}, \,\, \forall i \in [n]$ is always a solution of Equation \ref{eq:TSC_U}, we have
\begin{equation}
     \text{Opt-U}(G, \pmb{v}, k, \epsilon) \le \sum_{i \in [n]} \max\left\{\lceil c_2 m_i^\star \rceil, k \right\} \le  c_2 \sum_{i \in [n]} m_i^* + nk + n
\label{eq:OPT-UU}
\end{equation}
Combing Equation \ref{eq: TSC-LG}, Equation \ref{eq:OPT-LL}, and Equation \ref{eq:OPT-LLm}, we have $TSC(G, \pmb{v}, k, \epsilon) = \Theta(\sum_{i \in [n]} m_i^\star + nk)$. Assigning $\max\left\{\lceil c_2 m_i^\star \rceil, k \right\}$ samples to every agent $i \in [n]$ ensures $L(\theta^{eq}) \le \epsilon$.

If $\frac{\epsilon m_i^\star}{k} \ge \epsilon$ for every $i \in [n]$, then  $m_i^* \ge k$ and $\sum_{i \in [n]}m_i^* \ge nk$. Thus, $TSC(G, \pmb{v}, k, \epsilon) = \Theta(\sum_{i=1}^n m_i^*)$. From experiments in Section \ref{sec:experiment}, we can observe that for many networks,  $\frac{\epsilon m_i^\star}{k} \ge 0.001$, thus, setting $\epsilon = 0.001$ suffices for $TSC(G, \pmb{v}, k, \epsilon) = \Theta(\sum_{i=1}^n m_i^*)$ in these networks.

\end{proof}
\section{Missing proof in Section \ref{sec:special}}
\label{app:special proof}

\subsection{Dual form of Equation \ref{eq:opt}}
\begin{lemma}
The dual form of Equation \ref{eq:opt} is 
\begin{align}
    \max_{\lambda_1, \cdots, \lambda_n} \quad &2\sum_{i=1}^n \sqrt{\sum_{j=1}^n \lambda_j (W_{ij}^{-1})^2} - \frac{\epsilon}{k}\sum_{i=1}^n \lambda_i \label{eq:dual-eq}\\
    \textrm{s.t.} \quad & \lambda_i \ge 0, \forall i. \nonumber
\end{align}
Moreover, strong duality holds, which implies the solution of Equation \ref{eq:opt} is equal to Equation \ref{eq:dual-eq}.
\label{lem:dual}
\end{lemma}

\begin{proof}
The Lagrangian of optimization Equation \ref{eq:opt} is 
\begin{equation*}
    L = \sum_{i=1}^n m_i + \sum_{i = 1}^n \lambda_i \left(\sum_{j=1}^n \frac{w_{ij}^2}{m_j} - \frac{\epsilon}{k}\right) \qquad (\lambda_i \ge 0) 
\end{equation*} in domain $m_i > 0, \forall i$. When $\lambda_i = 0$ for all $i$, we have $\inf\limits_{m_1, \cdots, m_n} L = 0$. When $\exists \lambda_i \neq 0$, from $\frac{\partial L}{\partial m_i} = 1 - \sum_{j=1}^n \lambda_j \frac{w_{ji}^2}{m_i^2}$, the Lagrangian function gets mininum value when $m_i = \sqrt{\sum_{j=1}^n \lambda_j w_{ij}^2} > 0$. Now \begin{align*}
\min\limits_{m_1, \cdots, m_n} L &= \sum_{i=1}^n \sqrt{\sum_{j=1}^n \lambda_j w_{ij}^2} + \sum_{i=1}^n \lambda_i (\sum_{j=1}^n \frac{w_{ij}^2}{\sqrt{\sum_{k=1}^n \lambda_k w_{kj}^2}} - \frac{\epsilon}{k})
\\&= \sum_{i=1}^n \sqrt{\sum_{j=1}^n \lambda_j w_{ij}^2} + \sum_{j=1}^n(\frac{\sum_{i=1}^n \lambda_iw_{ij}^2}{\sqrt{\sum_{k=1}^n \lambda_k w_{kj}^2}}) - \frac{\epsilon}{k}\sum_{i=1}^n \lambda_i
\\&= 2\sum_{i=1}^n \sqrt{\sum_{j=1}^n \lambda_j w_{ij}^2} - \frac{\epsilon}{k}\sum_{i=1}^n \lambda_i
\end{align*}
Thus when $\exists \lambda_i \neq 0$, the maxmin problem of Lagrangian is 
\begin{align*}
    \max_{\lambda_1, \cdots, \lambda_n} \quad &2\sum_{i=1}^n \sqrt{\sum_{j=1}^n \lambda_j w_{ij}^2} - \frac{\epsilon}{k}\sum_{i=1}^n \lambda_i\\
    \textrm{s.t.} \quad & \lambda_i \ge 0, \forall i  \quad \exists \lambda_i \neq 0
\end{align*}
Note that when all $\lambda_i = 0$, this optimization is $0$ which matches the infimum of the Lagrangian function when all $\lambda_i = 0$. Thus we can remove the constraint $\exists \lambda_i \neq 0$. The dual problem of optimization Equation \ref{eq:opt} is:
\begin{align*}
    \max_{\lambda_1, \cdots, \lambda_n} \quad &2\sum_{i=1}^n \sqrt{\sum_{j=1}^n \lambda_j w_{ij}^2} - \frac{\epsilon}{k}\sum_{i=1}^n \lambda_i\\
    \textrm{s.t.} \quad & \lambda_i \ge 0, \forall i
\end{align*}
Since both the objective function and the feasible region of Equation \ref{eq:opt} is convex, following the Slater condition \citep{boyd2004convex}, strong duality holds because there is a point that strictly satisfies all constraints. 
\end{proof}

\subsection{Proof of Theorem \ref{thm: degree bound}}
To simplify notations, assume $a_{ij} = (W^{-1})_{ij}$ for all $i,j \in [n]$. 
We will first list more observations related to $a_{ij}$.

\begin{observation}
$\frac{1}{\alpha d_i + 1}\le a_{ii} \le \frac{1}{\frac{\alpha}{\alpha + 1}d_i + 1}$.
\label{aii}
\end{observation}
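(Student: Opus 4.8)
The plan is to extract both inequalities from a single linear identity obtained by reading off the $(i,i)$ entry of $W W^{-1} = I$, and then to invoke two structural facts about $W^{-1}$ already established in the proof of \pref{lem: one to all}: that $W^{-1}$ is entrywise nonnegative (the Stieltjes-matrix argument) and that each of its rows sums to $1$.

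First I would write out $(W W^{-1})_{ii} = 1$ in the uniform-influence case. Using $W_{ii} = \alpha d_i + 1$, $W_{ik} = -\alpha$ for $k \in N(i)$, and $W_{ik} = 0$ otherwise, this reads
\begin{align*}
(\alpha d_i + 1)\,a_{ii} - \alpha \sum_{k \in N(i)} a_{ik} = 1,
\end{align*}
equivalently $(\alpha d_i + 1)\,a_{ii} = 1 + \alpha \sum_{k \in N(i)} a_{ik}$. For the lower bound I would simply discard the nonnegative term: since every $a_{ik} \ge 0$, the right-hand side is at least $1$, and hence $a_{ii} \ge \frac{1}{\alpha d_i + 1}$.

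For the upper bound I would rewrite the same identity to surface the target denominator $\frac{\alpha}{\alpha+1}d_i + 1$. Adding $\alpha a_{ii}$ to both sides gives
\begin{align*}
(\alpha d_i + \alpha + 1)\,a_{ii} = (\alpha d_i + 1)\,a_{ii} + \alpha a_{ii} = 1 + \alpha\Big( a_{ii} + \sum_{k \in N(i)} a_{ik}\Big).
\end{align*}
The bracketed quantity is a partial sum of the $i$-th row of $W^{-1}$ over the index set $\{i\} \cup N(i)$; because all entries are nonnegative and the full row sums to $1$, this partial sum is at most $1$. Therefore $(\alpha d_i + \alpha + 1)\,a_{ii} \le \alpha + 1$, which rearranges to $a_{ii} \le \frac{\alpha+1}{\alpha d_i + \alpha + 1} = \frac{1}{\frac{\alpha}{\alpha+1}d_i + 1}$.

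The whole argument is essentially one computation once the right starting point is chosen, so there is no genuinely hard step; the only subtlety I would flag is recognizing that both bounds flow from the same $(i,i)$-entry identity, with the lower bound using only nonnegativity of $W^{-1}$ and the upper bound additionally exploiting the row-sum-to-one normalization. Since both properties are already available from the earlier analysis, no new spectral or combinatorial machinery is required.
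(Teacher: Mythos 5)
Your proof is correct and follows essentially the same route as the paper: both start from the $(i,i)$ entry of $WW^{-1}=I$, obtain the lower bound by dropping the nonnegative neighbor terms, and obtain the upper bound by combining the same identity with nonnegativity and the row-sum-to-one property of $W^{-1}$. Your rearrangement (adding $\alpha a_{ii}$ to isolate the partial row sum over $\{i\}\cup N(i)$) is algebraically identical to the paper's substitution of $\sum_{j\in N(i)} a_{ij} \le 1 - a_{ii}$, so there is no substantive difference.
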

\begin{proof}
From $WW^{-1} = I$, we have $(d_i\alpha + 1)a_{ii} - \alpha \sum_{j \in N(i)} a_{ij} = 1$. From $a_{ii} + \sum_{j\neq i}a_{ij} = 1$, we have $(d_i\alpha + 1)a_{ii} - \alpha (1-a_{ii}) \le 1$, which is equivalent to $a_{ii} \le \frac{1}{\frac{\alpha}{\alpha + 1}d + 1}$. On the other hand, $(d_i\alpha + 1)a_{ii} \ge 1$,  $a_{ii} \ge \frac{1}{d_i\alpha + 1}$. Thus $\frac{1}{\alpha d_i + 1}\le a_{ii} \le \frac{1}{\frac{\alpha}{\alpha + 1}d_i + 1}$
\end{proof}

\begin{observation}
$a_{ij} \le \frac{\alpha}{\alpha \max\{d_{i},d_{j}\} + 1}$ for $i \neq j$. 
\label{aij}
\end{observation}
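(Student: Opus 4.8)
The plan is to read off an exact recurrence for the off-diagonal entries of $W^{-1}$ from the identity $WW^{-1}=I$, exactly mirroring what the proof of \pref{aii} did with the diagonal entries. Fix $i\neq j$ and expand the $(i,j)$ entry of $WW^{-1}$. Since $W_{ik}$ is nonzero only for $k=i$ (value $\alpha d_i+1$) and for $k\in N(i)$ (value $-\alpha$), and since $(WW^{-1})_{ij}=0$ for $i\neq j$, I get
\begin{align*}
(\alpha d_i+1)\,a_{ij}-\alpha\sum_{k\in N(i)}a_{kj}=0,
\qquad\text{i.e.}\qquad
(\alpha d_i+1)\,a_{ij}=\alpha\sum_{k\in N(i)}a_{kj}.
\end{align*}

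The next step is to bound the partial sum $\sum_{k\in N(i)}a_{kj}$ by $1$. Here I would invoke two facts established earlier: first, $W$ is symmetric (uniform case), hence $W^{-1}$ is symmetric, so $a_{kj}=a_{jk}$; second, $a_{jk}\ge 0$ for all $k$ and $\sum_{k=1}^n a_{jk}=1$ (the row-sum property of $W^{-1}$ proved in the appendix to \pref{lem: one to all}). Rewriting the neighbor sum as $\sum_{k\in N(i)}a_{jk}$ and dropping the nonnegative terms outside $N(i)$ gives $\sum_{k\in N(i)}a_{jk}\le\sum_{k=1}^n a_{jk}=1$. Substituting back yields $(\alpha d_i+1)a_{ij}\le\alpha$, that is, $a_{ij}\le\frac{\alpha}{\alpha d_i+1}$.

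Finally, I would exploit symmetry once more: running the identical argument with the roles of $i$ and $j$ interchanged gives $a_{ji}\le\frac{\alpha}{\alpha d_j+1}$, and since $a_{ij}=a_{ji}$ the bound holds for both indices. Taking the tighter of the two and noting that $x\mapsto\frac{\alpha}{\alpha x+1}$ is decreasing, so the minimum over $\{d_i,d_j\}$ corresponds to $\max\{d_i,d_j\}$ in the denominator, I obtain
\begin{align*}
a_{ij}\le\min\Bigl\{\tfrac{\alpha}{\alpha d_i+1},\ \tfrac{\alpha}{\alpha d_j+1}\Bigr\}=\frac{\alpha}{\alpha\max\{d_i,d_j\}+1},
\end{align*}
which is the claim. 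The argument is short; the only place requiring care — and the one potential pitfall — is the conversion of the \emph{column}-indexed neighbor sum $\sum_{k\in N(i)}a_{kj}$ into a bounded \emph{row} partial sum. This conversion genuinely needs both the symmetry of $W^{-1}$ and the normalization $W^{-1}\mathbf 1=\mathbf 1$ together with entrywise nonnegativity; without symmetry one would be bounding a column partial sum, for which the clean normalization $\sum_k a_{kj}=1$ is not directly available.
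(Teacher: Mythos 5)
Your proof is correct and follows essentially the same route as the paper's: both extract the identity $(\alpha d_i+1)a_{ij}=\alpha\sum_{k\in N(i)}a_{kj}$ from $WW^{-1}=I$, bound the neighbor sum by $1$ via the symmetry $a_{kj}=a_{jk}$ together with nonnegativity and the row-normalization $\sum_k a_{jk}=1$, and then symmetrize in $i,j$ to obtain the $\max\{d_i,d_j\}$ denominator. Your write-up is in fact slightly more careful than the paper's, since you make explicit the column-to-row conversion step that the paper states without comment.
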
 

\begin{proof}
From $WW^{-1} = I$, we have $(d_i\alpha + 1)a_{ij} - \alpha \sum_{k\in {N(i)}} a_{kj}  = 0$. Combining with $\sum_{k \in N(i)} a_{kj} = \sum_{k \in N(i)} a_{jk} \le 1$, we have $a_{ij} \le \frac{\alpha}{\alpha d_{i} + 1}$. Similarly, we get $a_{ji} \le \frac{\alpha}{\alpha d_{j} + 1}$. Since $a_{ij} = a_{ji}$, we have $a_{ij} \le \frac{\alpha}{\alpha \max\{d_{i},d_{j}\} + 1}$.
\end{proof}

We restate Theorem \ref{thm: degree bound} and prove it here.

\noindent \textbf{Theorem \ref{thm: degree bound}.}
The optimal solution $\{m^*_i\}$ to Equation~\ref{eq:opt} satisfies
\begin{equation*}
\max\left\{\sum_{i=1}^n \frac{1}{(\alpha d_i + 1)^2}, 1\right\} \cd \frac{k}{\epsilon} \le \sum_{i \in [n]} m_i^* \le \sum_{i=1}^n \frac{\alpha + 1}{\alpha d_i + 1} \cd \frac{k}{\epsilon}
\end{equation*}
Assigning $\frac{\alpha + 1}{\alpha d_i + 1}\cd \frac{k}{\epsilon}$ samples to node $i$ can make sure at the equilibrium, every agent $i \in [n]$ has error $L(\theta^{eq}_i) \le \epsilon$. Assigning $\max\left\{O\left(\frac{\alpha + 1}{\alpha d_i + 1}\cd \frac{k}{\epsilon}\right), k\right\}$ samples to node $i$ is sufficent to solve the TSC problem.

\begin{proof}
(1) For the upper bound, let $m_i = \frac{|S|}{\alpha d_i + 1}$. Equation \ref{eq:opt} then becomes 
\begin{align*}
    \min_{|S|} \quad &|S| \sum_{i=1}^n \frac{1}{\alpha d_i + 1}\\
    \textrm{s.t.} \quad & \frac{1}{|S|}\sum_{j=1}^n a_{ij}^2 (\alpha d_j + 1) \le \frac{\epsilon}{k}, \quad \forall i\\
    &|S| > 0
\end{align*}
The optimal solution is $p^*_{inv} = M(\epsilon)(\sum_{i=1}^n \frac{1}{\alpha d_i + 1})\max\limits_{i}\{\sum_{j=1}^n a_{ij}^2 (\alpha d_j + 1)\}$ when $|S| = \frac{k}{\epsilon} \cd\max\limits_{i}\{\sum_{j=1}^n a_{ij}^2 (d_j + 1)\}$. From Observation \ref{aii} and Observation \ref{aij}, 
\begin{align*}
    \sum_{j=1}^n a_{ij}^2 (\alpha d_j + 1) &= a_{ii}^2(\alpha d_i + 1) + \sum_{j \neq i} a_{ij}^2 (\alpha d_j + 1)
    \\&\le a_{ii}\frac{\alpha + 1}{\alpha d_i + \alpha + 1} (\alpha d_{i} + 1) + \sum_{j \neq i} a_{ij}\frac{\alpha}{\alpha \max\{d_i, d_j\} + 1 } (\alpha d_j + 1)
    \\&\le  a_{ii}\frac{\alpha + 1}{\alpha d_i + 1} (\alpha d_{i} + 1) + \sum_{j \neq i} a_{ij}\frac{\alpha + 1}{\alpha d_j + 1 } (\alpha d_j + 1)
    \\&= (\alpha + 1) \sum_{j=1}^n a_{ij}
    \\&= (\alpha + 1)
\end{align*}
Note that this bound holds for all $i$, thus we have
\begin{align*}
p^*_{inv} &= \frac{k}{\epsilon}\cd(\sum_{i=1}^n \frac{1}{\alpha d_i + 1})\max\limits_{i}\{\sum_{j=1}^n a_{ij}^2 (\alpha d_j + 1)\}
\\& \le \frac{k}{\epsilon}\cd(\sum_{i=1}^n \frac{\alpha + 1}{\alpha d_i + 1}) 
\end{align*}

Thus, $\sum_{i \in [n]} m_i^* \le p^*_{inv} \le (\sum_{i=1}^n \frac{\alpha + 1}{\alpha d_i + 1}) \cd \frac{k}{\epsilon}$. Since $m_i = \frac{\alpha + 1}{\alpha d_i + 1}\cd \frac{k}{\epsilon}$ is a solution of Equation \ref{eq:opt}, assign $m_i$ samples to to node $i$ can make sure at the equilibrium, every agent $i \in [n]$ has error $L(\theta^{eq}_i) \le \epsilon$. Assigning $\max\left\{O\left(\frac{\alpha + 1}{\alpha d_i + 1}\cd \frac{k}{\epsilon}\right), k\right\}$ samples to node $i$ is sufficent to solve the TSC problem.

(2) For the lower bound, we utilize the dual form Equation \ref{eq:dual-eq}. Let $\lambda_i = \lambda (\alpha d_i + 1)^x > 0$ where $x \in R$ will be optimized later. The dual optimization then becomes 
\begin{align*}
    \max_{\lambda} \quad &2\sqrt{\lambda}\sum_{i=1}^n \sqrt{\sum_{j=1}^n a_{ij}^2 (\alpha d_j + 1)^x} - \frac{\lambda \epsilon}{k}\sum_{i=1}^n (\alpha d_i + 1)^x\\
    \textrm{s.t.} \quad & \lambda > 0
\end{align*}
Let $d^*_{power}(x)$ be the solution of this optimization given $x \in R$. Since the objective function is a quadratic function, the function gets maximal value when 
\begin{equation*}
\sqrt{\lambda} = \frac{k}{\epsilon}\cd\frac{\sum_{i=1}^n \sqrt{\sum_{j=1}^n a_{ij}^2 (\alpha d_j + 1)^x}}{\sum_{i=1}^n (\alpha d_i + 1)^x} > 0
\end{equation*}
Thus, 
\begin{equation*}
d^*_{power}(x) = \frac{k}{\epsilon}\cd\frac{(\sum_{i=1}^n \sqrt{\sum_{j=1}^n a_{ij}^2 (\alpha d_j + 1)^x})^2}{\sum_{i=1}^n (\alpha d_i + 1)^x} 
\end{equation*}
From Observation \ref{aii}, we have $a_{ii}^2 \ge \frac{1}{(d_i\alpha + 1)^2}$, thus 
\begin{align*}
d^*_{power}(x) &= \frac{k}{\epsilon}\cd\frac{(\sum_{i=1}^n \sqrt{\sum_{j=1}^n a_{ij}^2 (\alpha d_j + 1)^x})^2}{\sum_{i=1}^n (\alpha d_i + 1)^x} 
\\ &\ge \frac{k}{\epsilon}\cd\frac{(\sum_{i=1}^n (\alpha d_j + 1)^{\frac{x-2}{2}})^2}{\sum_{i=1}^n (\alpha d_i + 1)^x} 
\end{align*}
Note that this holds for all $x \in R$, to get the tightest bound, we should find the maximal value of the last term. By  Cauchy–Schwarz inequality, we have
\begin{align*}
\frac{(\sum_{i=1}^n (\alpha d_j + 1)^{\frac{x-2}{2}})^2}{\sum_{i=1}^n (\alpha d_i + 1)^x} &\le \sum_{i=1}^n \frac{((\alpha d_j + 1)^{\frac{x-2}{2}})^2}{(\alpha d_i + 1)^x}
\\ &= \sum_{i=1}^n \frac{1}{(\alpha d_i + 1)^2}
\end{align*}
This maximum value is achieved when $x = -2$. Thus $ \sum_{i \in [n]} m_i^* \ge d^*_{power}(x) \ge \sum_{i=1}^n \frac{1}{(\alpha d_i + 1)^2}\cd\frac{k}{\epsilon}$.

(3) In the dual form Equation \ref{eq:dual-eq}, $\lambda_i = \frac{k^2}{n\epsilon^2}$ for any $i$ is a feasible solution. From strong duality, we have
\begin{align*}
   \sum_{i \in [n]} m_i^* &\ge 2\frac{k}{\epsilon}\cd\sum_{i=1}^n\sqrt{\frac{1}{n}\sum_{j=1}^n(W_{ij}^{-1})^2} - \frac{k}{\epsilon}
    \\&\ge 2\frac{k}{\epsilon}\sum_{i=1}^n\sqrt{\frac{1}{n}
    \left(\frac{\sum_{j=1}^nW_{ij}^{-1}}{\sqrt{n}}\right)^2} - \frac{k}{\epsilon} \tag{Cauchy-Schwarz}
    \\&\ge \frac{k}{\epsilon} \tag{$\sum_{j=1}^nW_{ij}^{-1}$ for all $i$}
\end{align*}
\end{proof}

\subsection{Bounds for the Network Gain}
\label{subsec: gain}
From Theorem \ref{thm:single-error}, for every agent $i \in [n]$, we have 
\begin{align}
   c_1 \frac{k}{m_i}  \le L(\Bar{\theta}_i) \le c_2 \frac{k}{m_i}
\label{eq:single_c1c2}
\end{align}
Recall that $\widetilde{M}(k, \epsilon)$ is the minimal number of samples one agent needs to learn a model with $\epsilon$ error. We have
\begin{align}
  c_1 \frac{k}{\epsilon} \le \widetilde{M}(k, \epsilon) \le  c_2 \frac{k}{\epsilon}
\label{eq:real_M_c1c2}
\end{align}
By Theorem \ref{thm: one to all}, we have
\begin{align}
    c_1 k\sum_{j=1}^n \frac{\left(W_{ij}^{-1}\right)^2}{m_i} \le   L(\theta_i^{eq}) \le c_2 k\sum_{j=1}^n \frac{\left(W_{ij}^{-1}\right)^2}{m_i}.
\label{eq:c1c2eq-2}
\end{align}

\noindent \textbf{Corollary \ref{cor:gain}.} For any network $G$, if $\epsilon \le \frac{c_2(\alpha + 1)}{\alpha \max_i d_i + 1}$, then $\ngain(G,\alpha, k, \epsilon) \ge \Omega\left(\frac{n}{\sum_{i=1}^n \frac{\alpha + 1}{\alpha d_i + 1}}\right)$. If $\epsilon \le \frac{c_1}{2 \max_i(\alpha d_i + 1)^2}$, then $\ngain(G,\alpha, k, \epsilon) \le O\left(\min\left\{\frac{n}{\sum_{i \in [n]} \frac{1}{(\alpha d_i + 1)^2}}, n\right\}\right)$.

\begin{proof}
We first derive the upper bound of the network gain. From Equation \ref{eq:c1c2eq-2}, Equation \ref{eq: TSC-LG}, and Equation \ref{eq:OPT-UU}, we have
\begin{align*}
    TSC(G, \pmb{v}, k, \epsilon) \le   c_2 \sum_{i \in [n]} m_i^* + nk + n.
\end{align*}
This implies
\begin{align*}
     \ngain(G, \pmb{v}, k, \epsilon) = \frac{n (\widetilde{M}(k, \epsilon)- k)}{TSC(G, \pmb{v}, k, \epsilon) - nk} \ge \frac{c_1 n \frac{k}{\epsilon} - nk}{ c_2 \sum_{i \in [n]} m_i^* + n} \ge \frac{c_1 n \frac{k}{\epsilon} - nk}{ c_2 \sum_{i=1}^n \frac{\alpha + 1}{\alpha d_i + 1}\cd\frac{k}{\epsilon} + n}
\end{align*}
where the last step applies the upper bound in Theorem \ref{thm: degree bound}. The assumption $\epsilon \le \frac{c_2(\alpha + 1)}{\alpha \max_i d_i + 1}$ implies for every $i \in [n]$,  $c_2 \frac{\alpha + 1}{\alpha d_i + 1}\cd\frac{k}{\epsilon} \ge k$ and  $c_2 \sum_{i=1}^n \frac{\alpha + 1}{\alpha d_i + 1}\cd\frac{k}{\epsilon} \ge nk$. We have
\begin{align*}
    \ngain(G, \pmb{v}, k, \epsilon) \ge \frac{c_1 n \frac{k}{\epsilon} - nk}{ 2c_2 \sum_{i=1}^n \frac{\alpha + 1}{\alpha d_i + 1}\cd\frac{k}{\epsilon}} = \frac{c_1 n}{ 2c_2 \sum_{i=1}^n \frac{\alpha + 1}{\alpha d_i + 1}} - \frac{nk}{ 2c_2 \sum_{i=1}^n \frac{\alpha + 1}{\alpha d_i + 1}\cd\frac{k}{\epsilon}} \ge \frac{c_1 n}{ 2c_2 \sum_{i=1}^n \frac{\alpha + 1}{\alpha d_i + 1}} - \frac{1}{2} \ge \Omega\left(\frac{n}{\sum_{i=1}^n \frac{\alpha + 1}{\alpha d_i + 1}}\right).
\end{align*}
We now derive a general upper bound of the network gain. From Equation \ref{eq:c1c2eq-2}, Equation \ref{eq: TSC-LG}, and Equation \ref{eq:OPT-LL}, we have
\begin{align*}
    TSC(G, \pmb{v}, k, \epsilon) \ge c_1 \sum_{i \in [n]} m_i^*.
\end{align*}
This implies
\begin{align}
     \ngain(G, \pmb{v}, k, \epsilon) = \frac{n (\widetilde{M}(k, \epsilon)- k)}{TSC(G, \pmb{v}, k, \epsilon) - nk} \le \frac{c_2 n \frac{k}{\epsilon} - nk}{ c_1 \sum_{i \in [n]} m_i^* - nk} \le \frac{c_2 n \frac{k}{\epsilon} - nk}{ c_1 \sum_{i=1}^n \frac{1}{(\alpha d_i + 1)^2}\cd\frac{k}{\epsilon} - nk}
\label{eq:gain_U}
\end{align}
where the last step applies the lower bound in Theorem \ref{thm: degree bound}. The assumption $\epsilon \le \frac{c_1}{2 \max_i(\alpha d_i + 1)^2}$ is equivalent to $\frac{c_1}{2(\alpha d_i + 1)^2}\cd\frac{k}{\epsilon} \ge k$ for every $i \in [n]$. This implies $\frac{c_1}{2} \sum_{i=1}^n \frac{1}{(\alpha d_i + 1)^2}\cd\frac{k}{\epsilon} \ge nk$ which leads to
\begin{align*}
      \ngain(G, \pmb{v}, k, \epsilon)  \le \frac{2c_2 n \cd\frac{k}{\epsilon} - 2nk}{ c_1 \sum_{i=1}^n \frac{1}{(\alpha d_i + 1)^2}\cd\frac{k}{\epsilon}} \le   \frac{2c_2 n}{ c_1 \sum_{i=1}^n \frac{1}{(\alpha d_i + 1)^2}}  \le O\left(\frac{n}{ \sum_{i=1}^n \frac{1}{(\alpha d_i + 1)^2}}\right)
\end{align*}
When $\max_i d_i \le \frac{\sqrt{n} - 1}{\alpha}$, we have $\sum_{i=1}^n \frac{1}{(\alpha d_i + 1)^2} \ge 1$, thus, $\ngain(G, \pmb{v}, k, \epsilon) \le O\left(\min\left\{\frac{n}{ \sum_{i=1}^n \frac{1}{(\alpha d_i + 1)^2}}, n \right\}\right)$. When $\max_i d_i > \frac{\sqrt{n} - 1}{\alpha}$, $\epsilon \le \frac{c_1}{2 \max_i(\alpha d_i + 1)^2} < \frac{c_1}{2n}$, this implies $\frac{c_1}{2}\cd\frac{k}{\epsilon} \ge nk$. We have
\begin{align*}
      \ngain(G, \pmb{v}, k, \epsilon) \le \frac{c_2 n \frac{k}{\epsilon} - nk}{ c_1 \sum_{i \in [n]} m_i^* - nk} \le \frac{c_2 n \frac{k}{\epsilon} - nk}{ c_1\frac{k}{\epsilon} - nk} \le \frac{2c_2}{c_1} n \le O(n)
\end{align*}
where we use the trivial lower bound in Theorem \ref{thm: degree bound}. Thus, $\ngain(G, \pmb{v}, k, \epsilon) \le O\left(\min\left\{\frac{n}{ \sum_{i=1}^n \frac{1}{(\alpha d_i + 1)^2}}, n \right\}\right)$. 
\end{proof}

We then consider another lower bound of network gain, which will be applied to prove the network gain for special networks.
\begin{lemma}
Let $m_i = m, \forall i \in [n]$ be a feasible solution of Equation \ref{eq:opt}. If $\lceil c_2 m \rceil \le k$, the network gain is infinite. If $\lceil c_2 m \rceil > k$, we have

\begin{equation*}
    \ngain(G, \pmb{v}, k, \epsilon) \ge \frac{c_1k}{ 2c_2 m \epsilon} - 1 .
\end{equation*}
\label{lem:general gain}
\end{lemma}

\begin{proof}
From Equation \ref{eq:c1c2eq-2}, Equation \ref{eq: TSC-LG}, Equation \ref{eq:OPT-LL}, we have
\begin{align*}
   nk\le TSC(G, \pmb{v}, k, \epsilon) \le  \sum_{i \in [n]} \max\left\{\lceil c_2 m \rceil, k \right\} 
\end{align*}
If $\lceil c_2 m \rceil \le k$, then $TSC(G, \pmb{v}, k, \epsilon) = nk$ and the network gain is infinite. If $\lceil c_2 m \rceil > k$, then 
\begin{align*}
   TSC(G, \pmb{v}, k, \epsilon) \le  \sum_{i \in [n]} \max\left\{\lceil c_2 m \rceil, k \right\} \le  c_2 n m + nk + n
\end{align*}
$\lceil c_2 m \rceil > k$ implies $c_2 m > \max\{1, k-1\}$ because $k \ge 1$. Combining these two bounds with Equation \ref{eq:real_M_c1c2}, we have
\begin{equation*}
    \ngain(G, \pmb{v}, k, \epsilon) = \frac{n (\widetilde{M}(k, \epsilon)- k)}{TSC(G, \pmb{v}, k, \epsilon) - nk} \ge \frac{c_1 n \frac{k}{\epsilon} - nk}{ c_2 n m + n} \ge \frac{c_1 n \frac{k}{\epsilon} - nk}{ 2c_2 n m} \ge \frac{c_1 k}{ 2c_2 m\epsilon} - \frac{k}{2c_2m} \ge \frac{c_1 k}{ 2c_2 m \epsilon} - 1 .
\end{equation*}
\end{proof}

\subsection{Proof of Network Gain for Special Networks}

Again, To simplify notations, assume $a_{ij} = (W^{-1})_{ij}$ for all $i,j \in [n]$. 

\begin{lemma}
If the network $G$ is an $n$-clique,  $\sum_{i \in [n]} m_i^* = (1 + \frac{n-1}{(n\alpha + 1)^2})\cd\frac{k}{\epsilon}$, where $m_i^* = \frac{(\alpha+1)^2+ (n-1)\alpha^2}{(n\alpha + 1)^2}\cd\frac{k}{\epsilon}$ for every $i \in [n]$. The network gain $\ngain(G, \pmb{v}, k, \epsilon) = \Omega(n)$    
\label{lem: clique}
\end{lemma}

\begin{proof}
For n-clique, we have $\begin{bmatrix}
1+(n-1)\alpha&-\alpha& \cdots &-\alpha
\\-\alpha&1+(n-1)\alpha &\cdots &-\alpha
\\ \vdots & \ddots & \ddots & \vdots
\\-\alpha & -\alpha & \cdots & 1+(n-1)\alpha
\end{bmatrix} 
\begin{bmatrix} 
\theta_1^{eq}
\\ \theta_2^{eq}
\\ \vdots
\\ \theta_n^{eq}
\end{bmatrix}
= 
\begin{bmatrix}
\Bar{\theta}_1
\\ \Bar{\theta}_2
\\ \vdots
\\ \Bar{\theta}_n
\end{bmatrix}$. The solution to this system is $\hat{\theta}_i =  \frac{(\alpha +1)\Bar{\theta}_i + \alpha \sum_{j \neq i} \Bar{\theta}_j}{n\alpha + 1}$ for any $i \in [n]$. Thus for any $i$, $a_{ii} = \frac{\alpha + 1}{n\alpha + 1}, a_{ij} = \frac{\alpha}{n\alpha + 1}(j \neq i)$. Since the constraints of Equation \ref{eq:opt} are symmetric now, there is an optimal solution such that $m_1 = m_2 = \cdots = m_n = |S|$. The constraint now becomes $|S| \ge \frac{k}{\epsilon}\cd\sum_{j=1}^n a_{ij}^2 = \frac{(\alpha + 1)^2 + (n-1) \alpha^2}{(n\alpha +1)^2}\cd\frac{k}{\epsilon} > 0$. Thus the unique optimal solution is $\frac{n(\alpha+1)^2+ n(n-1)\alpha^2}{(n\alpha + 1)^2}\cd\frac{k}{\epsilon}  = (1 + \frac{n-1}{(n\alpha + 1)^2})\cd\frac{k}{\epsilon}$ when $|S| = \frac{(\alpha + 1)^2 + (n-1) \alpha^2}{(n\alpha +1)^2}\cd\frac{k}{\epsilon}$.

From Lemma \ref{lem:general gain}, if $\lceil c_2 m_i^\star \rceil \le k$, the network gain is infinite. If $\lceil c_2 m_i^\star \rceil > k$, then
\begin{align*}
    \ngain(G, \pmb{v}, k, \epsilon) \ge  \frac{c_1k}{ 2c_2 m_i^* \epsilon} - 1  = \frac{c_1 n }{2 c_2 \left(1 + \frac{n-1}{(n\alpha + 1)^2}\right)} - 1 \ge  \frac{c_1 n }{2 c_2 \left(1 + \frac{1}{2\alpha}\right)} - 1 = \Omega(n)
\end{align*}
\end{proof}

\begin{lemma}
For an $n$-agent star graph, $\frac{n-1}{(\alpha + 1)^2}\cd\frac{k}{\epsilon}  \le \sum_{i \in [n]} m_i^* \le n\cd\frac{k}{\epsilon} $. If $\epsilon \le \frac{c_1}{4(\alpha + 1)^2}$, the network gain $\ngain(G, \pmb{v}, k, \epsilon) = O(1)$.
\label{lem: star}
\end{lemma}
\begin{proof}

We can directly get the bounds of $\sum_{i \in [n]} m_i^* $ from the trivial bound and the lower bound of Theorem \ref{thm: degree bound}. Let node $j$ be the center of the star, since every node should have at least $k$ samples, from Equation \ref{eq:gain_U} and the lower bound $\frac{n-1}{(\alpha + 1)^2}\cd\frac{k}{\epsilon} $, we have
\begin{align*}
\ngain(G, \pmb{v}, k, \epsilon) \le \frac{c_2 n\cd\frac{k}{\epsilon}  - nk}{ c_1 \sum_{i=1}^n \frac{1}{(\alpha d_i + 1)^2} \cd\frac{k}{\epsilon}  - nk} \le  \frac{c_2 n \cd\frac{k}{\epsilon} - nk}{ c_1 \frac{n-1}{(\alpha + 1)^2}\cd\frac{k}{\epsilon} - nk}
\end{align*}
From the assumption of $\epsilon$ and $n \ge 2$, we have $\epsilon \le \frac{c_1}{4(\alpha + 1)^2} \le \frac{c_1}{2(\alpha + 1)^2}(1-\frac{1}{n})$. This implies $\frac{c_1}{2} \frac{n-1}{(\alpha + 1)^2}\cd\frac{k}{\epsilon}  \ge nk$, which leads to
\begin{align*}
    \ngain(G, \pmb{v}, k, \epsilon) \le \frac{2c_2 n \cd\frac{k}{\epsilon}  - 2nk}{ c_1 \frac{n-1}{(\alpha + 1)^2}\cd\frac{k}{\epsilon}} \le \frac{2c_2 n}{ c_1 \frac{n-1}{(\alpha + 1)^2}} \le \frac{4c_2(\alpha + 1)^2}{c_1} = O(1).
\end{align*}
\end{proof}

Recall the optimization problem in Theorem \ref{thm:general_optimal}.
To get a lower bound on any $m_i$, we upper bound the quantity $\sum_{j=1}^n (W^{-1})_{ij}^2$ for any $i$. The key idea is to investigate the structure of $W^{-1}$. Since $W=\alpha \mathcal{L}+I$ is a symmetric positive definite matrix when the influence factors have uniform value $\alpha$, it can be written in terms of its eigen-decomposition as $W=PDP^T$ where $P$ is an orthonormal matrix whose columns $p_1,p_2,...,p_n$ are unit eigenvectors of $W$ s.t. for any $i\neq j$, $p_i^\top p_j=0$, and $D$ is the diagonal matrix of the corresponding eigenvalues of $W$. Let $\lambda_1, \lambda_2,...,\lambda_n$ denote the eigenvalues of the Laplacian matrix of $G$ such that $\lambda_1\leq \lambda_2\leq .... \leq \lambda_n$. By definition of $W$, it follows that the eigenvalues of $W$ can be expressed as $1+\alpha\lambda_1, 1+\alpha\lambda_2,...,1+\alpha\lambda_n$. It is well known that for any graph, $\lambda_1=0$ and for $d$-regular graphs, the eigenvalues $\lambda_i\in [0,2d]$ $\forall i\in [n]$. Furthermore, note that $P^TP=PP^T=I$ so that $P^{-1}=P^T$. Another observation we make is that the eigenvectors $p_1, p_2,..,p_n$ of $W$ are simply the eigenvectors of the Laplacian $\mathcal{L}$.

Similarly, we express $W^{-1}$ in terms of its eigen-decomposition. Since $W=PDP^T$, it follows that $W^{-1}=PD^{-1}P^T$ where $D^{-1}$ is the diagonal matrix such that $(D^{-1})_{ii}=\frac{1}{1+\alpha\lambda_i}$ for all $i\in [n]$. To lower bound any entry of $W^{-1}$, it suffices to compute the eigenvalues and eigenvectors of the Laplacian matrix $L$. Based on the above observations, we have, 
\begin{align*}
    (W^{-1})_{ij}=\sum_{k=1}^n \frac{1}{1+\alpha\lambda_k}P_{ik}P_{jk}=\sum_{k=1}^n \frac{1}{1+\alpha\lambda_k}(p_k)_i(p_k)_j
    \end{align*}
where $(p_k)_i$ denotes the $i^{th}$-entry of the eigenvector $p_k$. This implies that
\begin{align*}
    \sum_{j=1}^n (W^{-1})_{ij}^2=\sum_{j=1}^n\sum_{k=1}^n \frac{1}{1+\alpha\lambda_k}(p_k)_i(p_k)_j)^2=(W^{-2})_{ii}
\end{align*}
where $W^{-2}$ denotes the matrix $(W^{-1})^2$. The last equality follows from the fact that $W$ is symmetric. Now, note that $W^{-2}=P(D^{-1})^2P^T$ so that, $(W^{-2})_{ii}=\sum_{k=1}^n \frac{1}{(1+\alpha\lambda_k)^2} P_{ik}^2=\sum_{k=1}^n \frac{1}{(1+\alpha\lambda_k)^2} (p_k)_i^2$. 

Effectively, to bound $\sum_{i \in [n]} m_i^*$, we upper bound the quantity $(W^{-2})_{ii}$ which depends only on $\alpha$, eigenvalues and eigenvectors of the graph Laplacian $L$, summarized in the following Lemma.

\begin{lemma}
    $\sum_{i \in [n]} m_i^* \le \frac{nk}{\epsilon}\cd\max\limits_{i}\{(W^{-2})_{ii}^2\} = \frac{nk}{\epsilon}\cd\max\limits_{i} \{\sum_{k=1}^n \frac{1}{(\alpha \lambda_k + 1)}(p_k)_i^2\}$. Moreover, $m_i = m = \frac{k}{\epsilon}\cd\max\limits_{i} \{\sum_{k=1}^n \frac{1}{(\alpha \lambda_k + 1)}(p_k)_i^2\}$ for every $i \in [n]$ is a feasible solution to Equation \ref{eq:opt}.
\label{lem:eigen_lemma}
\end{lemma}

\begin{proof} 
At the end of the proof of the upper bound in Theorem \ref{thm: degree bound}, by setting $m_1 = m_2 = \cdots = m_n$, we show that $\sum_{i \in [n]} m_i^* \le \frac{nk}{\epsilon}\cd\max\limits_{i}\{\sum_{j=1}^n (W_{ij}^{-1})^2\}$. Thus $\sum_{i \in [n]} m_i^* \le \frac{nk}{\epsilon}\cd\max\limits_{i} \{\sum_{k=1}^n \frac{1}{(\alpha \lambda_k + 1)}(p_k)_i^2\}$. The solution is achieved by setting $m_i = \frac{k}{\epsilon}\cd\max\limits_{i} \{\sum_{k=1}^n \frac{1}{(\alpha \lambda_k + 1)}(p_k)_i^2\}$, which is then a solution to Equation \ref{eq:opt}.
\end{proof}


\begin{lemma}
 Consider a graph $G$ which is a $d$-dimensional hypercube where $d=\log n$, and $n$ is a power of $2$. Then, when $\alpha$ is a constant with $\alpha \geq \frac{3}{8}$,
$\sum_{i \in [n]} m_i^* = \Theta(\frac{nk}{d^2\epsilon})$. The network gain $\ngain(G, \pmb{v}, k, \epsilon) \ge \Omega\left(d^2\right)$. Assigning $\max\{ O \left(\lceil \frac{k}{d^2\epsilon}\rceil\right), k\}$ samples to every agent suffices to solve the TSC problem.
\label{lem:hypercube}
\end{lemma}


\begin{proof}
It is known that for the hypercube, it holds that for all eigenvectors $p_k$, and all $i\in [n]$, $(p_k)_i\in \{\frac{-1}{\sqrt{n}}, \frac{1}{\sqrt{n}}\}$ and eigenvalues of $L$ are given by $2i$ for $i\in \{0,1,...,d\}$ with multiplicity $\binom{d}{i}$ respectively \citep{spielman}. Based on this, we have that for any $i$,
\begin{align*}
(W^{-2})_{ii}&=\sum_{k=1}^n \frac{1}{(1+\alpha\lambda_k)^2} (p_k)_i^2 \\
            &= \frac{1}{n}\sum_{k=1}^n \frac{1}{(1+\alpha\lambda_k)^2} \\
            &= \frac{1}{n}\sum_{i=0}^d \binom{d}{i}\frac{1}{(2i\alpha+1)^2} \\
\end{align*}

We bound the quantity, $\sum_{i=0}^d \binom{d}{i}\frac{1}{(2i\alpha+1)^2}$. For this purpose, we look at the quantity $z(1+z^2)^d=\sum_{i=0}^d\binom{d}{i}z^{2i\alpha+1}$ where $z$ will be set later. Now,
\begin{align*}
    z(1+z^2)^d&=\sum_{i=0}^d \binom{d}{i}z^{2i+1}      &\iff  \\
  \int (1+z^2)^d z\, dz &= \sum_{i=0}^d\binom{d}{i}\int z^{2i} z\,dz    &\iff \\
  \frac{(1+z^2)^{d+1}}{2(d+1)} &= \sum_{i=0}^d\binom{d}{i} \frac{z^{2i+2}}{2i+2}  &\iff \\
  \frac{(1+z^2)^{d+1}}{2(d+1)}z &= \sum_{i=0}^d\binom{d}{i} \frac{z^{2i+2}}{2i+2}z  &\iff \\
  \int \frac{(1+z^2)^{d+1}}{2(d+1)}z \,dz &= \sum_{i=0}^d\binom{d}{i} \int \frac{z^{2i+2}}{2i+2} z \,dz &\iff \\
  \frac{(1+z^2)^{d+2}}{4(d+1)(d+2)} &= \sum_{i=0}^d\binom{d}{i}\frac{z^{2i+4}}{(2i+2)(2i+4)} \\
\end{align*}
Integrating over $[0,1]$ and setting $d=\log n$ we get that, 
\begin{align*}
\sum_{i=0}^d\binom{d}{i}\frac{1}{(2i+2)(2i+4)}=\frac{4n}{4(d+1)(d+2)}=\frac{n}{(d+1)(d+2)}.    
\end{align*}

Assuming $\alpha\geq \frac{3}{8}$, we get,
\begin{align*}
    \sum_{i=0}^d\binom{d}{i}\frac{1}{(2i\alpha+1)^2}&=    \sum_{i=0}^d\binom{d}{i}\frac{1}{4\alpha^2i^2+4\alpha i+1} \\
    &=\sum_{i=0}^d\binom{d}{i}\frac{8}{8(4\alpha^2i^2+4\alpha i+1)}\\
    &\leq \sum_{i=0}^d\binom{d}{i}\frac{8}{4i^2+12i+8} \\
    &=\sum_{i=0}^d\binom{d}{i}\frac{8}{(2i+2)(2i+4)} \\
    \end{align*}
    It follows that,
\begin{align*}
    \sum_{i=0}^d\binom{d}{i}\frac{1}{(2i\alpha+1)^2}
    \leq 8 \sum_{i=0}^d\binom{d}{i}\frac{1}{(2i+2)(2i+4)}=\frac{8n}{(d+1)(d+2)}
\end{align*}
so that, 
\begin{align*}
(W^{-2})_{ii} &= \frac{1}{n}\sum_{i=0}^d \binom{d}{i}\frac{1}{(2i\alpha+1)^2} \\
             &\leq \frac{1}{n}\frac{8n}{(d+1)(d+2)} \\
             &=O(\frac{1}{d^2}) \\
\end{align*}
Note that this holds for all $i \in [n]$. Thus, by Lemma \ref{lem:eigen_lemma}, we get that $m_i = m =
 O\left(\frac{k}{d^2\epsilon}\right)$ is a feasible solution to Equation \ref{eq:opt}, and $\sum_{i \in [n]} m_i^* \le O\left(\frac{nk}{d^2\epsilon}\right)$. On the other hand, from the lower bound in Theorem \ref{thm: degree bound}, we have $\sum_{i \in [n]} m_i^* \ge \Omega\left(\frac{n k}{d^2\epsilon}\right)$.
Since $d = \log{n} < \sqrt{n}$, we have $\sum_{i \in [n]} m_i^* = \Theta(\frac{nk}{d^2\epsilon})$. Since $m_i = m = O\left(\frac{k}{d^2\epsilon}\right)$ is a feasible solution to Equation \ref{eq:opt}. Assigning $\max\{ O \left(\lceil \frac{k}{d^2\epsilon}\rceil\right), k\}$ samples to every agent is sufficent to solve the TSC problem. By Lemma \ref{lem:general gain}, we have 
\begin{equation*}
    \ngain(G, \pmb{v}, k, \epsilon) \ge \frac{c_1 k}{ 2c_2 m\epsilon} - 1 = \Omega\left(d^2\right) .
\end{equation*}
\end{proof}

\begin{lemma}
For a random $d$-regular graph, with probability of at least $1- O(n^{-\tau})$ where $\tau = \lceil \frac{\sqrt{d-1} + 1}{2}\rceil - 1$

(1) If $d \ge \frac{\sqrt{n}}{\alpha}$,  $\sum_{i \in [n]} m_i^*  = \Theta(\frac{k}{\epsilon})$. Assigning $\max\{ O \left(\lceil \frac{k}{n\epsilon}\rceil\right), k\}$ samples to every agent $i$ suffices to solve the TSC problem. The network gain $\ngain(G, \pmb{v}, k, \epsilon) \ge \Omega\left(n\right)$.

(2) If $d < \frac{\sqrt{n}}{\alpha}$,  $\sum_{i \in [n]} m_i^* = \Theta(\frac{nk}{\alpha^2 d^2\epsilon})$. Assigning $\max\{ O \left(\lceil \frac{k}{\alpha^2 d^2 \epsilon}\rceil\right), k\}$ suffices to solve the TSC problem. The network gain $\ngain(G, \pmb{v}, k, \epsilon) \ge \Omega\left(d^2\right)$.
\label{lem:regular}
\end{lemma}

\begin{proof}
For a random $d$-regular graph, it is known that the largest eigenvalue of the adjacency matrix is $d$ and with the probability of at least $1- O(n^{-\tau})$ where $\tau = \lceil \frac{\sqrt{d-1} + 1}{2}\rceil - 1$, the rest of the eigenvalues are in the range $[-2\sqrt{d}-o(1), 2\sqrt{d}+o(1)]$\citep{friedman2003proof}. This implies that for the Laplacian the smallest eigenvalue is $0$ and the rest of the eigenvalues are $\Theta(d)$. Since in the Laplacian matrix, $(p_1)_i = \frac{1}{\sqrt{n}}$ for all $i\in[n]$. We have that, for every $i$, \begin{align*}
    (W^{-2})_{ii}&=\sum_{k=1}^n \frac{1}{(1+\alpha\lambda_k)^2} (p_k)_i^2 \\
                &=\frac{1}{n}+\frac{1}{\Omega(\alpha^2 d^2)}\sum_{k=2}^n (p_k)_i^2 \\ 
                &=\frac{1}{n}+\frac{1}{\Omega(\alpha^2 d^2)}(1-\frac{1}{n})
                \\&\leq \frac{1}{n}+\frac{1}{\Omega(\alpha^2 d^2)}
\end{align*}

From Lemma \ref{lem:eigen_lemma}, we have $\sum_{i \in [n]} m_i^*  \le \max\limits_{i}\{(W^{-2})_{ii}\}n\cd\frac{k}{\epsilon} \le (\frac{1}{n}+\frac{1}{\Omega(\alpha^2 d^2)})n\cd\frac{k}{\epsilon}$. Thus, when $d \ge \frac{\sqrt{n}}{\alpha}$,  $\sum_{i \in [n]} m_i^*  \le \frac{nk}{\Omega(n)\epsilon}= O(\frac{k}{\epsilon})$, and $m_i = m =O(\frac{k}{n\epsilon})$ is a feasible solution to Equation \ref{eq:opt}. Assigning $\max\{ O \left(\lceil \frac{k}{n\epsilon}\rceil\right), k\}$ samples to every agent suffices to solve the TSC problem. On the other hand, from the trivial lower bound, we have $\sum_{i \in [n]} m_i^*  \ge \frac{k}{\epsilon}$, implying $\sum_{i \in [n]} m_i^* = \Theta(\frac{k}{\epsilon})$. By Lemma \ref{lem:general gain}, we have 
\begin{equation*}
    \ngain(G, \pmb{v}, k, \epsilon) \ge \frac{c_1 k}{ 2c_2 m \epsilon} - 1 = \Omega\left(n\right) .
\end{equation*}

When $d< \frac{\sqrt{n}}{\alpha}$,  $\sum_{i \in [n]} m_i^*  \le \frac{nk}{\Omega(\alpha^2 d^2\epsilon)}= O(\frac{nk}{\alpha^2 d^2 \epsilon})$, and $m_i = m =  O(\frac{k}{\alpha^2 d^2 \epsilon})$ is a feasible solution to Equation \ref{eq:opt}. From the trivial bound and the lower bound of Theorem \ref{thm: degree bound}, we have $\sum_{i \in [n]} m_i^*  \ge \frac{nk}{(\alpha d + 1)^2\epsilon}$. Since $d< \frac{\sqrt{n}}{\alpha}$,  we have $\sum_{i \in [n]} m_i^*  = \Theta(\frac{nk}{\alpha^2 d^2 \epsilon})$. Assigning $\max\{ O \left(\lceil \frac{k}{\alpha^2 d^2 \epsilon}\rceil\right), k\}$ samples to every agent suffices to solve the TSC problem.
By Lemma \ref{lem:general gain}, we have 
\begin{equation*}
    \ngain(G, \pmb{v}, k, \epsilon) \ge \frac{c_1 k}{ 2c_2 m \epsilon} - 1 = \Omega\left(d^2\right) .
\end{equation*}
\end{proof}


\begin{lemma}
Consider a graph $G$ which is a $d$-regular expander with expansion $\tau(G)$. Then, we have that

(1) When $d \le \frac{\sqrt{n}}{\alpha \tau(G)^2}$, then $\sum_{i \in [n]} m_i^*   \le O(\frac{nk}{\alpha^2 d^2 \tau(G)^4\epsilon})$. Assigning $\max\{ O \left(\lceil \frac{k}{\alpha^2 d^2 \tau(G)^4 \epsilon} \rceil\right), k\}$ samples to every agent $i$ suffices to solve the TSC problem. The network gain $\ngain(G, \pmb{v}, k, \epsilon) \ge \Omega\left(d^2\tau(G)^4\right)$.


 (2) When $d > \frac{\sqrt{n}}{\alpha \tau(G)^2}$, then $\sum_{i \in [n]} m_i^*    \le O(\frac{k}{\epsilon})$. Assigning $\max\{ O \left(\lceil \frac{k}{n\epsilon} \rceil\right), k\}$ samples to every agent $i$ suffices to solve the TSC problem. The network gain $\ngain(G, \pmb{v}, k, \epsilon) \ge \Omega\left(n\right)$.
\label{lem:expander}
\end{lemma}

\begin{proof}
Since $\lambda_1=0$ for any graph, we know from Cheeger's inequality that the second smallest eigenvalue $\lambda_2$ satisfies, 
\begin{align*}
    \frac{\lambda_2}{2d} \leq \tau(G) \leq \sqrt{2\frac{\lambda_2}{d}}
\end{align*}
from which it follows that $\lambda_2\geq \frac{d\tau(G)^2}{2}$. Then, we have that
\begin{align*}
    (W^{-2})_{ii}&=\sum_{k=1}^n \frac{1}{(1+\alpha\lambda_k)^2} (p_k)_i^2 \\
                &\leq \frac{1}{n}+\frac{4}{\alpha^2d^2\tau(G)^4}\sum_{k=2}^n (p_k)_i^2 \\ 
                &\leq \frac{1}{n} +\frac{4}{\alpha^2d^2\tau(G)^4}\\
\end{align*}
Note that this holds for all $i \in [n]$. Thus, by Lemma \ref{lem:eigen_lemma}, $\sum_{i \in [n]} m_i^*  \le (\frac{1}{n} +\frac{4}{\alpha^2d^2\tau(G)^4})n\cd\frac{k}{\epsilon}$, and $m_i = m = (\frac{1}{n} +\frac{4}{\alpha^2d^2\tau(G)^4})\cd\frac{k}{\epsilon}$ is a feasible solution to Equation \ref{eq:opt}. Thus, when $d \ge \frac{\sqrt{n}}{\alpha \tau(G)^2}$, $\sum_{i \in [n]} m_i^*   \le O(\frac{k}{\epsilon})$. Since $m = O(\frac{k}{n\epsilon})$ now, assigning $\max\{ O \left(\lceil \frac{k}{n\epsilon} \rceil\right), k\}$ samples to every agent $i$ suffices to solve the TSC problem. The network gain $\ngain(G, \pmb{v}, k, \epsilon) \ge \Omega\left(n\right)$.

When $d < \frac{\sqrt{n}}{\alpha \tau(G)^2}$, $\sum_{i \in [n]} m_i^*   \le  O(\frac{nk}{\alpha^2 d^2 \tau(G)^4 \epsilon})$. Since $m = O \left( \frac{k}{\alpha^2 d^2 \tau(G)^4 \epsilon}\right)$ now, assigning $\max\{ O \left(\lceil \frac{k}{\alpha^2 d^2 \tau(G)^4 \epsilon}\rceil\right), k\}$ samples to every agent $i$ suffices to solve the TSC problem. The network gain $\ngain(G, \pmb{v}, k, \epsilon) \ge \Omega\left(d^2 \tau(G)^4\right)$.
\end{proof}
\section{Missing proof in Section \ref{sec:general}}
\label{app:general proof}
To simplify notations, assume $a_{ij} = (W^{-1})_{ij}$ for all $i,j \in [n]$. Since $W^{-1}$ is positive-definite, from the Schur product theorem, $(W^{-1}) \circ (W^{-1})$ is also positive-definite where $\circ$ is the Hadamard product (element-wise product). Let $B = ((W^{-1}) \circ (W^{-1}))^{-1}$ and $[B]_{ij} = b_{ij}$.

\begin{lemma} \label{b>0}
 $\sum\limits_{j=1}^n b_{ij} > 0$ for all $i \in [n]$
\end{lemma}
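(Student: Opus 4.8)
The plan is to prove the equivalent statement that the vector $\boldsymbol{\beta} := B\mathbf{1}$ (whose $i$-th entry is exactly $\sum_{j}b_{ij}$) is entrywise strictly positive, where $\mathbf 1$ is the all-ones vector. Writing $a_{ij}=(W^{-1})_{ij}$ and $A=W^{-1}$, I would lean on two facts already available: from \pref{lem: one to all} the matrix $A$ is symmetric, entrywise positive, and \emph{doubly stochastic} ($A\mathbf 1=\mathbf 1$), and $W=\mathcal{L}+I$ is a Stieltjes matrix (symmetric M-matrix), as noted in the proof of \pref{lem: one to all}. The strategy is to extract a sign pattern for $B=C^{-1}$, where $C=(W^{-1})\circ(W^{-1})$, together with a substochasticity property of $C$, and then combine the two.

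First I would establish that $B$ is a symmetric M-matrix, i.e. $b_{ij}\le 0$ for $i\neq j$ and $b_{ii}>0$. Since $W$ is a Stieltjes matrix, $A=W^{-1}$ is an \emph{inverse} M-matrix; invoking the classical closure property that the Hadamard product of inverse M-matrices is again an inverse M-matrix, applied to $A\circ A=C$, shows that $B=C^{-1}$ is a symmetric M-matrix. In particular $\beta_i=b_{ii}-\sum_{j\neq i}|b_{ij}|$, so the lemma is precisely the assertion that $B$ is \emph{strictly row diagonally dominant}. Next, because $A$ is doubly stochastic with positive entries on a connected graph, each $s_i:=\sum_j a_{ij}^2=(C\mathbf 1)_i$ lies in $(0,1)$; hence $C$ is row-substochastic with spectral radius at most $\max_i s_i<1$, which gives $C\preceq I$, so $B\succeq I$ and $\sum_i\beta_i=\mathbf 1^\top B\mathbf 1\ge n$. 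Multiplying $BC=I$ by $\mathbf 1$ and using $C\mathbf 1=\mathbf s$ yields the key identity $\sum_k b_{ik}s_k=1$ for every $i$; summing it over $i$ and using symmetry of $B$ gives the companion identity $\sum_i s_i\beta_i=n$.

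Finally, combining $\sum_k b_{ik}s_k=1$ with $b_{ij}\le 0$ $(j\neq i)$ and solving for $b_{ii}$ produces the clean formula $\beta_i=\frac{1}{s_i}\bigl(1+\sum_{j\neq i}|b_{ij}|\,(s_j-s_i)\bigr)$. For the index minimizing $s_i$ every term $s_j-s_i$ is nonnegative, so $\beta_i\ge 1/s_i>0$ at once. \textbf{The main obstacle} is the opposite extreme: for nodes with large $s_i$ (the low-degree nodes) the correction $\sum_{j\neq i}|b_{ij}|(s_j-s_i)$ is negative, and one must show it exceeds $-1$. I expect to close this by processing the nodes in decreasing order of $s_i$, controlling the sizes $|b_{ij}|$ via the already-established positivity of $\beta_j$ at the lower-$s$ indices together with the global constraints $\sum_i\beta_i\ge n$ and $\sum_i s_i\beta_i=n$; alternatively, a maximum-principle argument on the M-matrix $B$, using that the off-diagonal support of $B$ inherits the connectivity of $G$, should prevent any row sum from reaching $0$. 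This sign control for the high-$s_i$ nodes is the crux, since a purely local (single-equation) bound is insufficient: substochasticity of $C$ alone does not force positive row sums, and the doubly stochastic structure of $A$ must be used in an essential, global way.
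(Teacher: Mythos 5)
Your identities are all correct, and the reduction to showing positivity of the row sums $\beta_i=(B\mathbf{1})_i$ at the indices with large $s_i$ is sound as far as it goes; but the proof stops exactly at the step you yourself call the crux, and that gap is real: the toolkit you have assembled cannot close it, even in principle. Concretely, take
\[
B=\begin{pmatrix} 10 & -1 & 0\\ -1 & 3/2 & -1\\ 0 & -1 & 10\end{pmatrix},
\qquad
C=B^{-1}=\frac{1}{130}\begin{pmatrix} 14 & 10 & 1\\ 10 & 100 & 10\\ 1 & 10 & 14\end{pmatrix}.
\]
This pair satisfies every property you derive and propose to use: $B$ is a symmetric M-matrix with $B\succ I$; $C$ is entrywise positive, positive definite, and row-substochastic with $\mathbf{s}=C\mathbf{1}=\tfrac{1}{130}(25,120,25)^{\top}\in(0,1)^3$; $B\mathbf{s}=\mathbf{1}$; $\sum_i\beta_i=17.5\ge n=3$ and $\sum_i s_i\beta_i=3=n$; your formula for $\beta_i$ holds; and the rows of smallest $s_i$ (indices $1$ and $3$) are even strictly diagonally dominant, $\beta_1=\beta_3=9>0$. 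Nevertheless $\beta_2=-1/2<0$. Hence any completion that uses only the M-matrix property, substochasticity, the identity $B\mathbf{s}=\mathbf{1}$, the two global sums, and positivity at the low-$s_i$ indices --- which is exactly what your induction-in-decreasing-$s_i$ and ``global constraints'' proposals invoke --- cannot succeed; and the maximum-principle variant fares no better as described, since $C$ is entrywise positive, so the off-diagonal support of $B$ is in general complete and does not inherit the sparsity of $G$. What is missing is some further consequence of $C$ being the Hadamard square of a doubly stochastic inverse M-matrix, beyond $C\mathbf{1}=\mathbf{s}$; with only the facts you extracted, the statement you are trying to prove is simply false.

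A secondary issue: your first step is justified by a ``classical closure property'' that does not exist --- inverse M-matrices are not closed under Hadamard products in general. What is true is precisely the special case you need, namely that the Hadamard square $A\circ A$ of an inverse M-matrix is again an inverse M-matrix; but this is a nontrivial theorem (a conjecture of Neumann, proved later by Chen), not folklore, so if you keep this route you must invoke it as such, and it is a heavier tool than the problem requires. For comparison, the paper's proof avoids M-matrix theory and row-sum bookkeeping altogether: it writes $\sum_j b_{nj}=\det(W_2(n))/\det(W_2)$, where $W_2(n)$ is $C=W^{-1}\circ W^{-1}$ with its $n$-th row replaced by the all-ones row, observes that $W_2(n)=W^{-1}(n)\circ W^{-1}(n)$ where $W^{-1}(n)$ is $W^{-1}$ with the same row replacement, and argues positivity of this determinant via the Schur product theorem together with $\det(W^{-1}(n))=\det(W^{-1})\sum_j W_{nj}=\det(W^{-1})>0$; relabeling nodes handles every other row.
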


\begin{proof}
Let $W_2 = W^{-1} \circ W^{-1}$, we have $W_2^{-1} = B$. Let $W_2^*, (W^{-1})^*$ be the adjoint matrix of $W_2, W^{-1}$ respectively. We have $B = W_2^{-1} = \frac{1}{\det(W_2)}W_2^*$. Consider the sum of the $n^{th}$ row of $B$, to prove it is positive, since $\det(W_2)$ is positive, we only need to prove the sum of the $n^{th}$ row of $W_2^*$ is positive. The $n^{th}$-row sum of $W_2^*$ is $\det(W_2(n))$ where $W_2(n)$ is generated by replacing the $n^{th}$ row of $W_2$ to all one vector. Similarly, we can define $W^{-1}(n)= \begin{bmatrix}
    a_{11} & \cdots & a_{1n}
\\ \vdots & \ddots & \vdots
\\ a_{(n-1)1} & \cdots & a_{(n-1)n}
\\ 1 & \cdots & 1
\end{bmatrix}$. Since $W^{-1}$ is positive definite, the first $(n-1)$ leading principal minors of $W^{-1}(n) $, which are exactly the same as $W^{-1}$, are positive. On the other hand, $\det(W^{-1}(n)) = \sum_{j=1}^n (W^{-1})^*_{nj} = \det(W^{-1})\sum_{j=1}^n W_{nj} = \det(W^{-1}) > 0 $. Thus, $W^{-1}(n)$ is positive definite. Note that $W_2(n) = W^{-1}(n) \circ W^{-1}(n)$, from Schur product theorem, $W_2(n)$ is positive definite. Thus $\det(W_2(n)) > 0$, which means $\sum_{j=1}^n b_{nj} > 0$. By relabeling each node to $n$, we can get $\sum\limits_{j=1}^n b_{ij} > 0$ for all $i \in [n]$.
\end{proof}

We first prove Lemma \ref{lem:closed-form} which will guide us to derive the Theorem \ref{thm: general tight bound}.

\noindent \textbf{Lemma \ref{lem:closed-form}. } \emph{When  $\sum_{j=1}^n \frac{b_{ij}}{(\sum_{k=1}^n b_{jk})^2} \ge 0$ for all $i \in [n]$, the closed-form solution of Equation \ref{eq:opt} is $\sum_{i=1}^n\frac{1}{\sum_{j=1}^n b_{ij}}\cd\frac{k}{\epsilon}$ when $\lambda_i = \sum_{j=1}^n \frac{b_{ij}}{(\sum_{k=1}^n b_{jk})^2}\cd\frac{k^2}{\epsilon^2}$ and $m_i = \frac{1}{\sum_{j=1}^n b_{ij}}\cd\frac{k}{\epsilon}$ for all $i \in [n]$}

\begin{proof}
From Lemma \ref{lem:dual}, strong duality holds. Thus, we can only consider the solution of dual form.
\begin{align*}
    \max_{\lambda_1, \cdots, \lambda_n} \quad &2\sum_{i=1}^n \sqrt{\sum_{j=1}^n \lambda_j a_{ij}^2} - \frac{\epsilon}{k}\sum_{i=1}^n \lambda_i\\
    \textrm{s.t.} \quad & \lambda_i \ge 0, \forall i
\end{align*}

Let $g(\lambda)$ be the objective function of Equation \ref{eq:dual-eq} Setting gradient to zero, we have $\frac{\partial g}{\partial \lambda_k} = \sum_{i=1}^n \frac{a_{ik}^2}{\sqrt{\sum_{j=1}^n \lambda_j a_{ij}^2}} - \frac{\epsilon}{k} =  \sum_{i=1}^n \frac{a_{ki}^2}{\sqrt{\sum_{j=1}^n \lambda_j a_{ij}^2}} - \frac{\epsilon}{k} = 0$ for any $i$. Let $\frac{1}{\sqrt{\sum_{j=1}^n \lambda_j a_{ij}^2}} = p(i)$, we have 
$\begin{bmatrix}
a_{11}^2 &\cdots& a_{1n}^2
\\ a_{21}^2 &\cdots &a_{2n}^2
\\ \vdots & \ddots& \vdots
\\ a_{n1}^2 & \cdots & a_{nn}^2
\end{bmatrix}
\begin{bmatrix}
p(1) \\ \vdots \\ p(n)
\end{bmatrix}
= \begin{bmatrix}
\frac{\epsilon}{k} \\ \vdots \\ \frac{\epsilon}{k}
\end{bmatrix}$. The matrix here is equivalent to $(W^{-1}) \circ (W^{-1})$ where $\circ$ denotes the Hadamard product of two matrices. Since $W$ is positive definite, $W^{-1}$ is also positive definite. From Schur product theorem, $(W^{-1}) \circ (W^{-1})$ is positive definite and then is invertible. Since $B = ((W^{-1}) \circ (W^{-1}))^{-1}$ and $[B]_{ij} = b_{ij}$, we have $p(i) = \frac{1}{M(\epsilon)} \sum_{j=1}^n b_{ij}$. In Lemma \ref{b>0}, we have shown that $\sum_{j=1}^n b_{ij} > 0$, thus, this solution is valid. Considering $\lambda_i$, we have $\sum_{j=1}^n \lambda_j a_{ij}^2 = \frac{1}{p^2(i)}$ for any $i$. Thus 
$\begin{bmatrix}
a_{11}^2 &\cdots& a_{1n}^2
\\ a_{21}^2 &\cdots &a_{2n}^2
\\ \vdots & \ddots& \vdots
\\ a_{n1}^2 & \cdots & a_{nn}^2
\end{bmatrix}
\begin{bmatrix}
\lambda_1 \\ \vdots \\ \lambda_n
\end{bmatrix}
= \begin{bmatrix}
\frac{1}{p^2(1)} \\ \vdots \\ \frac{1}{p^2(n)}
\end{bmatrix}$. We have $\lambda_i = \sum_{j =1}^n \frac{b_{ij}}{p^2(j)} = \sum_{j=1}^n \frac{b_{ij}}{(\sum_{k=1}^n b_{jk})^2}\cd\frac{k^2}{\epsilon^2}$ (Note that it can be negative). This solution is valid if and only if $\sum_{j=1}^n \frac{b_{ij}}{(\sum_{k=1}^n b_{jk})^2} \ge 0$ for any $i \in [n]$. If this solution is valid, the optimal solution is:
\begin{align*}
&2\sum_{i=1}^n \sqrt{\sum_{j=1}^n \lambda_j a_{ij}^2} - \frac{\epsilon}{k}\sum_{i=1}^n \lambda_i
\\&= 2\sum_{i=1}^n \frac{1}{p(i)} - \frac{\epsilon}{k}\sum_{i=1}^n\sum_{j=1}^n \frac{b_{ij}}{(\sum_{k=1}^n b_{jk})^2}\cd\frac{k^2}{\epsilon^2}
\\&= \frac{2k}{\epsilon}\sum_{i=1}^n\frac{1}{\sum_{j=1}^n b_{ij}} - \frac{k}{\epsilon} \sum_{j=1}^n \frac{\sum_{i=1}^nb_{ij}}{(\sum_{k=1}^n b_{kj})^2}
\\&= \frac{2k}{\epsilon}\sum_{i=1}^n\frac{1}{\sum_{j=1}^n b_{ij}} - \frac{k}{\epsilon} \sum_{i=1}^n \frac{1}{\sum_{j=1}^n b_{ij}}
\\&= \frac{k}{\epsilon}\sum_{i=1}^n\frac{1}{\sum_{j=1}^n b_{ij}}
\end{align*}
From KKT conditions, setting the gradient of the Lagrangian function of (P) to zero, we have 
\begin{align*}
m_i &= \sqrt{\sum_{k=1}^n \lambda_k a_{ik}^2}
\\&= \frac{k}{\epsilon} \sqrt{\sum_{k=1}^n \sum_{t=1}^n\frac{b_{kt}}{(\sum_{j=1}^n b_{tj})^2} a_{ik}^2}
\\&= \frac{k}{\epsilon} \sqrt{\sum_{t=1}^n \frac{1}{(\sum_{j=1}^n b_{tj})^2} \sum_{k=1}^n a_{ik}^2b_{kt}}
\\&= \frac{k}{\epsilon} \sqrt{\sum_{t=1}^n \frac{1}{(\sum_{j=1}^n b_{tj})^2}  I_{it}}
\\&= \frac{1}{\sum_{j=1}^n b_{ij}}\cd\frac{k}{\epsilon}
\end{align*}
where $I_{it}$ is the element on the $i$-th row and $t$-th column of identity matrix $I$. The fourth equality comes from $B = (W^{-1} \circ W^{-1})^{-1}$ where the $ (W^{-1} \circ W^{-1})_{ij} = a_{ij}^2$.
\end{proof}

\noindent \textbf{Theorem \ref{thm: general tight bound}. }  Let $\gamma_i = \max\{0, \sum_{j=1}^n \frac{b_{ij}}{(\sum_{k=1}^n b_{jk})^2}\}$ for every $i \in [n]$, then we have\begin{equation*}
\max\{2\sum_{i=1}^n \sqrt{\sum_{j=1}^n \gamma_j (W^{-1}_{ij})^2} - \sum_{i=1}^n \gamma_i, 1\}\cd\frac{k}{\epsilon} \le \sum_{i \in [n]} m_i^* \le \min\{\sum_{i=1}^n\frac{1}{\sum_{j=1}^n b_{ij}}, n \} \cd\frac{k}{\epsilon}\end{equation*}

Assigning $\frac{1}{\sum_{j=1}^n b_{ij}}\cd\frac{k}{\epsilon}$ samples to node $i$ suffices to solve the TSC problem.

\begin{proof}
(1) $\sum_{i \in [n]} m_i^*  \le \frac{nk}{\epsilon}$ is trivial because $m_i = \frac{k}{\epsilon}$ for any $i$ is a feasible solution for the Equation \ref{eq:opt} because $\sum_{j=1}^n(W_{ij}^{-1})^2 \le 1$ for any $i \in [n]$

(2) Let $\bar{m} = (\frac{1}{m_1}, \cdots, \frac{1}{m_n})^\top$, $\bar{M} = (\frac{\epsilon}{k}, \cdots, \frac{\epsilon}{k})^\top$. In Equation \ref{eq:opt}, setting the first constraints to equality for every $i \in [n]$, we have $(W^{-1} \circ W^{-1})\bar{m} = \bar{M}$. From Lemma \ref{b>0}, the solution $\Bar{m} = B\bar{M}$ is feasible ($m_i > 0, \forall i$). Now $\frac{1}{m_i}  = \frac{\epsilon\sum_{j=1}^n b_{ij}}{k}$, $m_i = \frac{1}{\sum_{j=1}^n b_{ij}}\cd\frac{k}{\epsilon}$, and $\sum_{i=1}^n m_i = \sum_{i=1}^n\frac{1}{\sum_{j=1}^n b_{ij}}\cd\frac{k}{\epsilon}$. Since $\bar{m}$ is in the feasible region, $\sum_{i \in [n]} m_i^*  \le \sum_{i=1}^n\frac{1}{\sum_{j=1}^n b_{ij}}\cd\frac{k}{\epsilon}$. Thus,  $\sum_{i \in [n]} m_i^*  \le \sum_{i=1}^n\frac{1}{\sum_{j=1}^n b_{ij}}\cd\frac{k}{\epsilon}$. Assigning $\frac{1}{\sum_{j=1}^n b_{ij}}\cd\frac{k}{\epsilon}$ samples to node $i$ suffies to solve the TSC problem.

(3) In the dual form Equation \ref{eq:dual-eq}, $\lambda_i = \frac{k^2}{n\epsilon^2}$ for any $i$ is a feasible solution. From strong duality, we have
\begin{align*}
   \sum_{i \in [n]} m_i^* &\ge \frac{2k}{\epsilon}\sum_{i=1}^n\sqrt{\frac{1}{n}\sum_{j=1}^n(W_{ij}^{-1})^2} - \frac{k}{\epsilon}
    \\&\ge \frac{2k}{\epsilon}\sum_{i=1}^n\sqrt{\frac{1}{n}
    \left(\frac{\sum_{j=1}^nW_{ij}^{-1}}{\sqrt{n}}\right)^2} - \frac{k}{\epsilon} \tag{Cauchy-Schwarz}
    \\&\ge \frac{k}{\epsilon}\tag{$\sum_{j=1}^nW_{ij}^{-1}$ for all $i$}
\end{align*}

(4) The other lower bound is inspired by Lemma \ref{lem:closed-form} and truncate all $\lambda_i$ to
$\max\{0, \sum_{j=1}^n \frac{b_{ij}}{(\sum_{k=1}^n b_{jk})^2}\}$ to ensure they are in the feasible set of the dual form Equation \ref{eq:dual-eq}.
\end{proof}

\section{Missing proof in Section \ref{sec:experiment}}
\label{app:exp-setup}
\begin{lemma}
The optimization Equation \ref{eq:opt} is equivalent to the following second-order cone programming.    

\begin{align*}
    \min_{t_1, \cdots, t_n, z_1, \cdots, z_n} \quad &\sum_{i=1}^n z_i\\
    \textrm{s.t.} \quad &  \sum_{j=1}^n  a_{ij}^2 t_i \le \frac{\epsilon}{k}, \quad \forall i\\
    &||[t_i - z_i , 2]||_2 \le t_i + z_i,  \quad \forall i
    \\&t_i > 0, \quad \forall i
\end{align*}
The optimal solution $(z_1^*, \cdots, z_n^*)$ is equal to the optimal solution $(m_1^*, \cdots, m_n^*)$ in the optimization (P).
\end{lemma}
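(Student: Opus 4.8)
The plan is to exhibit an explicit change of variables between the feasible regions of (P) and the SOCP that preserves the objective value, so that the two programs have the same optimum and their optimizers coincide. Writing $a_{ij}=(W^{-1})_{ij}$, the constraint in (P) reads $\sum_{j=1}^n a_{ij}^2/m_j \le 1/M(\epsilon)$. The natural substitution is to let $t_j$ play the role of $1/m_j$ and $z_i$ the role of $m_i$; the entire purpose of the SOCP reformulation is to encode the reciprocal relation $t_j m_j = 1$ (equivalently $t_i z_i \ge 1$) as a conic constraint, after which the constraint $\sum_{j=1}^n a_{ij}^2 t_j \le 1/M(\epsilon)$ recovers the original one.

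First I would verify that the cone constraint is exactly a hyperbolic constraint. Squaring $\|[t_i - z_i,\,2]\|_2 \le t_i + z_i$ gives $(t_i - z_i)^2 + 4 \le (t_i + z_i)^2$, which simplifies to $4 \le 4 t_i z_i$, i.e. $t_i z_i \ge 1$; moreover the constraint forces $t_i + z_i \ge 0$, and together with $t_i > 0$ this yields $z_i > 0$. Thus the second and third SOCP constraints are jointly equivalent to the conditions $t_i > 0$, $z_i > 0$, and $t_i z_i \ge 1$.

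Next I would establish the two directions of feasibility. Given a feasible $(m_1,\dots,m_n)$ for (P), I set $t_j = 1/m_j$ and $z_i = m_i$; then $t_i z_i = 1$ so the cone constraint holds, the first SOCP constraint becomes $\sum_j a_{ij}^2 t_j = \sum_j a_{ij}^2/m_j \le 1/M(\epsilon)$, we have $t_i > 0$, and the objective is unchanged since $\sum_i z_i = \sum_i m_i$. Conversely, given a feasible SOCP point, I set $m_j = 1/t_j$ (legitimate because $t_j > 0$); the (P) constraint follows from $\sum_j a_{ij}^2/m_j = \sum_j a_{ij}^2 t_j \le 1/M(\epsilon)$ and $m_j > 0$, while the cone constraint gives $z_i \ge 1/t_i = m_i$, so $\sum_i z_i \ge \sum_i m_i$. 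Hence the (P) objective at this point does not exceed the SOCP objective.

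Finally I would combine these two inequalities on the objectives to conclude that the optimal values are equal, and then argue tightness at the optimum: if $z_i > 1/t_i$ held for some $i$ at an SOCP optimum, one could strictly decrease $z_i$ while remaining feasible, contradicting optimality, so $z_i^\star = 1/t_i^\star = m_i^\star$ for every $i$. This pins down $(z_1^\star,\dots,z_n^\star) = (m_1^\star,\dots,m_n^\star)$ as claimed. The only steps requiring care are this last tightness argument and the implicit sign conditions coming from the cone constraint ($t_i + z_i \ge 0$, hence $z_i > 0$); the rest is the routine squaring of the cone constraint and bookkeeping of the reciprocal change of variables.
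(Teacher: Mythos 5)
Your proposal is correct and takes essentially the same route as the paper's proof: the reciprocal substitution $t_i = 1/m_i$ together with auxiliary variables $z_i$ satisfying $z_i t_i \ge 1$, and the observation that this hyperbolic constraint is exactly the squared form of $\|[t_i - z_i, 2]\|_2 \le t_i + z_i$. The paper states this equivalence tersely, while your two-directional feasibility check and the perturbation argument showing $z_i^* = 1/t_i^* = m_i^*$ at the optimum make explicit the details the paper leaves implicit (and your reading of the first constraint as $\sum_{j=1}^n a_{ij}^2 t_j$, rather than the paper's typographical $t_i$, is the correct one).
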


\begin{proof}
The original optimization Equation \ref{eq:opt} is :
\begin{align*}
    \min_{m_1, \cdots, m_n} \quad &\sum_{i=1}^n m_i\\
    \textrm{s.t.} \quad &  \sum_{j=1}^n  \frac{a_{ij}^2}{m_j} \le \frac{\epsilon}{k}, \quad \forall i\\
    &m_i > 0, \quad \forall i
\end{align*}
Let $t_i = \frac{1}{m_i}$ and $z_i \ge \frac{1}{t_i}$, the programming is equivalent to
\begin{align*}
    \min_{t_1, \cdots, t_n, z_1, \cdots, z_n} \quad &\sum_{i=1}^n z_i\\
    \textrm{s.t.} \quad &  \sum_{j=1}^n  a_{ij}^2 t_i \le \frac{\epsilon}{k}, \quad \forall i \\
    &z_it_i \ge 1, \quad \forall i\\
    &t_i > 0, \quad \forall i
\end{align*}
The constraint $z_it_i \ge 1$ is equivalent to $||[t_i - z_i , 2]||_2 \le t_i + z_i$. Thus, this programming is equivalent to the following second-order cone programming.
\begin{align*}
    \min_{t_1, \cdots, t_n, z_1, \cdots, z_n} \quad &\sum_{i=1}^n z_i\\
    \textrm{s.t.} \quad &  \sum_{j=1}^n  a_{ij}^2 t_i \le \frac{\epsilon}{k}, \quad \forall i\\
    &||[t_i - z_i , 2]||_2 \le t_i + z_i,  \quad \forall i
    \\&t_i > 0, \quad \forall i
\end{align*}
\end{proof}
\section{More Experiments}
\label{app:more_experiments}
In this section, we will give more detailed experiments.  Our code can be found in \url{https://github.com/liuhl2000/Sample-Complexity-of-Opinion-Formation-on-Networks}. These experiments are done with seven 8-core Intel CPUs. The optimization is solved by running the solver until the convergence error is less than $1e-6$. All optimizations for our chosen networks could be finished in 150 iterations. The most time-consuming part is generating random weights 50 times and running corresponding experiments in Section \ref{subsec:tightness bounds}. Actually, there is little difference among these random experiments and in order to reproduce our experiments quickly, you can simply set the ``Iter num'' parameter to $1$ in our code.

\subsection{Distribution of Samples at the Optimal Point}
To study the distribution of samples, we assume all $v_{ij} = \alpha \ge 0$. In our experiments, we choose $\alpha = 0.01, 0.1, 1, 5, 10$. For figures in this section, the x-axis is node degree and the y-axis is the average or variance of $N^d = \{\frac{\epsilon m_i^*}{k}: d_i = d, i \in [n]\}$ where $m_i^*$ is the sample assigned to agent $i$ at the optimal point of Equation \ref{eq:opt}.

We generate scale-free networks using Barabasi–Albert preferential attachment \cite{barabasi1999emergence} with attachment parameter $2$ (Incoming node will attach to two existing nodes) for $n = 100, 200, 400, 600, 800, 1000$. Figure \ref{ALL_PL_S} and Figure \ref{ALL_PL_V} show the average sample number and variance for different $\alpha$.

\begin{figure}[htbp]
    \centering
    \subfigure[$\alpha = 0.01$]{ \includegraphics[width = 0.188\linewidth]{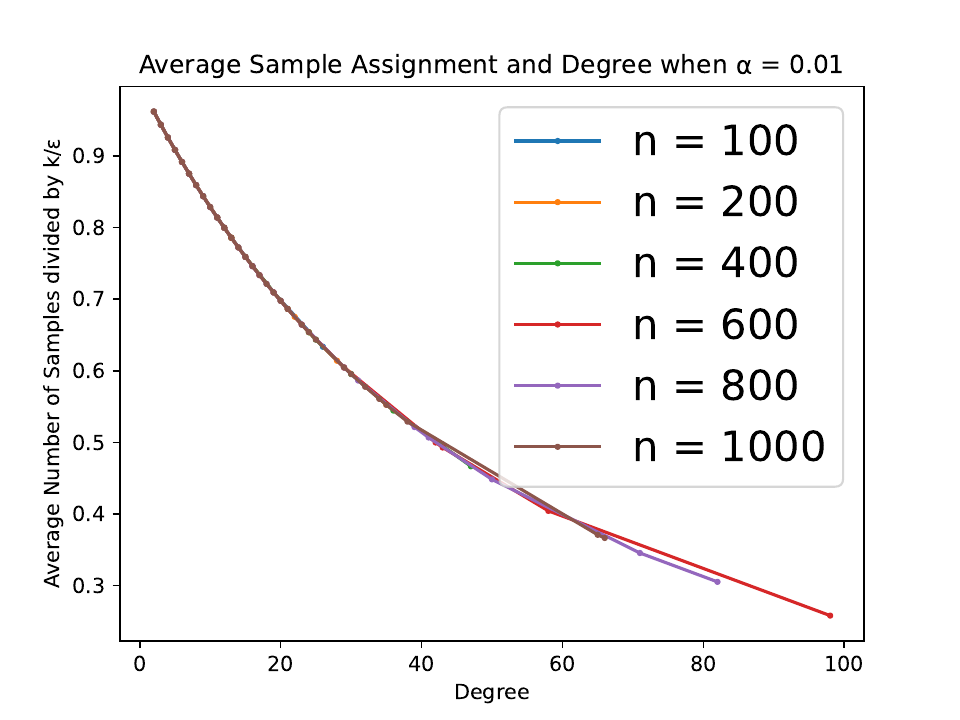}}
    \subfigure[$\alpha = 0.1$]{  \includegraphics[width = 0.188\linewidth]{figures/Degree_and_TSC/PL/sample_0.1.pdf}}
    \subfigure[$\alpha = 1$]{  \includegraphics[width = 0.188\linewidth]{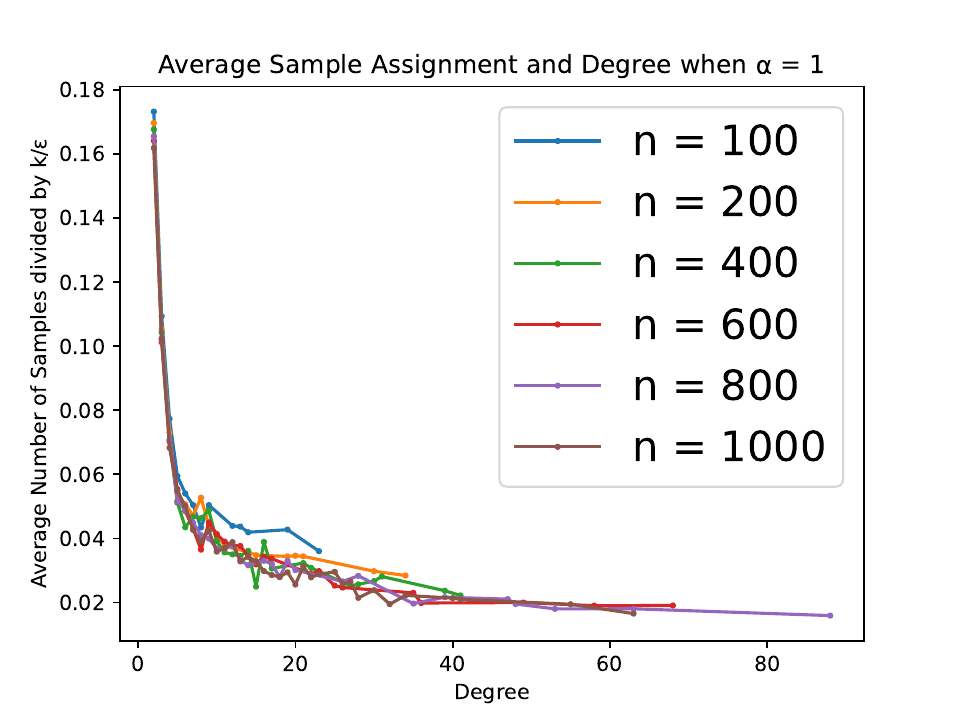}}
    \subfigure[$\alpha = 5$]{ \includegraphics[width = 0.188\linewidth]{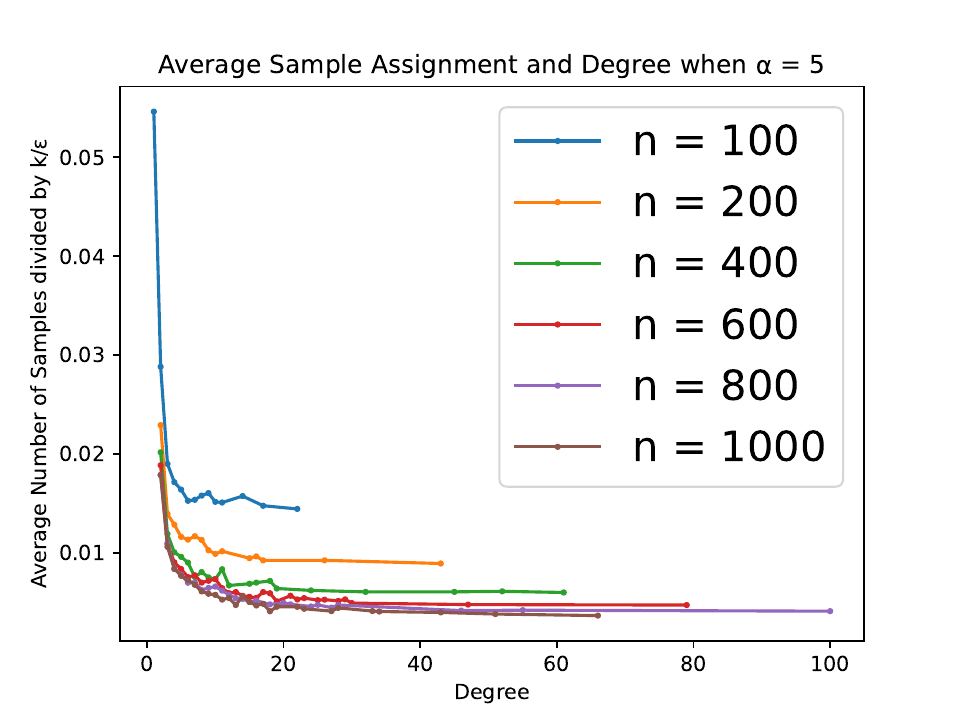}}
    \subfigure[$\alpha = 10$]{ \includegraphics[width = 0.188\linewidth]{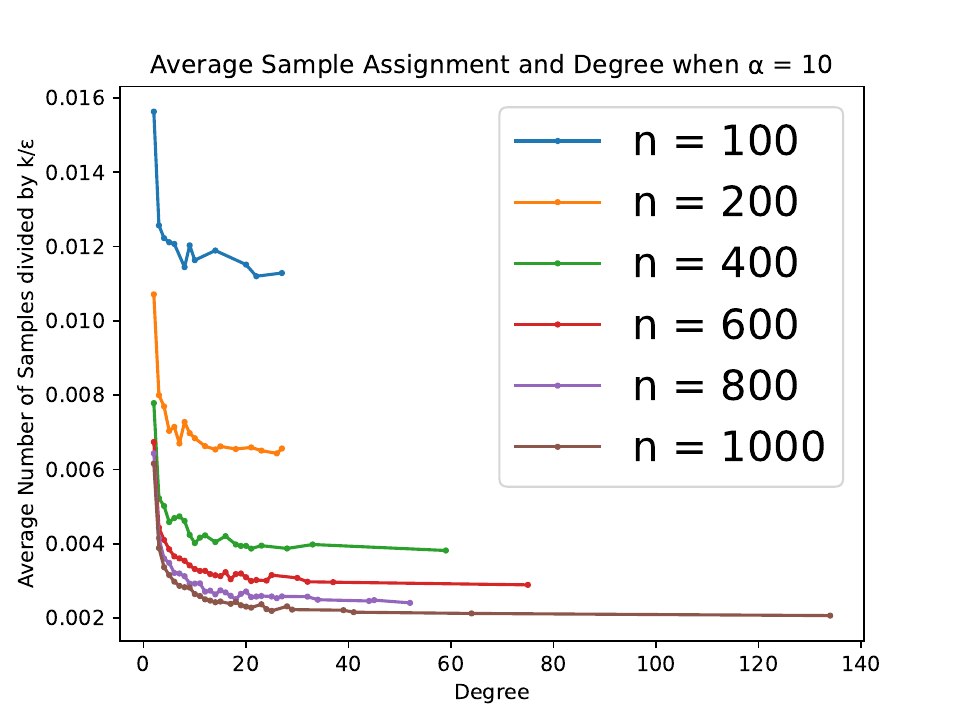}}
    \caption{\small
Relationship between average samples and degree of SF for different $\alpha$.}
    \label{ALL_PL_S}
\end{figure}

\begin{figure}[htbp]
    \centering
    \subfigure[$\alpha = 0.01$]{ \includegraphics[width = 0.188\linewidth]{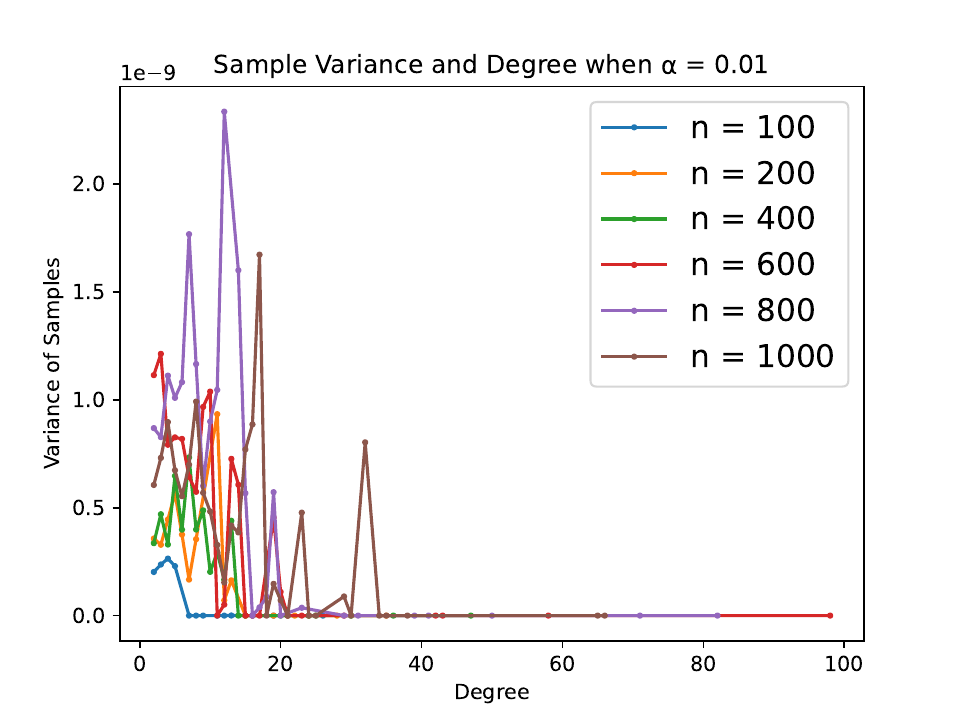}}
    \subfigure[$\alpha = 0.1$]{  \includegraphics[width = 0.188\linewidth]{figures/Degree_and_TSC/PL/variance_0.1.pdf}}
    \subfigure[$\alpha = 1$]{  \includegraphics[width = 0.188\linewidth]{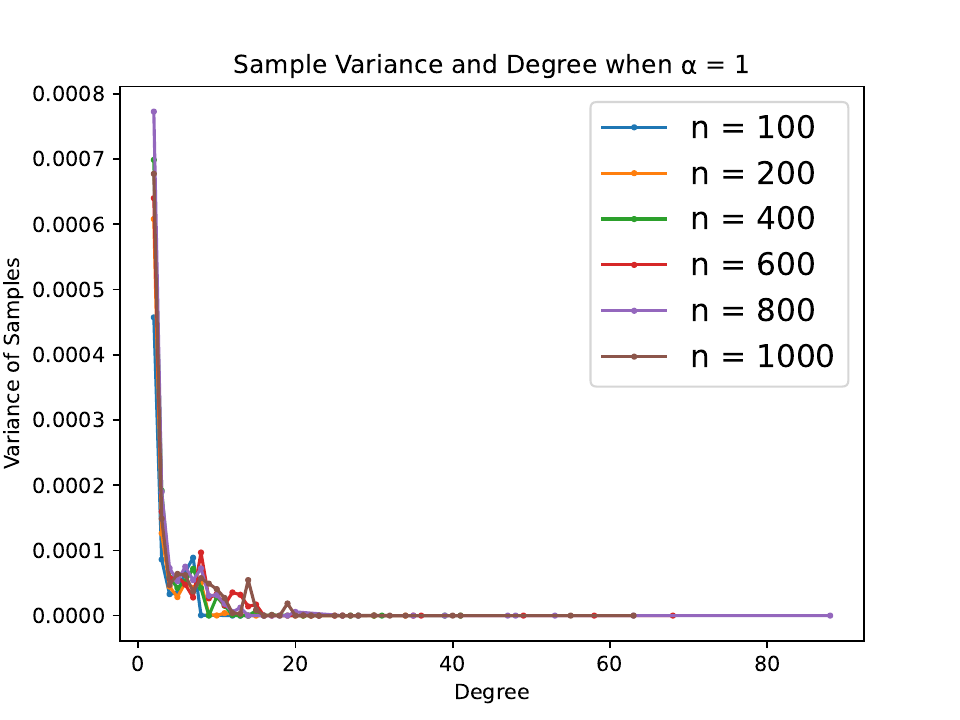}}
    \subfigure[$\alpha = 5$]{ \includegraphics[width = 0.188\linewidth]{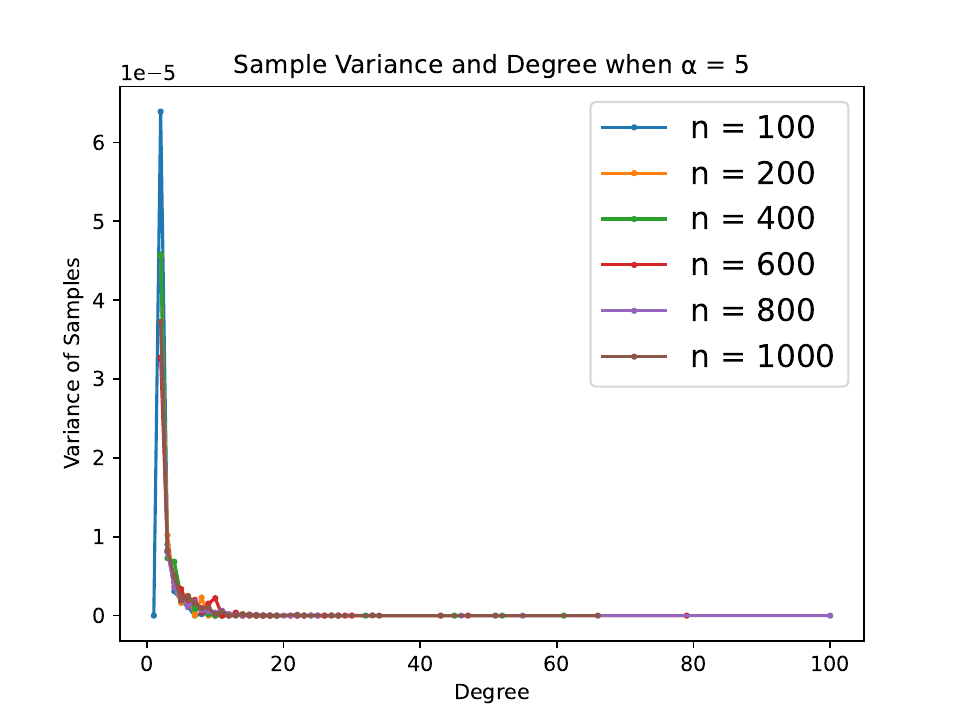}}
    \subfigure[$\alpha = 10$]{ \includegraphics[width = 0.188\linewidth]{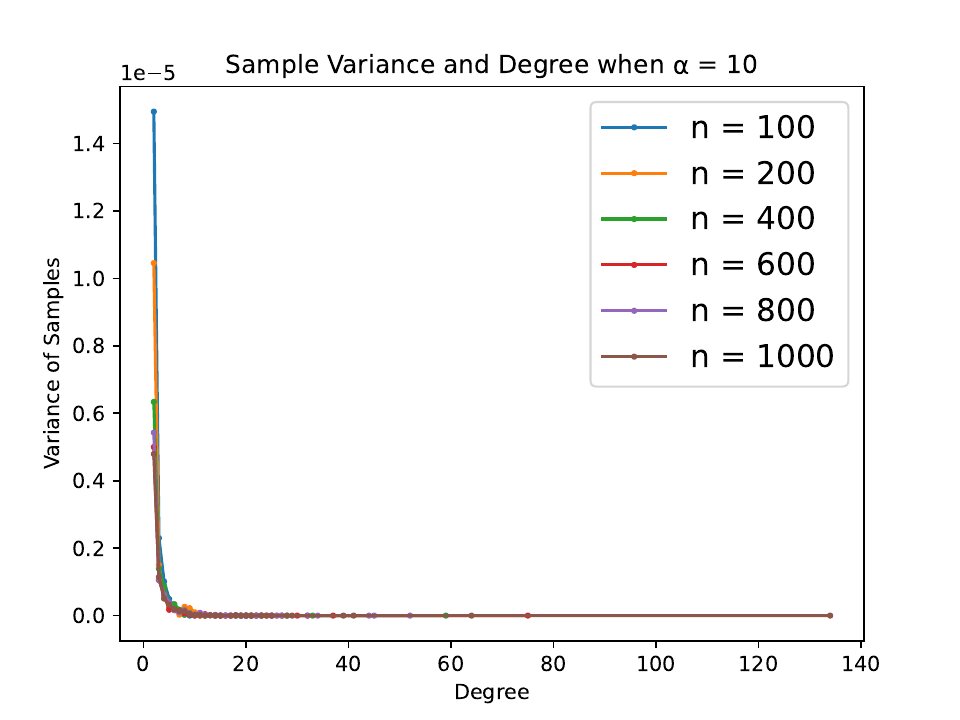}}
    \caption{Relationship between sample variance and degree of SF for different $\alpha$.}
    \label{ALL_PL_V}
\end{figure}

We use Erdos-Renyi random networks with probability $0.3$(Two nodes has the probability $0.3$ to have one edge) for $n = 600, 620, 650, 680, 700$. Figure \ref{ALL_ER_S} and Figure \ref{ALL_ER_V} show the average sample number and variance for different $\alpha$. 

\begin{figure}[htbp]
    \centering
    \subfigure[$\alpha = 0.01$]{ \includegraphics[width = 0.188\linewidth]{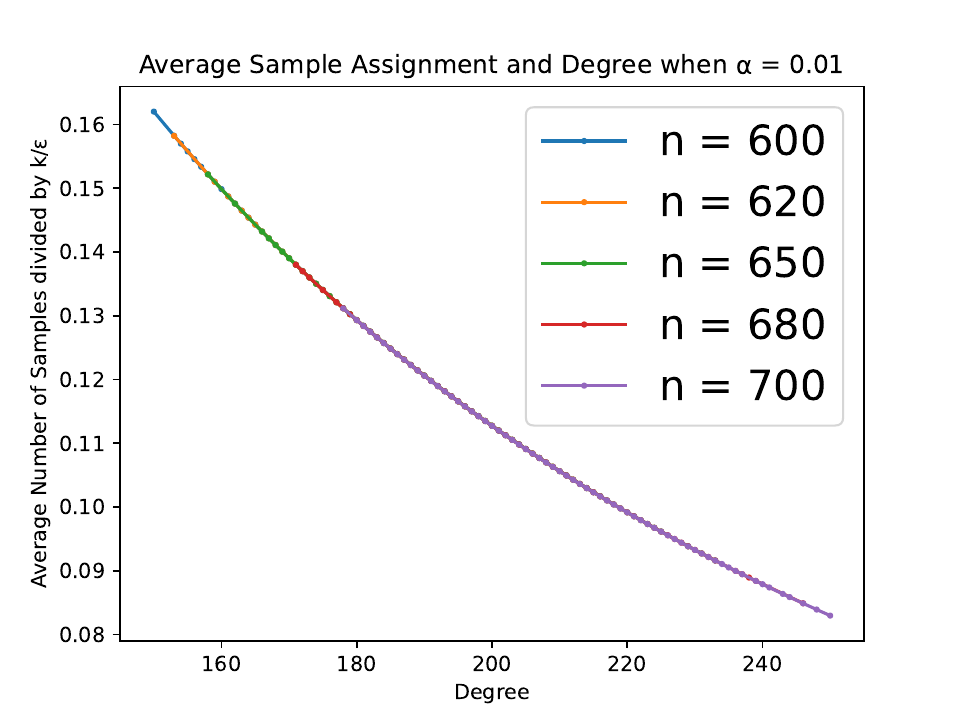}}
    \subfigure[$\alpha = 0.1$]{  \includegraphics[width = 0.188\linewidth]{figures/Degree_and_TSC/ER/sample_0.1.pdf}}
    \subfigure[$\alpha = 1$]{  \includegraphics[width = 0.188\linewidth]{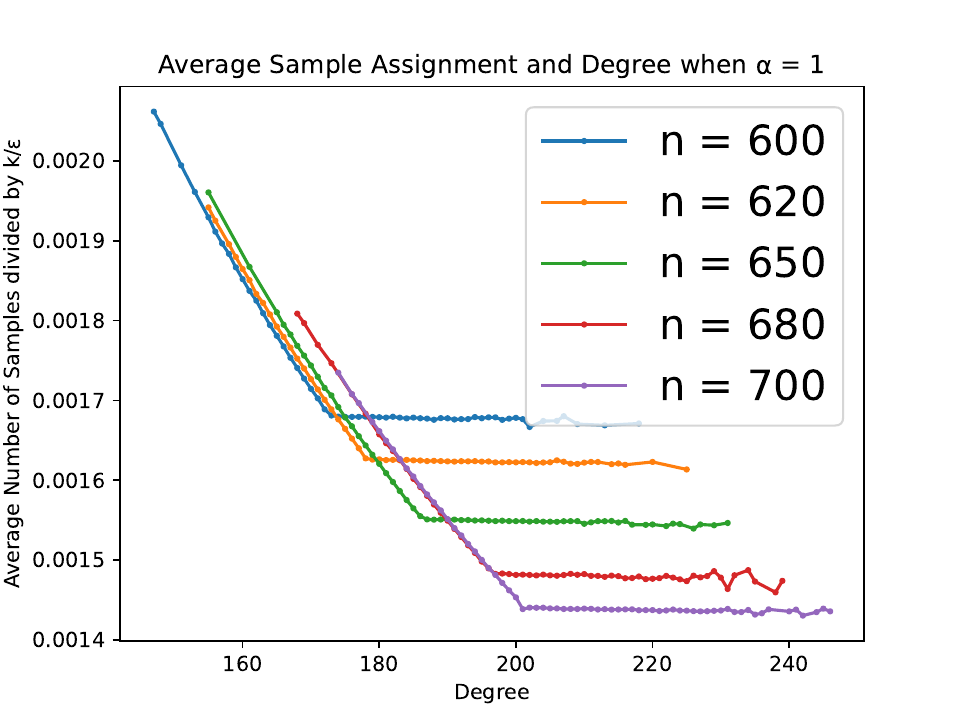}}
    \subfigure[$\alpha = 5$]{ \includegraphics[width = 0.188\linewidth]{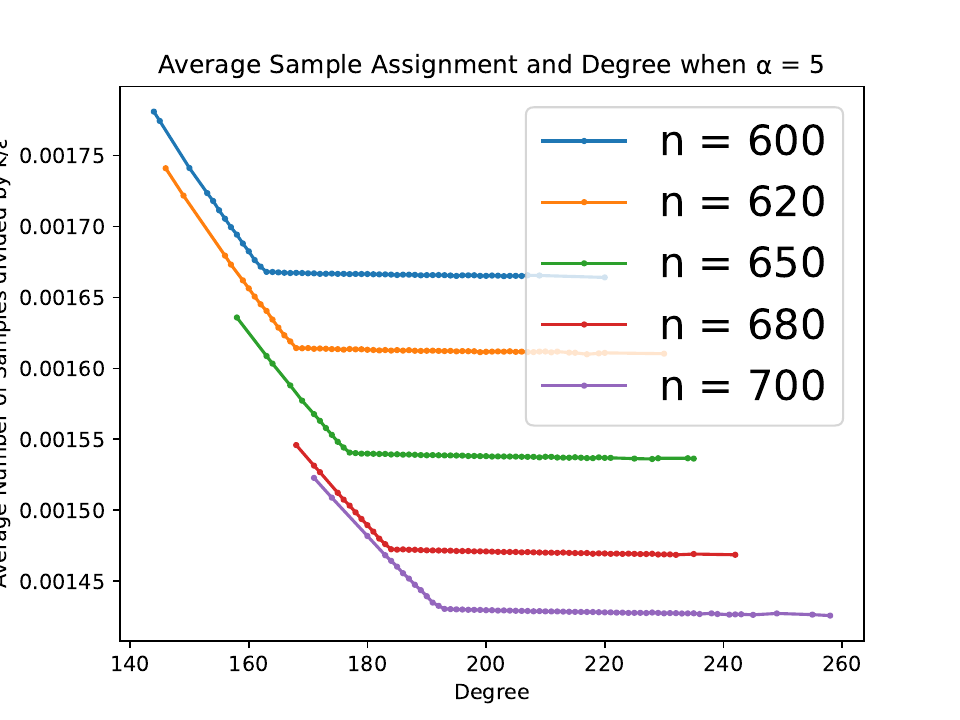}}
    \subfigure[$\alpha = 10$]{ \includegraphics[width = 0.188\linewidth]{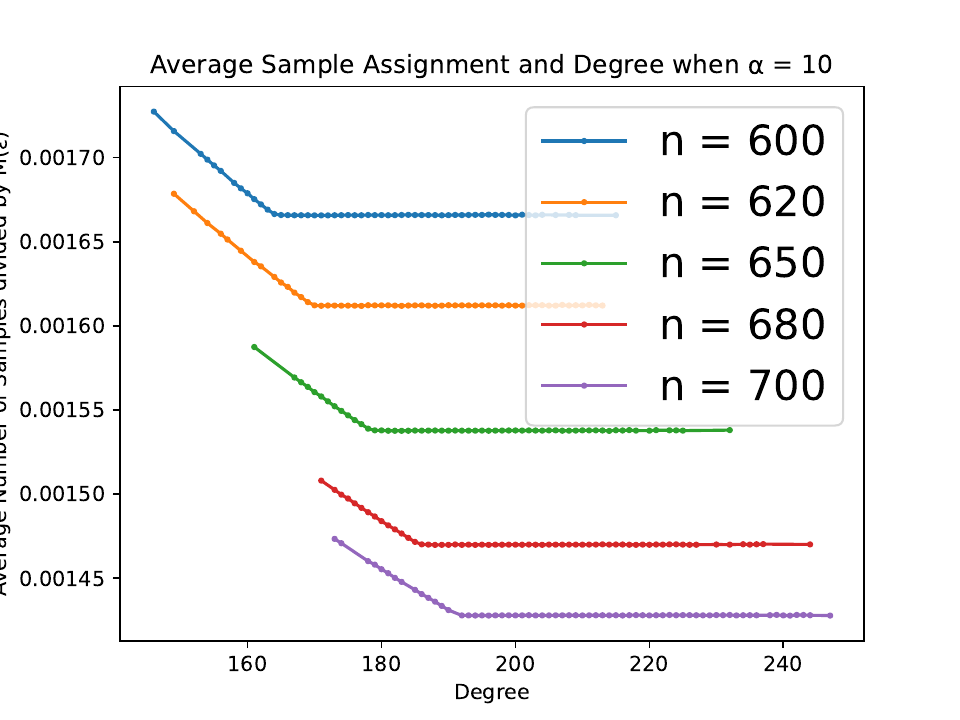}}
    \caption{Relationship between average samples and degree of ER for different $\alpha$.}
    \label{ALL_ER_S}
\end{figure}

\begin{figure}[htbp]
    \centering
    \subfigure[$\alpha = 0.01$]{ \includegraphics[width = 0.188\linewidth]{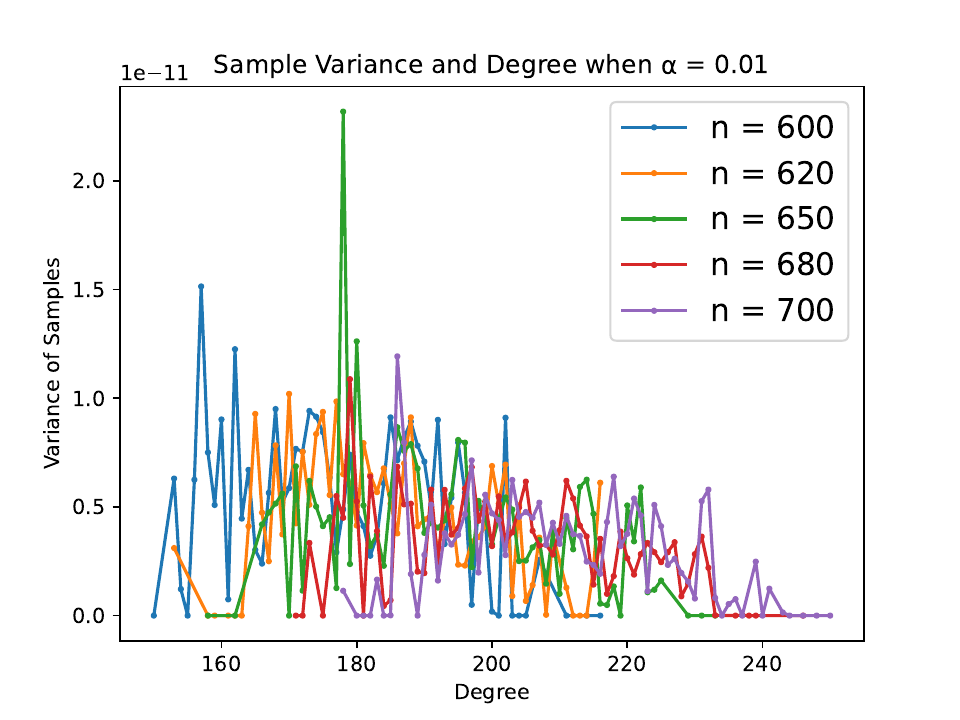}}
    \subfigure[$\alpha = 0.1$]{  \includegraphics[width = 0.188\linewidth]{figures/Degree_and_TSC/ER/variance_0.1.pdf}}
    \subfigure[$\alpha = 1$]{  \includegraphics[width = 0.188\linewidth]{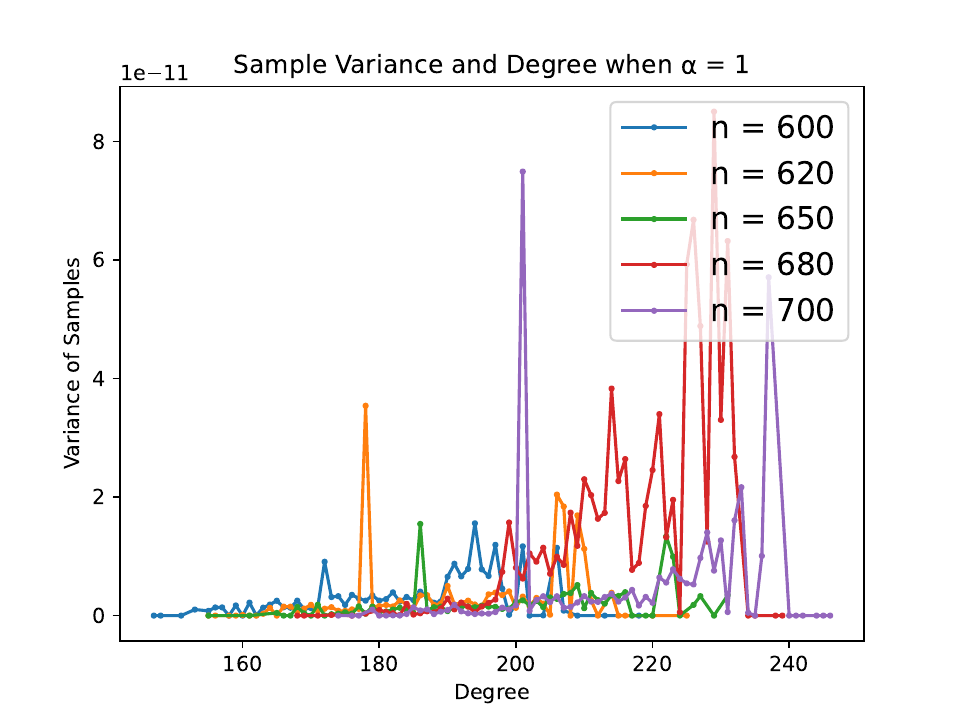}}
    \subfigure[$\alpha = 5$]{ \includegraphics[width = 0.188\linewidth]{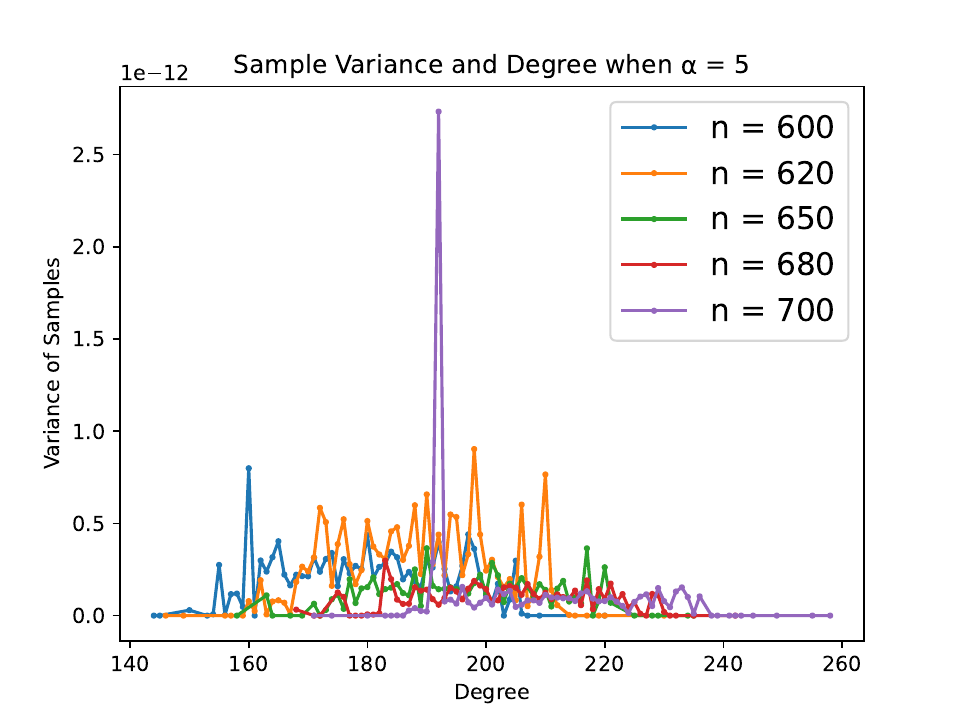}}
    \subfigure[$\alpha = 10$]{ \includegraphics[width = 0.188\linewidth]{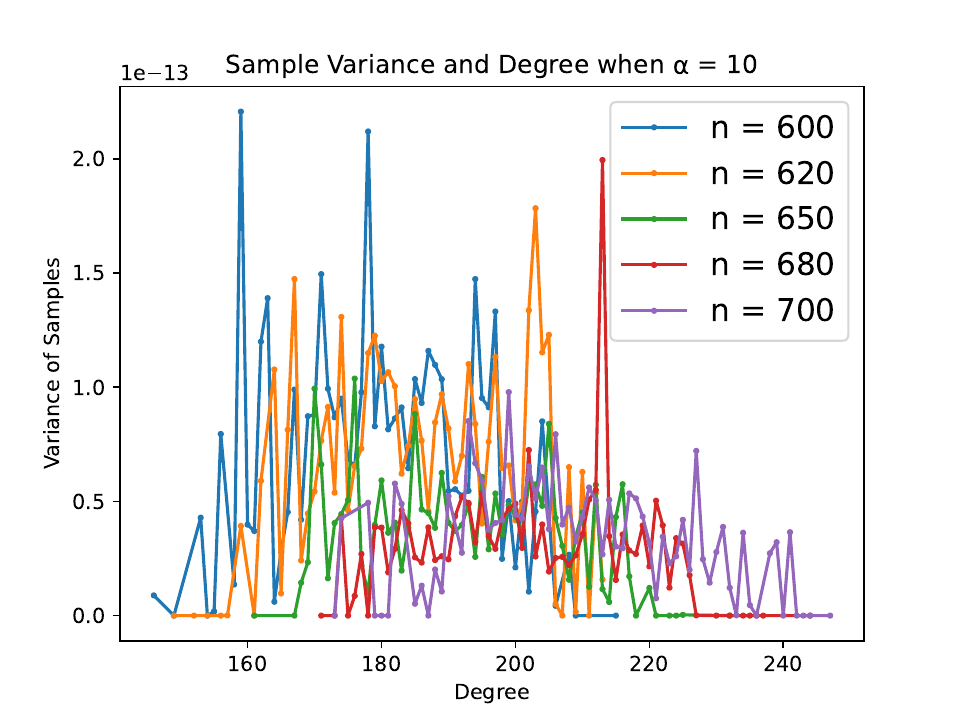}}
    \caption{Relationship between sample variance and degree of ER for different $\alpha$.}
    \label{ALL_ER_V}
\end{figure}

For real-world networks, Figure \ref{ALL_RN_S}, \ref{ALL_RN8_S}, and Figure \ref{ALL_RN_V} show the average sample number and variance for different $\alpha$. 
\begin{figure}[htbp]
    \centering
    \subfigure[$\alpha = 0.01$]{ \includegraphics[width = 0.188\linewidth]{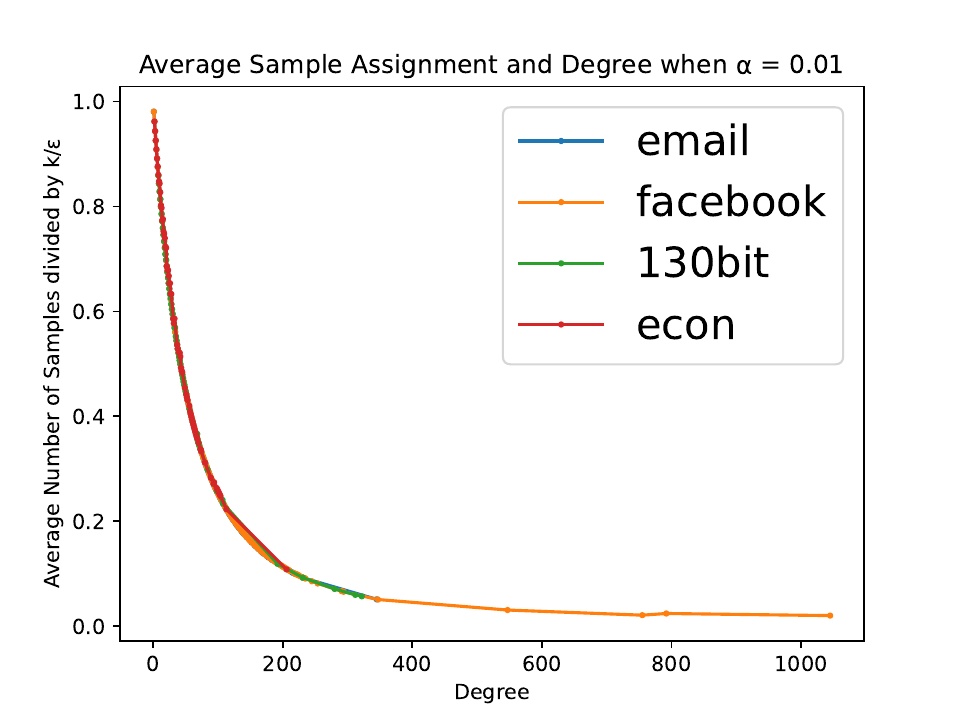}}
    \subfigure[$\alpha = 0.1$]{  \includegraphics[width = 0.188\linewidth]{figures/Degree_and_TSC/Real/avg_0.1.pdf}}
    \subfigure[$\alpha = 1$]{  \includegraphics[width = 0.188\linewidth]{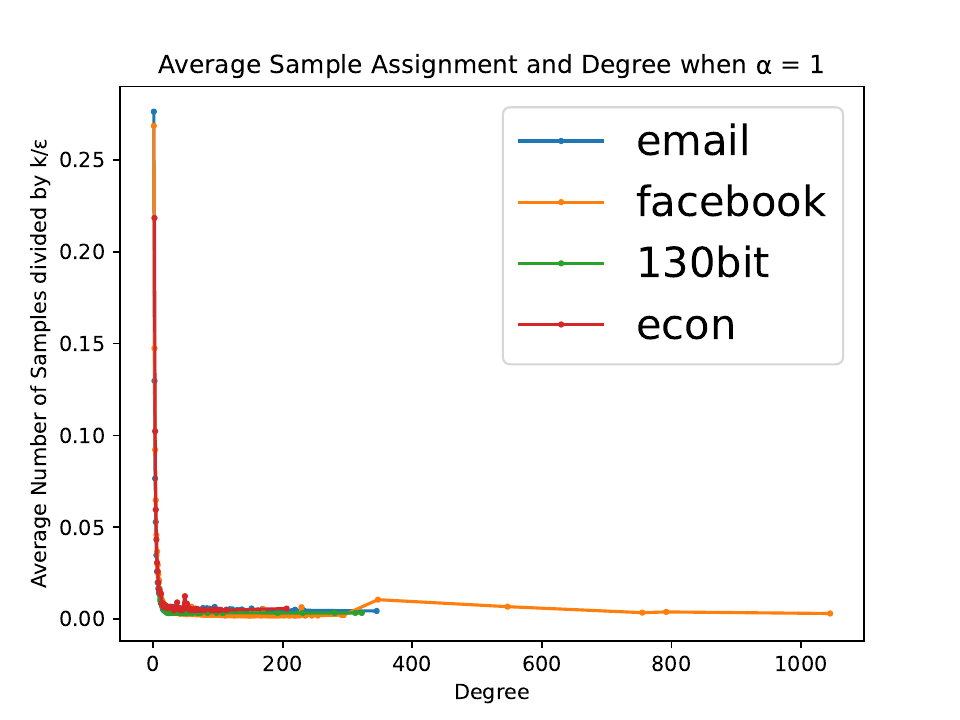}}
    \subfigure[$\alpha = 5$]{ \includegraphics[width = 0.188\linewidth]{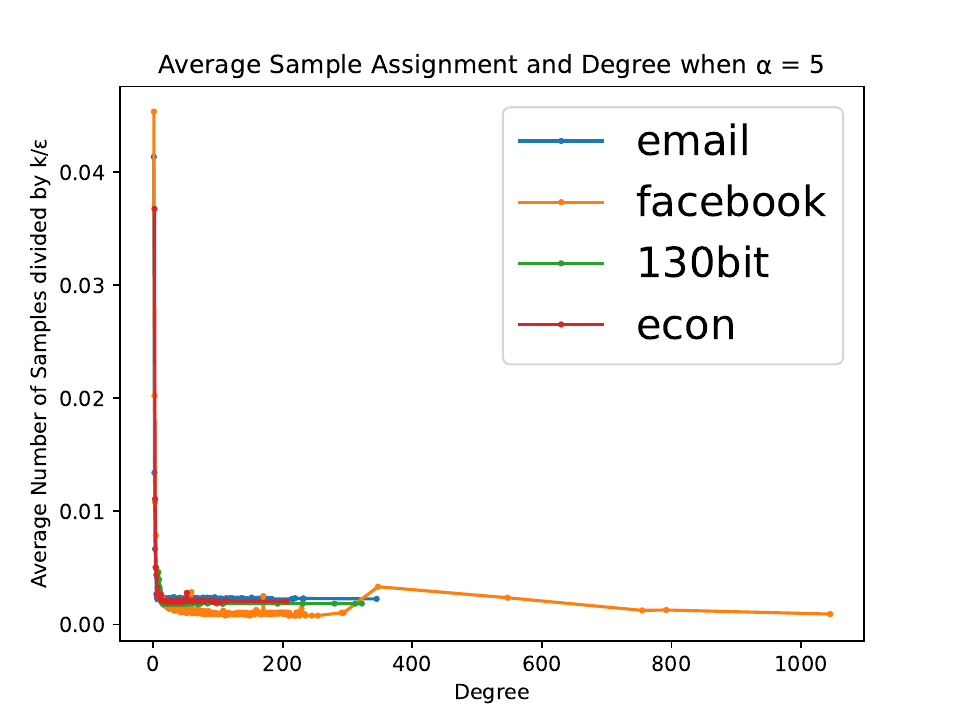}}
    \subfigure[$\alpha = 10$]{ \includegraphics[width = 0.188\linewidth]{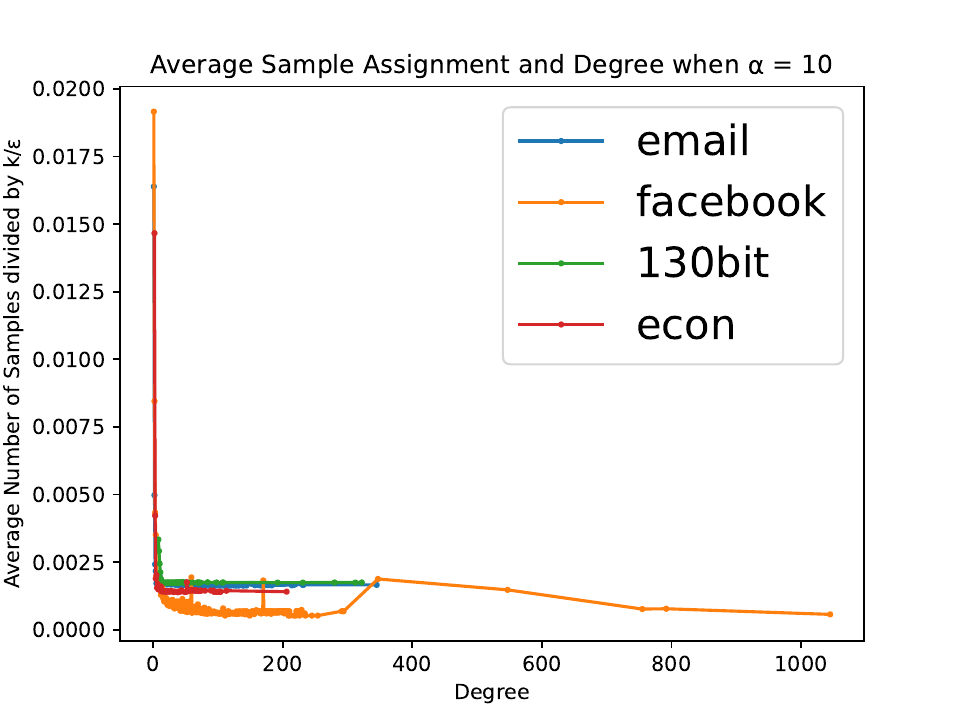}}
    \caption{Relationship between average samples and degree of RN for different $\alpha$.}
    \label{ALL_RN_S}
\end{figure}

\begin{figure}[htbp]
    \centering
    \subfigure[$\alpha = 0.01$]{ \includegraphics[width = 0.188\linewidth]{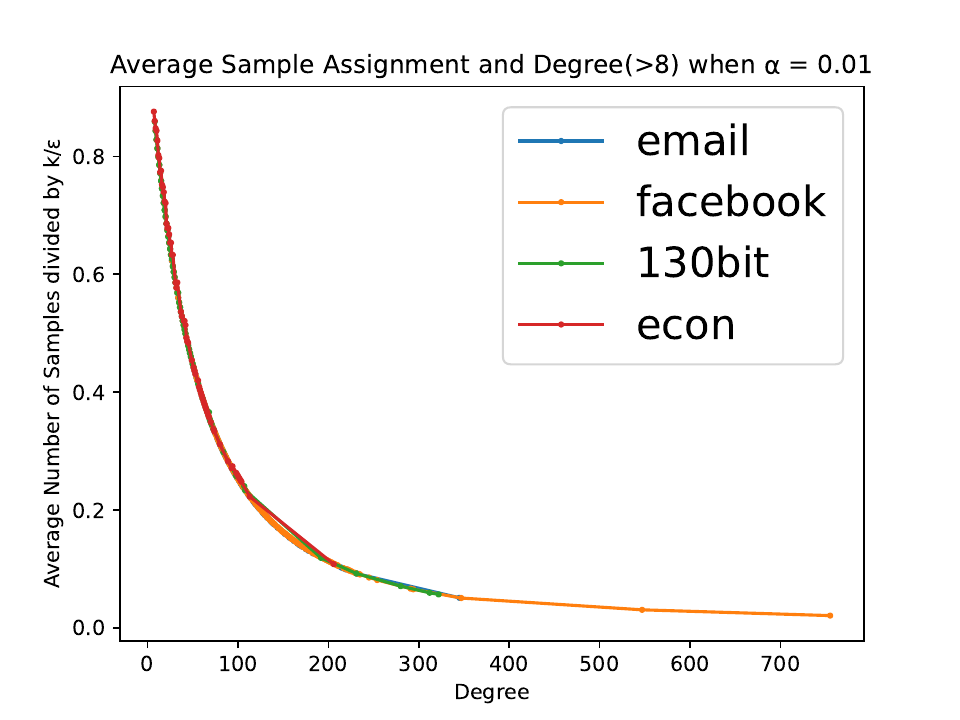}}
    \subfigure[$\alpha = 0.1$]{  \includegraphics[width = 0.188\linewidth]{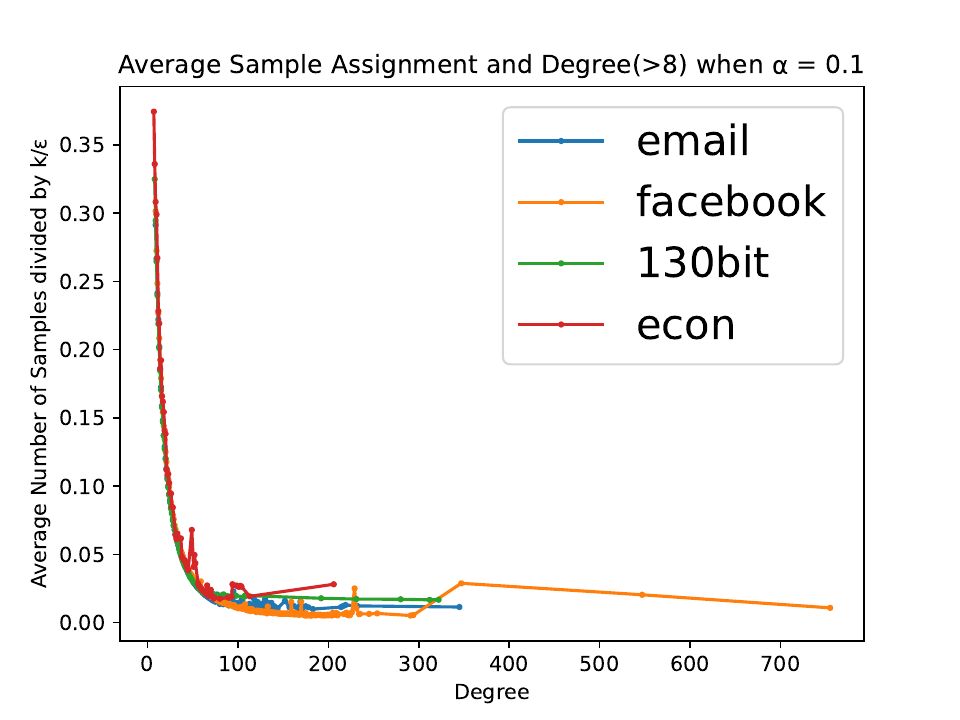}}
    \subfigure[$\alpha = 1$]{  \includegraphics[width = 0.188\linewidth]{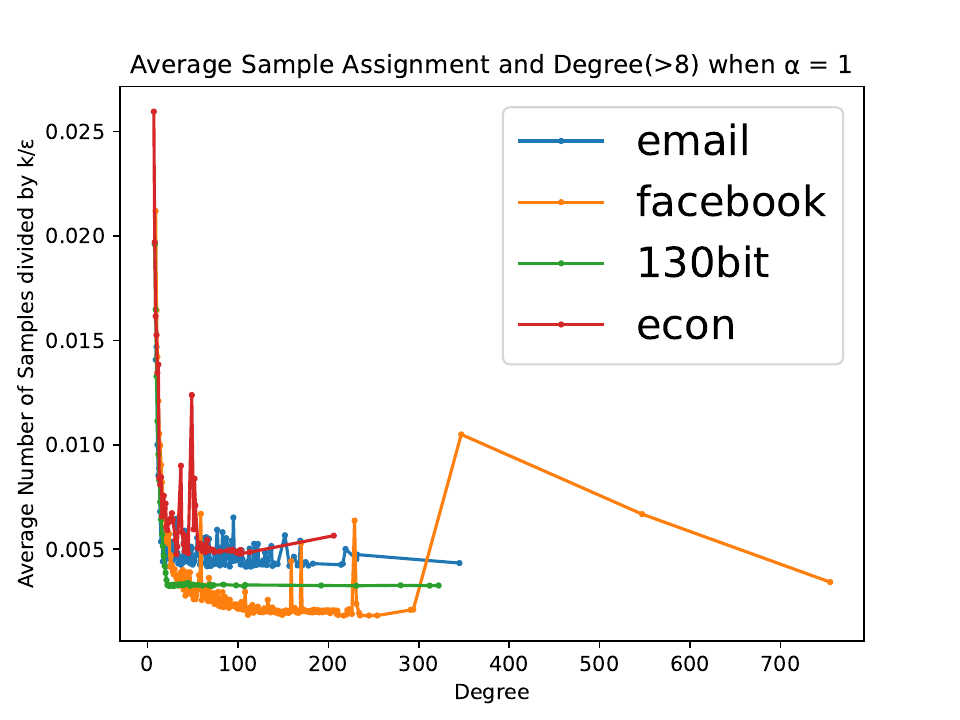}}
    \subfigure[$\alpha = 5$]{ \includegraphics[width = 0.188\linewidth]{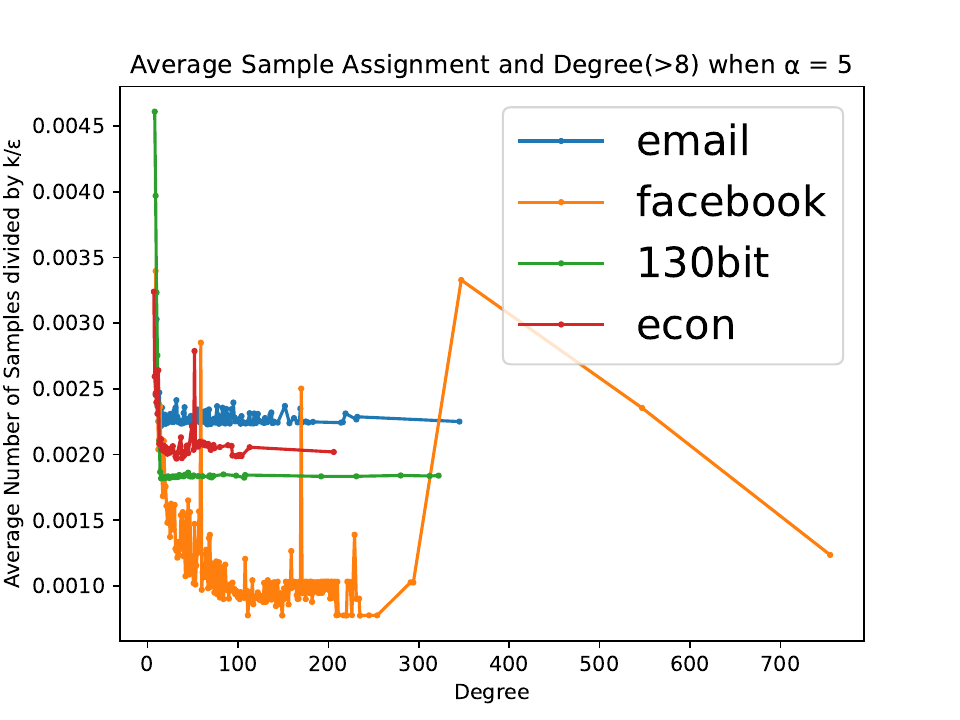}}
    \subfigure[$\alpha = 10$]{ \includegraphics[width = 0.188\linewidth]{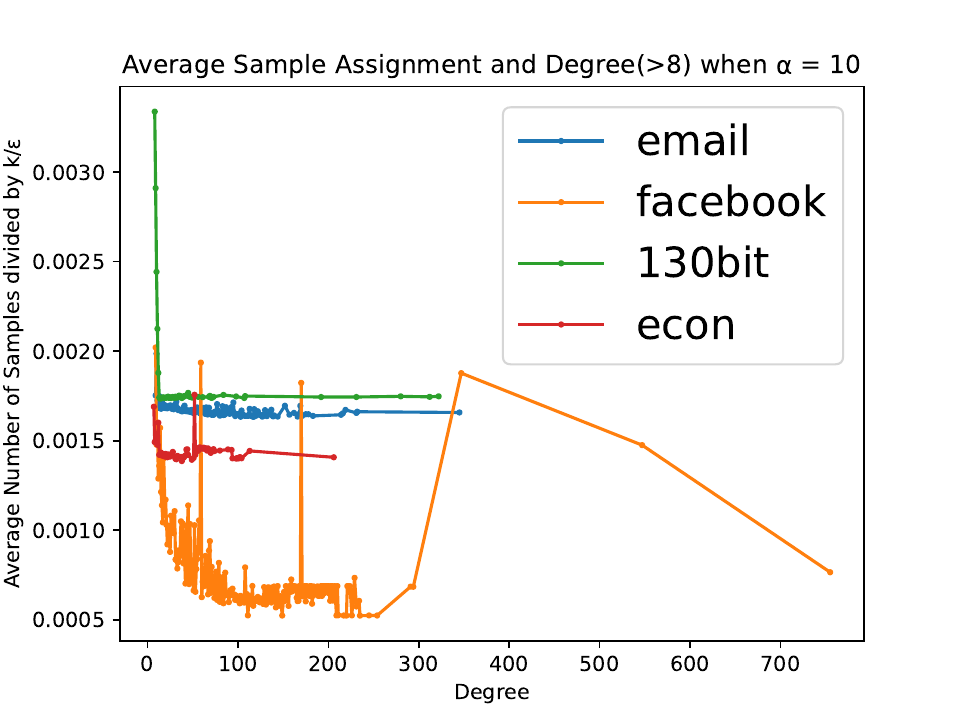}}
    \caption{Relationship between average samples and degree of RN for different $\alpha$ (when $d\ge 8$).}
    \label{ALL_RN8_S}
\end{figure}

\begin{figure}[htbp]
    \centering
    \subfigure[$\alpha = 0.01$]{ \includegraphics[width = 0.188\linewidth]{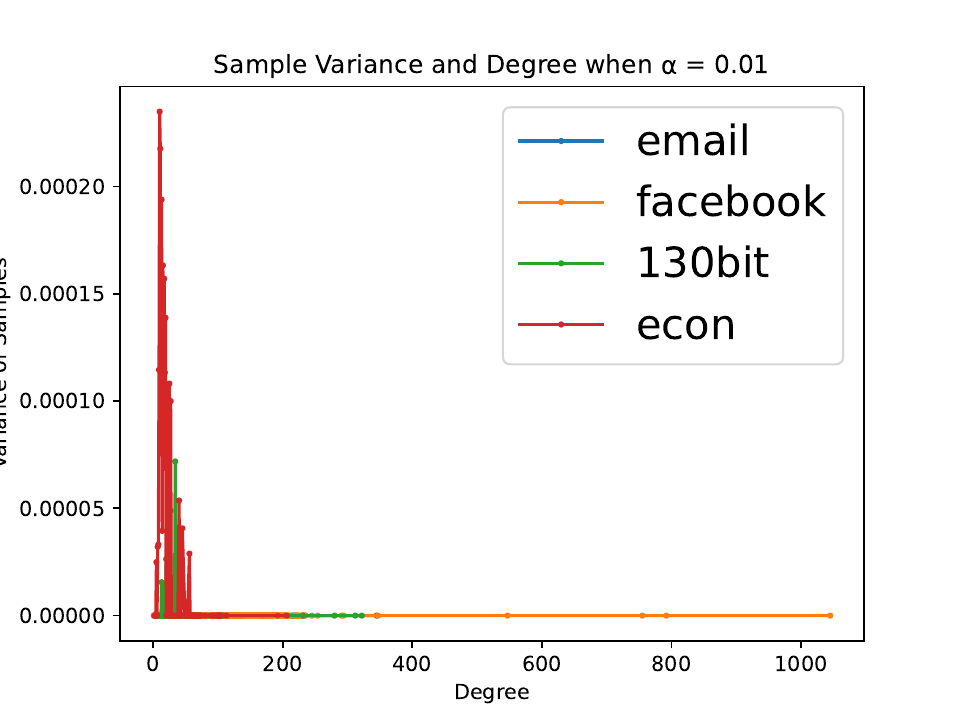}}
    \subfigure[$\alpha = 0.1$]{  \includegraphics[width = 0.188\linewidth]{figures/Degree_and_TSC/Real/var_0.1.pdf}}
    \subfigure[$\alpha = 1$]{  \includegraphics[width = 0.188\linewidth]{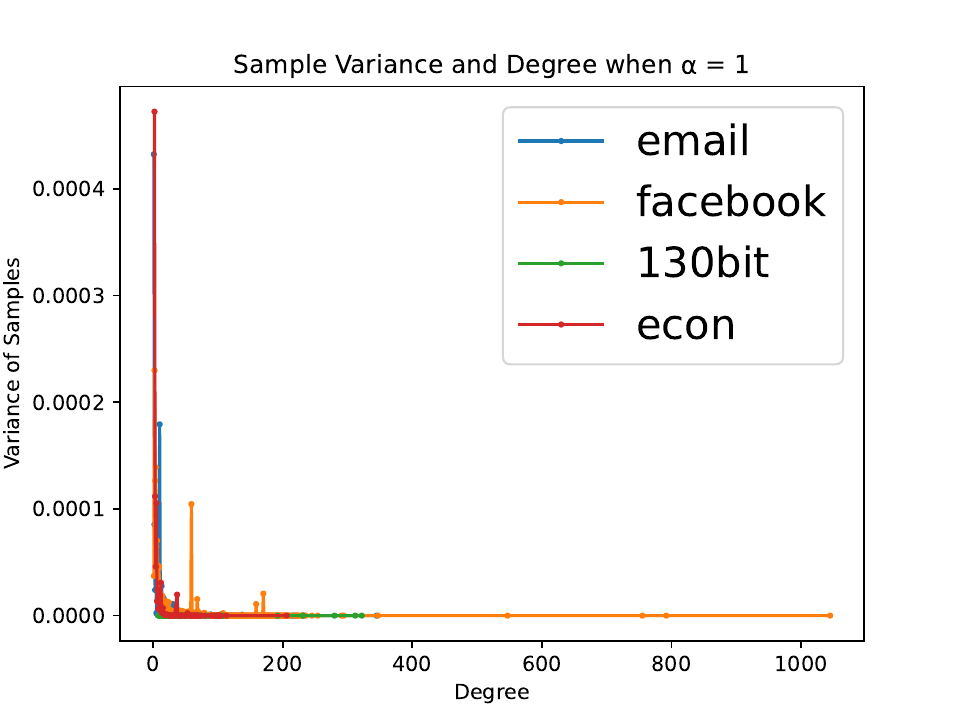}}
    \subfigure[$\alpha = 5$]{ \includegraphics[width = 0.188\linewidth]{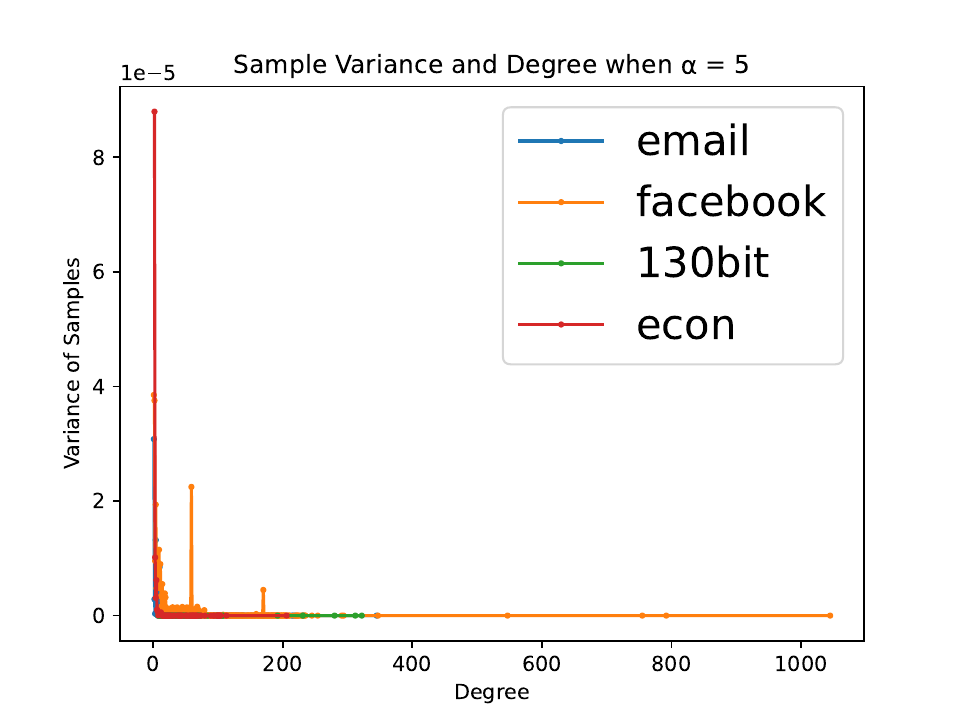}}
    \subfigure[$\alpha = 10$]{ \includegraphics[width = 0.188\linewidth]{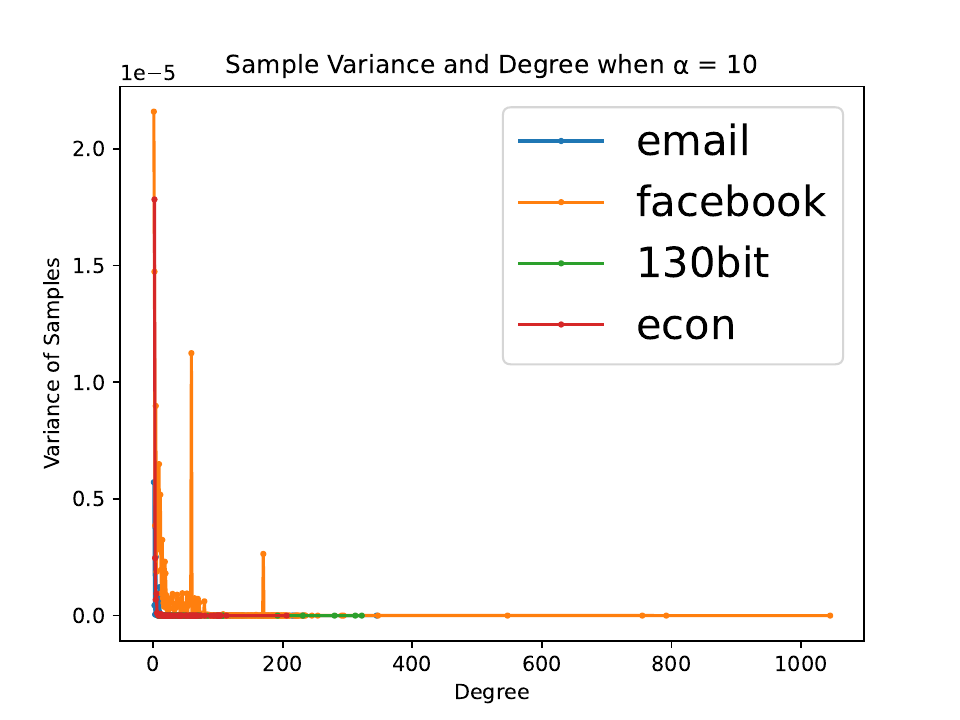}}
    \caption{Relationship between sample variance and degree of RN for different $\alpha$}
    \label{ALL_RN_V}
\end{figure}

Based on the above results, low-degree agents tend to have more samples, and agents with the same degree tend to have the same number of samples. When $\alpha$ is small, the sample assignment seems to have relatively good monotonicity with $d_i$ but as $\alpha$ gets larger, the trend for high degrees becomes more unpredictable. 

Another observation is that as $n$ increases, the samples assigned to each degree decrease when $\alpha$ gets larger. Our theory can not explain this well now, and it is interesting to investigate it more in future work.

\subsection{Tightness of bounds} \label{subsec:tightness bounds}
In this section, we test the tightness of our upper and lower bound in Theorem \ref{thm: general tight bound}. We also study the impact of the magnitude of influences factors on the performance of bounds. The influence factors $v_{ij}$s are uniformly generated at random from $[0,0.01], [0,0.1], [0,1], [0,5], [0,10]$.


Our synthesized graphs are generated for $n = 100, 200, 400, 600, 800, 1000$. For each $n$, we generate $25$  networks with different structures and $v_{ij}$s (The detailed process will be discussed later). For all these networks, we calculate their optimal solution, the upper bound and the lower bound indicated in Theorem \ref{thm: general tight bound}.  We measure the performance of bounds in Theorem \ref{thm: general tight bound} by relative error (i.e $\frac{|U \text{ or } L - \sum_{i \in [n]} m_i^*|}{\sum_{i \in [n]} m_i^*}$ where $U$ is the upper in Theorem \ref{thm: general tight bound}) and $L$ is the max of 1 and the lower bound in Theorem \ref{thm: general tight bound}. In the following plots, the x-axis is the relative errors of upper/lower bounds and the y-axis is the frequency (number of networks in each bin).

For scale-free networks, for each $n$, we randomly generate $15$ networks with attachment parameter $2$ and $10$ networks with attachment parameter $3$. The influence factors of these networks are also randomly generated in a given range. The error plot is given in Figure \ref{ALL_BOUND_SF}.

\begin{figure}[htbp]
    \centering
    \subfigure[$v_{ij} \in \text{$[0, 0.01]$}$]{ \includegraphics[width = 0.188\linewidth]{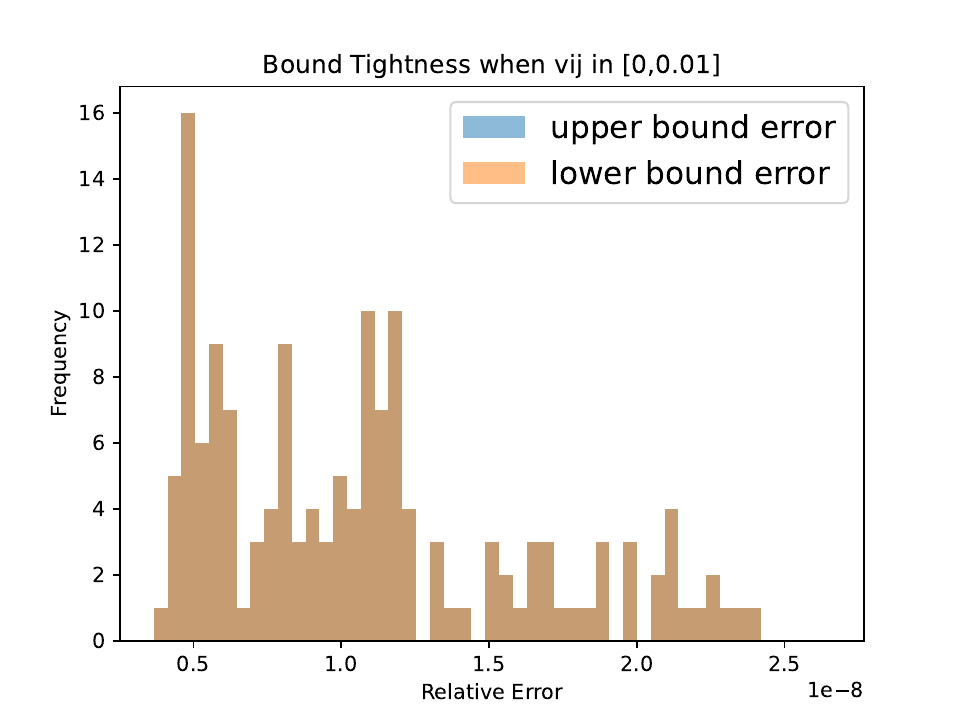}}
    \subfigure[$v_{ij} \in \text{$[0, 0.1]$}$]{  \includegraphics[width = 0.188\linewidth]{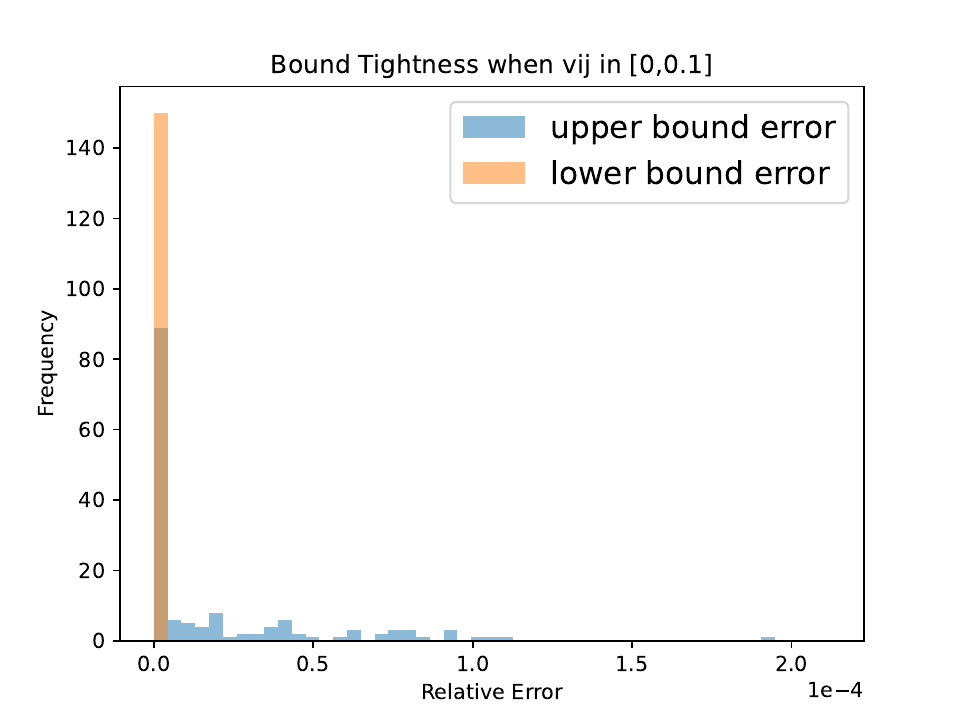}}
    \subfigure[$v_{ij} \in \text{$[0, 1]$}$]{  \includegraphics[width = 0.188\linewidth]{figures/weighted_tightness/PL/IF=1.pdf}}
    \subfigure[$v_{ij} \in \text{$[0, 5]$}$]{ \includegraphics[width = 0.188\linewidth]{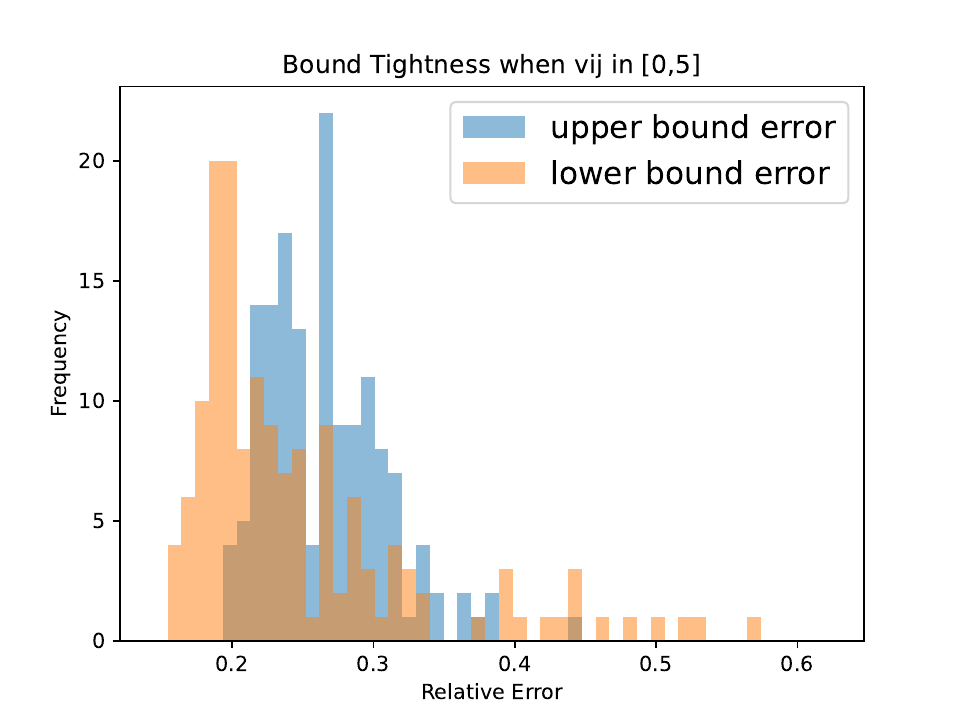}}
    \subfigure[$v_{ij} \in \text{$[0, 10]$}$]{ \includegraphics[width = 0.188\linewidth]{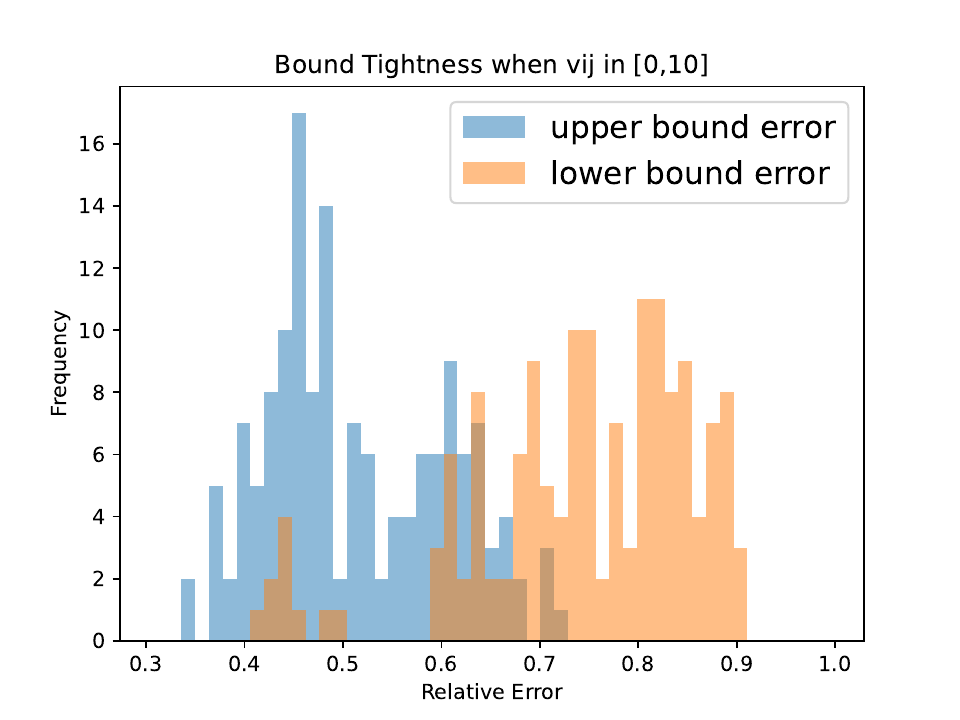}}
    \caption{Bounds of SF for different influence factors $v_{ij}$s}
    \label{ALL_BOUND_SF}
\end{figure}


For random regular graphs, for each $n$, we generate graphs with degree roughly equal to $n^{0.25 + 0.1k}$ where $k = \{0,1,2,3,4\}$. For each degree, we generate $5$ graphs. The influence factors of these networks are also randomly generated in a given range. The error is given in Figure \ref{ALL_BOUND_RR}.

\begin{figure}[htbp]
    \centering
    \subfigure[$v_{ij} \in \text{$[0, 0.01]$}$]{ \includegraphics[width = 0.188\linewidth]{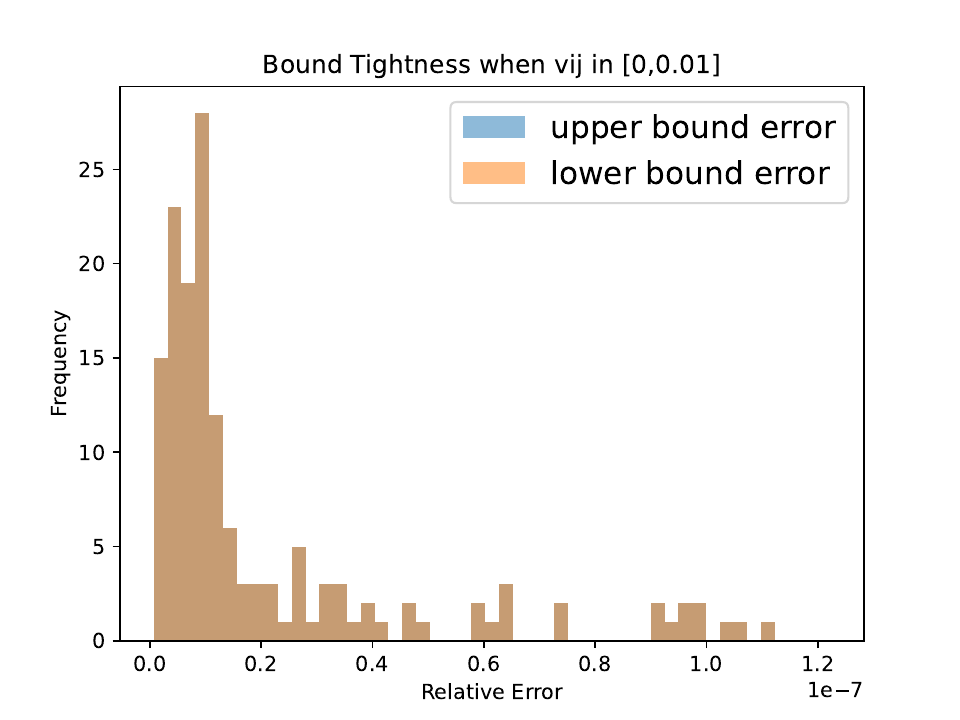}}
    \subfigure[$v_{ij} \in \text{$[0, 0.1]$}$]{  \includegraphics[width = 0.188\linewidth]{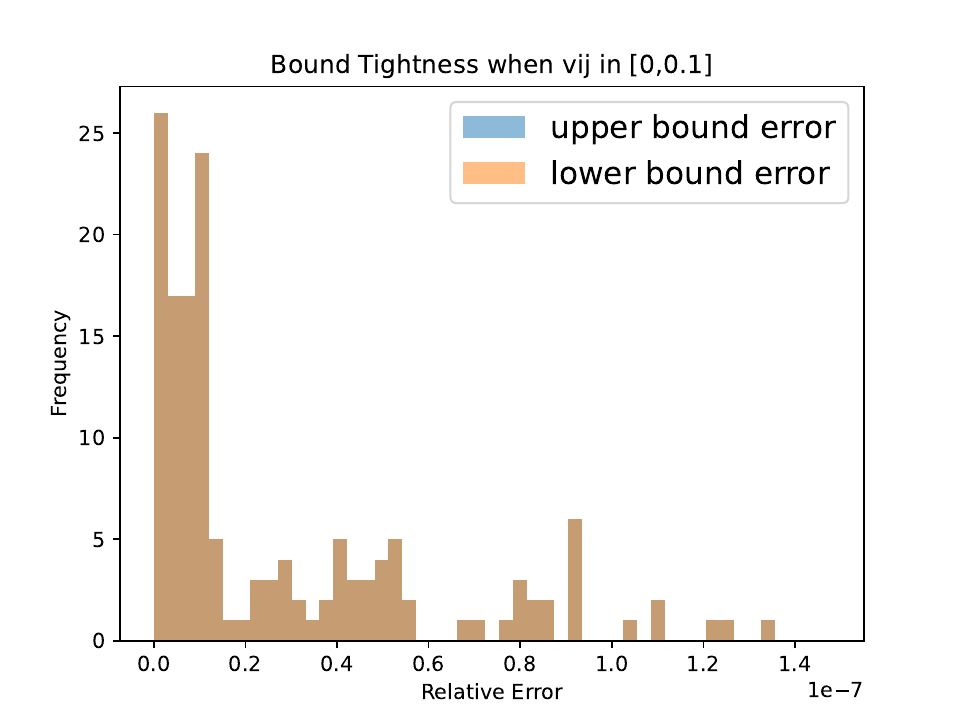}}
    \subfigure[$v_{ij} \in \text{$[0, 1]$}$]{  \includegraphics[width = 0.188\linewidth]{figures/weighted_tightness/RR/IF=1.pdf}}
    \subfigure[$v_{ij} \in \text{$[0, 5]$}$]{ \includegraphics[width = 0.188\linewidth]{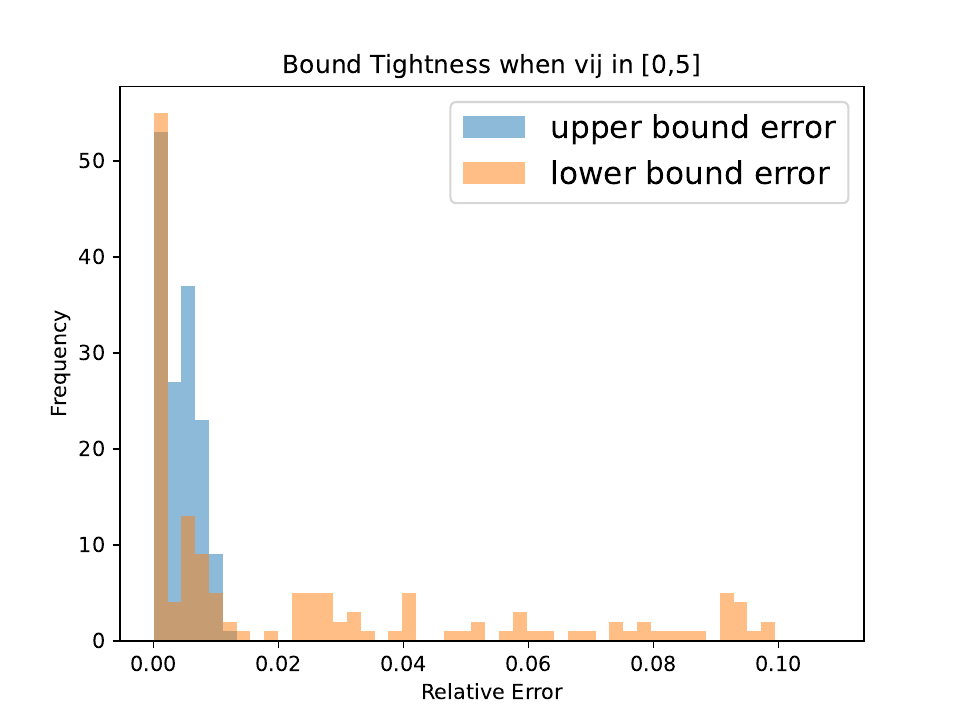}}
    \subfigure[$v_{ij} \in \text{$[0, 10]$}$]{ \includegraphics[width = 0.188\linewidth]{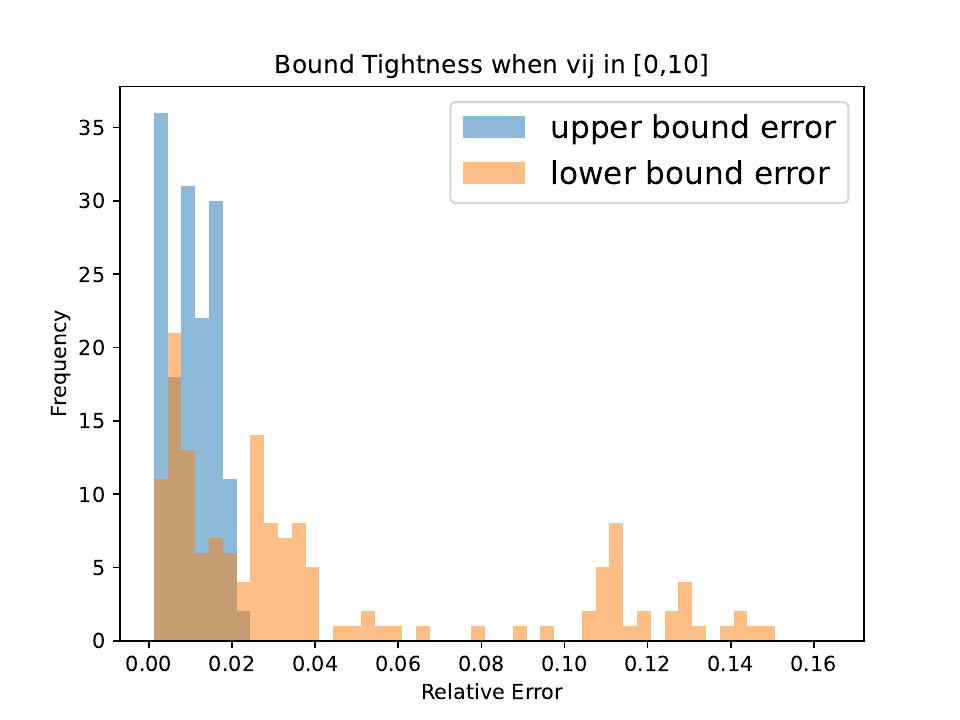}}
    \caption{Bounds of RR for different influence factors $v_{ij}$s}
    \label{ALL_BOUND_RR}
\end{figure}


For Erdos-Renyi graphs, for each $n$, we generate $25$ graphs with edge-connection probability $0.25$.  The influence factors of these networks are also randomly generated in a given range. The error plot is given in Figure \ref{ALL_BOUND_ER}.

\begin{figure}[htbp]
    \centering
    \subfigure[$v_{ij} \in \text{$[0, 0.01]$}$]{ \includegraphics[width = 0.188\linewidth]{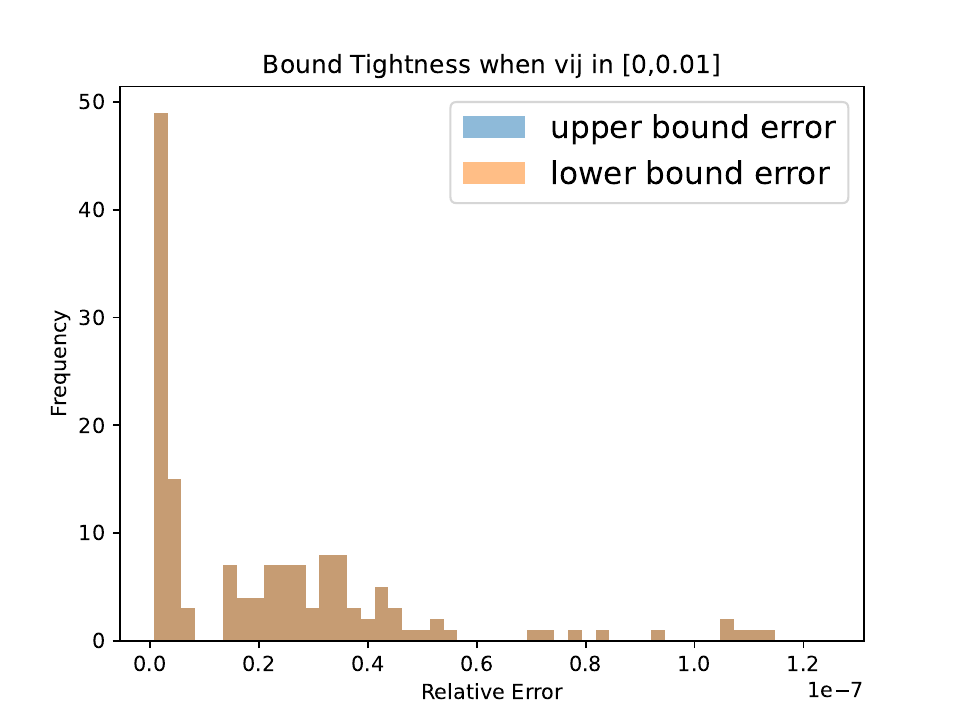}}
    \subfigure[$v_{ij} \in \text{$[0, 0.1]$}$]{  \includegraphics[width = 0.188\linewidth]{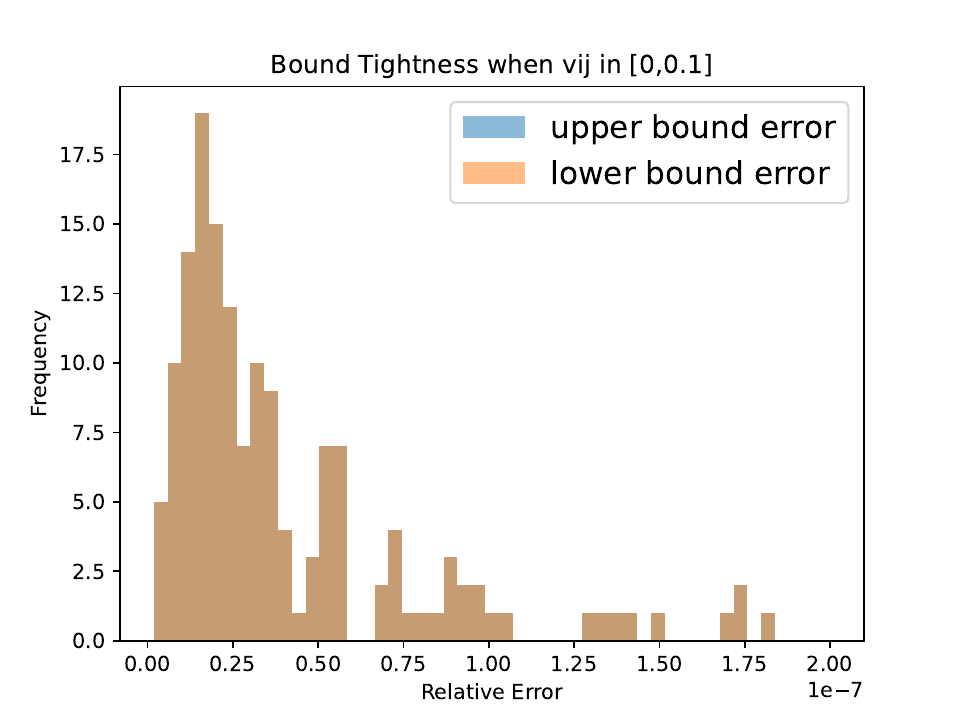}}
    \subfigure[$v_{ij} \in \text{$[0, 1]$}$]{  \includegraphics[width = 0.188\linewidth]{figures/weighted_tightness/ER/IF=1.pdf}}
    \subfigure[$v_{ij} \in \text{$[0, 5]$}$]{ \includegraphics[width = 0.188\linewidth]{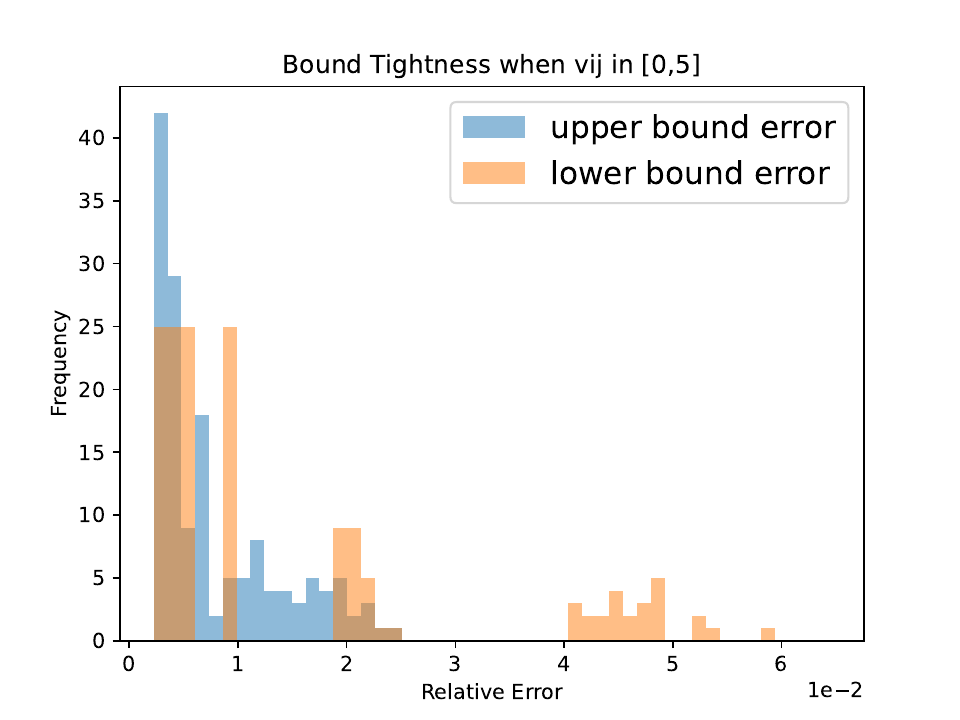}}
    \subfigure[$v_{ij} \in \text{$[0, 10]$}$]{ \includegraphics[width = 0.188\linewidth]{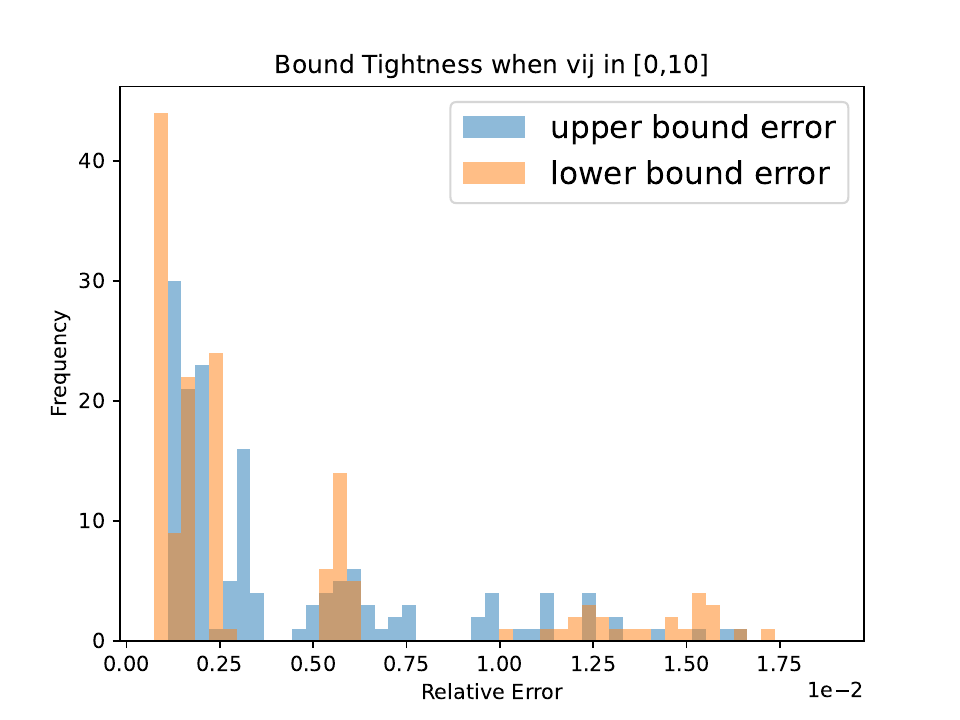}}
    \caption{Bounds of ER for different influence factors $v_{ij}$s}
    \label{ALL_BOUND_ER}
\end{figure}


In Table \ref{ALL_BOUND_REAL}, we show the relative error for bounds of different real-world networks for different $v_{ij}$s. Let $err_u$ and $err_l$ be the maximum relative error of the upper and lower bound in Theorem \ref{thm: general tight bound} respectively among these random experiments. 

\begin{table}[H]
\centering
\begin{tabular}{| p{6em} | p{7em} | p{5em}| p{5em}|}
    \hline
    \textbf{Network} & Upper bound of $v_{ij}$ & \textbf{$err_u$} & \textbf{$err_l$}  \\ \hline
    ego-Facebook & 0.01 & $2.2\mathrm{e}{-6}\%$ & $6.6\mathrm{e}{-7}\%$ \\ \hline
    ego-Facebook & 0.1 & $0.4\%$ & $3\mathrm{e}{-3}\%$ \\ \hline
    ego-Facebook & 1 & $27\%$ & $24\%$ \\ \hline
    ego-Facebook & 5 & $89\%$ & $74\%$ \\ \hline
    ego-Facebook & 10 & $98\%$ & $70\%$ \\ \hline
    
    Econ & 0.01 & $1\mathrm{e}{-6}\%$ & $1\mathrm{e}{-14}\%$ \\ \hline
    Econ & 0.1 & $0.1\%$ & $3\mathrm{e}{-5}\%$ \\ \hline
    Econ & 1 & $6.6\%$ & $1.3\%$ \\ \hline
    Econ & 5 & $64\%$ & $88\%$ \\ \hline
    Econ & 10 & $112\%$ & $75\%$  \\ \hline
    
    Email-Eu & 0.01 & $3\mathrm{e}{-7}\%$  & $1\mathrm{e}{-13}\%$ \\ \hline
    Email-Eu & 0.1 & $0.23\%$ & $4.3\mathrm{e}{-5}\%$  \\ \hline
    Email-Eu & 1 & $30\%$ & $21\%$ \\ \hline
    Email-Eu & 5 & $117\%$ & $76\%$ \\ \hline
    Email-Eu & 10 & $151\%$ & $60\%$  \\ \hline
    
    130bit & 0.01 & $2\mathrm{e}{-7}\%$ & $3\mathrm{e}{-13}\%$   \\ \hline
    130bit & 0.1 & $0.8\%$ & $8\mathrm{e}{-3}\%$ \\ \hline
    130bit & 1 & $41\%$ & $36\%$  \\ \hline
    130bit & 5 & $65\%$ & $11\%$ \\ \hline
    130bit & 10 & $42\%$ & $3.4\%$  \\ \hline
\end{tabular}
\caption{Bounds for real-world networks for different influence factors $v_{ij}$s}
\label{ALL_BOUND_REAL}
\end{table}

    
    
    

From all experiments above, we can observe that the performance of the bounds in Theorem \ref{thm: general tight bound} is excellent when the influence factors are small but get worse when the influence factors become larger.

The reason why these two bounds get good performance for small influence factors has the following intuitive explanation. Consider the original optimization Equation \ref{eq:opt}, when all influence factors are zero, the optimal value is achieved when we set all inequality of the first constraint to equality. When influence factors are small, the optimal solution will also tend to be the value derived in this way, which is exactly the method we get our upper and lower bound.

Through all these experiments, an obvious result is that when $\alpha$ gets large, the total sample complexity decreases. However, our upper bound in Theorem \ref{thm: degree bound}, by assigning each agent $\frac{\alpha + 1}{\alpha d_i + 1}\cd\frac{k}{\epsilon}$, does not capture this trend. It seems that we can remove the $\alpha$ in the numerator to improve this bound and we leave it as future work.


\end{document}